\newcommand{\thetitle}{Maximizing Index Diversity in Committee Elections}
\def\HUBaffil{Humboldt-Universität zu Berlin, 
Department of Computer Science, Algorithm Engineering Group, Germany}
\def\TUCaffil{Technische Universität Clausthal, Institut für Informatik, Germany}

\def\thxPACSs{Deutsche Forschungsgemeinschaft (DFG, German
Research Foundation), project PACS (FL~1247/1-1, 522475669).}

\documentclass[10pt,twocolumn]{article}
\usepackage[a4paper,includefoot,margin=1.5cm]{geometry}
\usepackage{amsmath, amssymb, amsthm}

\usepackage{xcolor}
\usepackage{xcolor-material}
\usepackage[T1]{fontenc}

\usepackage[%
  breaklinks=true,
  pdfdisplaydoctitle,
  menucolor=orange!40!black,filecolor=magenta!40!black,urlcolor=blue!40!black,linkcolor=red!40!black,citecolor=green!40!black,
  colorlinks
  ]{hyperref} %
\DeclareUnicodeCharacter{1EF3}{\`y}

\usepackage{natbib}
\makeatletter
\def\@maketitle{%
\newpage\null\vskip 2em%
\begin{center}%
\let \footnote \thanks
  {\Large\bf \@title \par}\vskip 1.5em{\large\lineskip .5em\begin{tabular}[t]{c}\@author\end{tabular}\par}\vskip 1em{\large \@date}%
\end{center}\par\vskip 1.5em
}
\usepackage{url}
\def\url@leostyle{\@ifundefined{selectfont}{\def\UrlFont{\sf}}{\def\UrlFont{\small\ttfamily}}}
\makeatother
\usepackage{authblk}
\author[1]{Paula Böhm}
\author[1]{Robert Bredereck}
\author[2]{Till~Fluschnik}

\affil[1]{\footnotesize
  \TUCaffil\\
  \texttt{$\{$paula.boehm,robert.bredereck$\}$@tu-clausthal.de}
}
\affil[2]{\footnotesize
  \HUBaffil\\
  \texttt{till.fluschnik@hu-berlin.de}
}

\AtBeginEnvironment{tblr}{\footnotesize}

\usepackage{nameref}
\usepackage{mathtools}
\usepackage{xargs}
\usepackage[capitalise]{cleveref}

\usepackage{stmaryrd}
\usepackage{subcaption}
\usepackage{pgfplots}
\pgfplotsset{compat=1.18}
\usepackage{tabularray}
\usepackage{enumitem}
\UseTblrLibrary{amsmath,booktabs,tikz}
\usepackage{tikz}
\usepackage{twemojis}
\AtBeginDocument{
  \setlength{\twemojiDefaultHeight}{0.9em}
}

\title{\thetitle}
\newcommand{\theabstract}{%
  \looseness -1
  We introduce two models of multiwinner elections with approval preferences and labelled candidates that take the committee's diversity into account.
  One model aims to find a committee with maximal diversity given a scoring function (e.g. of a scoring-based voting rule) and a lower bound for the score to be respected.
  The second model seeks to maximize the diversity given a minimal satisfaction for each agent to be respected.
  To measure the diversity of a committee, we use multiple diversity indices used in ecology and introduce one new index.
  We define (desirable) properties of diversity indices, test the indices considered against these properties, and characterize the new index.
  We analyze the computational complexity of computing a committee for both models
  and scoring functions of well-known voting rules,
  and investigate the influence of weakening the score or satisfaction constraints on the diversity empirically.
}

\newcommandx{\mydefenv}[4][3=A]{%
  \newtheorem{#1}{#2}
  \ifstrequal{#3}{A}{\crefname{#1}{#2}{#2s}}{\crefname{#1}{#2}{#3}}%
  \Crefname{#1}{#4{.}}{#4s{.}}
}
\theoremstyle{remark}

\mydefenv{example}{Example}{Ex}
\mydefenv{remark}{Remark}{Rem}

\theoremstyle{plain}
\mydefenv{corollary}{Corollary}[Corollaries]{Cor}
\mydefenv{observation}{Observation}{Obs}
\mydefenv{fact}{Fact}{Fact}
\mydefenv{theorem}{Theorem}{Theorem}
\theoremstyle{definition}
\mydefenv{definition}{Definition}{Def}
\mydefenv{property}{Property}[Properties]{P}

\newcommand{\appsymb}{{\Large $\star$}}
\newcommand{\appref}[1]{{\hyperref[proof:#1]{\appsymb}}}
\newcommand{\apprefX}[1]{{\hyperref[#1]{\appsymb}}}

\newcommand{\appendixsection}[1]{%
  \gappto{\appendixProofText}{\section{Supplementary Material for Section~\ref{#1}}\label{app:#1}}
}

\newcommand{\toappendix}[1]{%
  \gappto{\appendixProofText}
  {{
    #1
  }}
}
\newcommand{\appendixproof}[2]{%
  \gappto{\appendixProofText}
  {
    \subsection{Proof of \cref{#1}}\label{proof:#1}
    #2
  }
}

\newcommand{\cocl}[1]{\ensuremath{\mathbf{#1}}}
\newcommand{\N}{\mathbb{N}}
\newcommand{\Nzero}{\mathbb{N}_0}
\newcommand{\calE}{\mathcal{E}}
\newcommand{\calR}{\mathcal{R}}
\newcommand{\Elec}{\calE}
\newcommand{\ceq}{\coloneqq}
\newcommand{\Pot}[1]{\ensuremath{\mathcal{P}(#1)}}
\newcommand{\distri}{\operatorname{distr}}
\newcommand{\rdistriv}{\operatorname{rdistr}}
\newcommand{\rdistrividx}{\rho}
\newcommand{\Rule}{\calR}
\newcommand{\Rulebase}{\Rule_{\rm vld}}
\newcommand{\rqed}{\hfill$\triangleleft$}
\newcommand{\eqed}{\hfill$\triangleleft$}%\sqcup
\newcommand{\substruct}[1]{\paragraph*{#1.}}
\NewDocumentCommand{\pref}{om}{\textit{\hyperref[{#2}]{\IfValueTF{#1}{#1}{\nameref*{#2}}}}}
\newcommand{\prob}[1]{\textnormal{\textsc{#1}}}
\newcommand{\NP}{\cocl{NP}}
\newcommand{\classP}{\cocl{P}}

\newcommandx{\tlog}[3][1=,3=]{\fun{\log_{#1}^{#3}}(#2)}
\newcommandx{\set}[2][1=1]{\ensuremath{\left\{#1,\ldots,#2\right\}}}
\newcommand{\Com}{\ensuremath{S}}
\newcommand{\ExCom}{\ensuremath{S'}}

\DeclareMathOperator{\argmax}{\arg\,\max}
\DeclareMathOperator{\argmin}{\arg\,\min}
\NewDocumentCommand\bcks{d[]}{\left[ #1 \right]}
\NewDocumentCommand\brcs{m}{\left\{ #1 \right\}}
\NewDocumentCommand\paren{d()}{\left( #1 \right)}
\NewDocumentCommand\setFP{d()}{\fun{C_{\mathit{label}}}( #1 )}
\NewDocumentCommand\muDO{d()}{\fun{\mu}( #1 )}
\NewDocumentCommand\fun{md()}{#1 \mathopen{}\left( #2 \right)}
\NewDocumentCommand\funbc{mm}{#1 \mathopen{}\left\{ #2 \right\}}
\NewDocumentCommand\vc{md()}{\fun{v^{\paren(#1)}}( #2 )}
\NewDocumentCommand\labof{}{\lambda}
\def\abs#1{\left\lvert#1\right\rvert}
\def\rich{\mathit{Ri}}
\def\shann{\mathit{Sh}}
\def\simps{\mathit{Si}}
\def\dns{\mathit{LC}}

\NewDocumentCommand{\labHealth}{}{\twemoji{face with thermometer}}%{\text{health}}
\NewDocumentCommand{\labEduc}{}{\twemoji{books}}%{\text{education}}
\NewDocumentCommand{\labSport}{}{\twemoji{muscle}}%{\text{sport}}
\let\iff\Leftrightarrow
\let\implies\Rightarrow

\newcommand{\BibTeX}{\rm B\kern-.05em{\sc i\kern-.025em b}\kern-.08em\TeX}
\newcommand{\mypar}[1]{\medskip\textbf{#1}\medspace}

\begin{document}
\maketitle 

\begin{abstract}
\theabstract{}
\end{abstract}
%%%%%%%%%%%%%%%%%%%%%%%%%%%%%%%%%%%%%%%%%%%%%%%%%%%%%%%%%%%%%%%%%%%%%%%%

\section{Introduction}
In the realm of decision making, where alternatives are chosen from a larger pool of options,
considerations of both quality and diversity become crucial.
This challenge presents itself in varied contexts, such as project funding
and academic committee selections.
For instance, consider the scenario faced by a city government engaging in participatory budgeting (PB).
Here, residents propose community projects characterized by target groups (e.g.~children, seniors)
and objectives (e.g.~environmental protection, education).\footnote{Indeed, PB instances from Pabulib used
in our experiments provide such characteristics.
Nevertheless, we focus on the plain multiwinner election scenario without prices and budget in this paper.}
Similarly, consider the formation of a university hiring committee, where the aim is to
select individuals who are not only seen as experts by the colleagues electing them,
but also contribute to a diverse range of perspectives (e.g.~scientists, non-scientific staff)
or disciplines (e.g.~math, physics).
In both cases, the task becomes to select $k$~alternatives with both
a high voter support (which we measure by a total \emph{score} or individual \emph{satisfaction})
and high level of diversity (which we measure by an \emph{index}).

Indeed, both a lower bound on the total score and individual satisfaction are meaningful depending on the application.
Consider, for example, an airline selecting $k$~films for its inflight catalogue: films are labelled by genre,
language, or region, and passenger groups (families, business travelers) supply an
aggregate preference score from historical click or usage data;
because these are past, aggregate preferences rather than current voters’ utilities,
it is natural to enforce a lower bound on the committee score (e.g.~total or average past utility)
and then maximize catalogue diversity by an index to preserve breadth and guard against sampling bias.
Conversely, when planning $k$~outreach programmes at a university, departments are actual voters
directly affected by the outcome; activities are labelled by target audience, topic or format,
and here one can require per‑department satisfaction guarantees
(for instance a minimum number of relevant offerings) while maximizing thematic diversity
to avoid overrepresentation and broaden impact.

Traditional models, as seen in many previous studies, address this by setting hard quotas or constraints,
ensuring a specific representation of specific labels.
While such an approach ensures a certain minimum diversity, it often disregards the fluid and multifaceted nature of diversity itself.
In particular, when many labels are to be respected, it seems impossible to come up with any meaningful constraints
without restricting the possible satisfaction of the voters unpredictably.
In contrast, our study introduces two models that incorporate diversity indices rather than strict constraints, while ensuring that either the total score of the committee or the satisfactions of the agents cannot be worsened arbitrarily. 
This approach, applied to the multiwinner election context, allows for a more nuanced and dynamic assessment of diversity.
By employing established ecological indices alongside a newly proposed one,
we introduce this elsewhere-established approach into the area of computational social choice.

\mypar{Our Contributions.}
We introduce two models for incorporating diversity indices into multiwinner elections with approval preferences and labelled candidates.
Our models aim to find a committee with maximal diversity given (1) a scoring function and a lower bound for the score,
or (2) a minimal satisfaction for each agent.
To measure diversity,
we adapt multiple diversity indices used in ecology and propose a new diversity index, the Lexicographic Counting Index,
which is designed to measure the diversity of a committee understandably.
We provide an analysis of the properties of the diversity indices,
including newly introduced properties,
capturing,
among others,
the ability of voters to easily understand why one solution is more diverse than another.
With these properties,
we characterize our new diversity index and are able to differentiate between any two diversity indices considered.

We also analyze the computational complexity of computing a committee for both models.
We show that, while computing maximally diverse committees (without any satisfaction goal) is polynomial-time
solvable for all indices, it is NP-hard to compute maximally diverse committees that provide some
minimum satisfaction for each voter.
The computational complexity of computing maximally diverse committees which provide some minimum score
depends
on the voting rule whose score we consider: E.g.~this problem is polynomial-solvable for each of the indices considered when using the score of Approval Voting or Satisfaction Approval Voting, but not when using the score of Chamberlin-Courant.

Our experimental results
provide insights into
the influence of weakening the score or satisfaction constraint on the diversity reached.
We find that the diversity of the committees determined by scoring-based approval rules can indeed be improved by using a diversity index:
For example, allowing the score to be reduced by $10\%$ of the optimal score increases the average percentage of the optimal diversity achieved by 12--19, depending on the index and score considered.
However, a reduction of the score by even $50\%$ does not lead to the optimal diversity in some cases.

Proofs marked with~\appsymb{} are deferred to the appendix.

\mypar{Related Work.}
Our index-based approach complements established fairness notions in
computational social choice.
For example, proportional representation axioms~\citep{AzizBCEFW17}
focus on representation guarantees,
whereas the literature on collective decision making with label-based diversity predominantly enforces diversity
via hard constraints (quotas, distributional requirements).
By contrast, this paper treats diversity as an optimizable objective via indices and studies maximizing
such indices under either an aggregate‑score lower bound or per‑agent satisfaction guarantees.
We nonetheless review and compare constraint‑based approaches below.

\looseness -1
\citet{celisijcaiFairness} and \citet{bredereck2018divconstr} introduce very similar models wherein candidates
possess (possibly structured) labels and diversity is reached through hard distributional constraints.
Their approach seeks an optimal committee that maximizes a performance score while meeting specified label occurrence requirements,
such as gender quotas.
In a similar vein, \citet{ianovski2022electingDomConstraints} explores a model that accommodates ``dominance constraints'',
requiring certain labels to occur at least as frequently as others, adding a layer of comparative label evaluation.
The work by \citet{aziz2019softconstr} provides a polynomial-time algorithm computing committees that satisfy two axioms,
one ensuring the given diversity constraints are satisfied as much as possible and the other one integrating candidate excellence.
\citet{2022Switzerland} present an innovative election process where voters initially decide
on distributional constraints for candidates' attributes before electing candidates under those constraints,
demonstrating this method in a Swiss primary election study.
In the domain of approval voting, \citet{straszak1993computer} proposed an integer linear programming (ILP) framework
to address diversity constraints across categorized candidate labels, offering computational tools and real-world data applications.
In a working paper, \citet{GassiDissMoyouwou2023} approach the integration of diversity
constraints within the framework of multiwinner elections, presenting a model where the
committee selection prioritizes both high scores and diversity metrics.
Their work extends
well-known committee scoring rules by tailoring them
to meet specified diversity requirements and introduces new axioms for
diverse committee selection under constraints.

Exploring applications beyond elections, \citet{gawron2022movies} utilize multiwinner voting to refine search systems
such as movie recommendations, balancing similarity with diversity
without relying on explicit diversity indices.
The model
from \citet{ijcai2022dire} includes
attributes for candidates and voters,
applying hard distributional constraints and ensuring population-based representation within elected committees,
enriching the voting model with demographic considerations.
Lastly, \citet{izsak2018synergies} present a framework where alternatives are classified, and inter- and intraclass relations
are modeled through synergy functions, aiming to maximize both score-based and relational metrics—a concept parallel to 
our work,
where diversity indices could be interpreted as synergy measures under score and satisfaction
constraints.
While synergy functions capture relational benefits between alternatives, our approach explicitly adapts ecological diversity indices.

Finding diverse solutions is also important in other contexts of collective decision making.
\citet{BenabbouCHSZ18diversityhousing} analyze diversity constraints in context of
(utilitarian) public housing allocation.
\citet{AB21} explore diversity constraints for Brazilian college
admissions through affirmative action policies, examining the strategic complexities of
multidimensional privileges and proposing a fair, strategy-proof mechanism to ensure
equitable student selection.
\citet{AZ25} analyze diversity in the context of school admissions by defining a rank-based
diversity concept, where maximal diversity is achieved by prioritizing student
matches to seats that fulfill the institution's most crucial diversity criteria, thereby
optimizing representation of key groups.
\citet{Biro2010collegequotas} analyze the computational complexity of stable-matching-based
college admissions and incorporate lower quotas for individual colleges and common quotas
for groups of colleges allowing to manage collective diversity targets.

\section{The Model}
Let $\N$ and~$\Nzero$ be the natural numbers excluding and including zero,
respectively,
$\bcks[t]$ the set $\brcs{1,\dots,t}$ for any integer $t$,
and $\Pot{X}$ the power set of any set~$X$.
$(x_i)_{i=a}^b$ denotes the vector~$(x_a,x_{a+1},\dots,x_b)$
and, for a given vector $y$, $\dim(y)$ is the number of entries of $y$, which we refer to as $y$'s dimension.

We consider committee elections with approval preferences
and labelled candidates,
i.e.~elections of the form $\Elec=(A,C,U,k,L,\labof)$ consisting of a set $A$ of agents,
a candidate set $C$ (e.g.~projects),
an approval profile $U\colon A \to \Pot{C}$,
and a desired committee size $0<k\leq \abs{C}$.
$L=\brcs{l_1,\dots, l_{m}}$ is a set of $m$ labels 
(e.g.~education or sport as the target of the projects)
and $\labof\colon C \to L$ assigns a label to each candidate.
In addition, 
for $i\in\bcks[m]$ and a committee $\Com\subseteq C$,
let
\begin{gather*}
  \setFP(\Elec,\Com, i) \ceq \brcs{c\in \Com: \fun{\labof}(c) = l_i},\\
  \fun{n_i}(\Elec,\Com) \ceq \abs{\setFP(\Elec,\Com, i)}\text{, and } \fun{p_i}(\Elec,\Com) \ceq \fun{n_i}(\Elec,\Com)/\abs{\Com}
\end{gather*}
be the set, number, and proportion of candidates in~$\Com$
with label~$l_i$, respectively.
Furthermore, let
\begin{align*}
  \fun{\distri}(\Elec,\Com) \ceq \paren(\abs{\brcs{i\in \bcks[m]: \fun{n_i}(\Elec,\Com) = j}})_{j=0}^{\abs{\Com}},
\end{align*}
where the $j$-th entry indicates the number of labels occurring $j-1$ times in $\Com$.

A rule $\Rule\colon \Elec \to \fun{\mathcal{P}}(\brcs{\Com \subseteq C: |\Com| = k})$
maps an election to at least one $k$-sized subset of~$C$.
We denote by $\Rulebase(\Elec) \ceq \brcs{\Com \subseteq C: |\Com| = k}$ the rule that outputs all committees of size $k$ and
by $\Rule^{s}(\Elec) = \argmax_{\Com \in \Rulebase(\Elec)} \fun{s}(\Elec,\Com)$ the rule that outputs all $S\in\Rulebase(\Elec)$ with maximal score, where $s$ is a scoring function mapping an election and a committee to $\N$.
We only consider scoring functions which take only $A,C,U$ and $k$ into account, i.e.~information that is part of a ``classical'' election.
For an $S\in\Rulebase(\Elec)$, we measure the satisfaction of an agent $a$ as $\fun{\mathrm{sat}}(\Elec, S, a) \ceq |S\cap \fun{U}(a)|$, i.e.~as the number of chosen candidates $a$ approves.
To measure the diversity, we look at diversity indices
which assign a real number to an election $\Elec$ and committee $\Com\in \Rulebase(\Elec)$.
In the following, we may omit arguments if they are clear from the context.

\section{The Diversity Indices}
\label{sec:indices}
\appendixsection{sec:indices}
In this section, we discuss the diversity indices we consider.
\begin{example}\label{exp:1}
 As a running example,
 consider an election $\Elec$ with projects as candidates, $m=3$ labels, $L=\brcs{\labHealth, \labEduc, \labSport}$ (\labHealth{} represents the label ``health'', \labEduc{} ``education'', and \labSport{} ``sport'')\footnote{The emoji graphics are taken from \texttt{twemojis} and licensed under \href{https://creativecommons.org/licenses/by/4.0/}{CC-BY 4.0}.
 Copyright 2019 Twitter, Inc and other contributors.}, $k=10$, and the following committees
 (for each candidate, we indicate its label):
 \begin{center}
  \setlength{\twemojiDefaultHeight}{0.84em}
  \begin{tblr}{colspec={ll},colsep=3pt}
    $\ExCom \in \Rulebase(\Elec)$ & $ \fun{\distri}(\Elec,\ExCom)$ \\\midrule
    $\ExCom_1 = \{\labHealth,\labHealth,\labHealth,\labEduc,\labEduc,\labEduc,\labSport,\labSport,\labSport,\labSport\}$ & $\paren(0,0,0,2,1,0,0,0,0,0,0) $\\
    $\ExCom_2 = \{\labHealth,\labEduc,\labSport,\labSport,\labSport,\labSport,\labSport,\labSport,\labSport,\labSport\}$ & $\paren(0,2,0,0,0,0,0,0,1,0,0) $\\
    $\ExCom_3 =\{\labEduc,\labEduc,\labEduc,\labEduc,\labEduc,\labSport,\labSport,\labSport,\labSport,\labSport\}$ & $\paren(1,0,0,0,0,2,0,0,0,0,0)$
  \end{tblr}
 \end{center}
 Clearly,
 $\ExCom_1$ appears as most diverse in the sense that 
 it contains all three different labels and the labels appear as evenly balanced as possible.
 Whether $\ExCom_2$ is more diverse than $\ExCom_3$ is in the eye of the beholder or, put differently, depends e.g.~on whether someone finds it more important that as many labels as possible are represented or that the labels represented appear as evenly as possible:
 $\ExCom_2$ contains three labels, but one occurs much more frequently than the others, while 
 $\ExCom_3$ contains only two labels, but these occur equally often.
 \eqed
\end{example}

\paragraph{Indices Used in Ecology.}
In the field of ecology, various indices have been defined to 
measure the diversity of a community of species%
---see e.g.~\citep{pielou1975ecological,leinster2021entropy} for an overview of diversity indices.
We directly adapt the following diversity indices often used in ecology
so that they receive an election $\Elec=(A,C,U,k,L,\labof)$
as well as a committee $S\in \Rulebase(\Elec)$ as inputs.
For each of the following indices,
a higher value indicates a higher diversity.

Richness \citep{whittaker1972evolution} is a simple diversity index
that takes
only the number of labels present
into account:
\begin{align*}
  \rich(\Elec,\Com) ={}& \abs{\brcs{i\in \bcks[m]: \fun{n_i}(\Elec,S) > 0}}\\
  ={}& \sum_{\ell=1}^k \distri(\Elec,\Com)_{\ell+1} = m - \distri(\Elec,\Com)_1.
\end{align*}

The Simpson index \citep{Simpson49}\footnote{
In the literature,
the Simpson index is usually stated unnegated.
We add the negation in order to maximize each index.
}
considers the probability that two candidates chosen independently and at random from the committee
have the same label, i.e.
\begin{align*}
  \simps(\Elec,\Com) &= -\sum_{i\in \bcks[m]} p_i(\Elec,\Com)^2\\
  &= -\sum_{\ell=1}^k \distri(\Elec,\Com)_{\ell+1} \cdot \paren(\frac{\ell}{k})^2.
\end{align*}

Another popular index \citep{leinster2012measuring}, derived from information theory,
is Shannon's entropy \citep{Shannon48}.
It represents the uncertainty in predicting the label of a randomly chosen candidate from the committee:
\begin{align*}
  \shann(\Elec,\Com) ={}& -\sum_{i \in \bcks[m]: \fun{p_i}(\Elec,S) > 0} p_i(\Elec,\Com)\cdot \tlog{p_i(\Elec,\Com)}\\
  ={}& -\sum_{\ell=1}^k \distri(\Elec,\Com)_{\ell+1} \cdot \frac{\ell}{k} \cdot \tlog{\frac{\ell}{k}}.
\end{align*}

\begin{remark}
  The indices rank the committees from \cref{exp:1} as follows:
  $\rich(\Elec,\ExCom_1) = 3 = \rich(\Elec,\ExCom_2) > \rich(\Elec,\ExCom_3)=2$, $\shann(\Elec,\ExCom_1) \approx 1.09 > \shann(\Elec,\ExCom_3)= 0.69 > \shann(\Elec,\ExCom_2) \approx 0.64$, and $\simps(\Elec,\ExCom_1) = -0.34 > \simps(\Elec,\ExCom_3)=-0.5 > \simps(\Elec,\ExCom_2) = -0.66$.
  Therefore, $\rich$ classifies $\ExCom_3$ as least diverse, while $\shann$ and $\simps$ both classify $\ExCom_2$ as the least diverse.
  \rqed
\end{remark}

\paragraph{New Index.}
We introduce a new diversity index, the \emph{Lexicographic Counting Index} ($\dns$).
While the idea of lexicographic ordering has been applied in various settings, it has not been used for defining
a diversity index before, to the best of our knowledge.
The idea behind $\dns$---which elevates the natural, but simple, diversity index $\rich$---is the following:
The primary goal is to maximize the number of labels occurring at least once (like $\rich$ does), the secondary goal is to maximize the number of labels occurring at least twice, and so on:
\begin{align*}
  \dns(\Elec,\Com) = {}& \sum_{i=1}^{k} \paren(\funbc{\min}{m, k} + 1)^{k+1-i} \cdot \abs{\sigma_i(\Elec,\Com)},\\
  \text{ with } \sigma_i(\Elec,\Com) = {}& \brcs{\ell\in\bcks[m]: n_\ell(\Elec,\Com) \geq i}.
\end{align*}
Note that the base is $\funbc{\min}{m, k} + 1$, as a committee consists of $k$ candidates
and each candidate introduces at most one new label, i.e.~$\sigma_i(\Elec,\Com) \leq \funbc{\min}{m, k}$ for all $i\in\bcks[k]$.
In Appendix~\ref{app:props}, we evaluate $\dns$ with respect to properties adapted from the literature:
These results provide some arguments why the new index can be called a diversity index.
\begin{remark}
  $\dns$ classifies $\ExCom_1$ as the most
  and $\ExCom_3$ as the least diverse committee: $\dns(\Elec,\ExCom_1)=4^{10}\cdot 3+4^{9}\cdot 3+4^{8}\cdot 3+4^7 = 4 145 152 > \dns(\Elec,\ExCom_2)= 4^{10}\cdot 3+ 4^9 + 4^8 + 4^7 + 4^6 + 4^5+ 4^4+ 4^3 = 3 495 232 > \dns(\Elec,\ExCom_3)= 4^{10}\cdot 2+ 4^9 \cdot 2 + 4^8 \cdot 2+ 4^7 \cdot 2+ 4^6 \cdot 2 = 2793472$.
  \rqed
\end{remark}

\toappendix{
\subsection{\boldmath Testing $\dns$ against Adapted Properties}\label{app:props}
In this section, we adapt properties defined in the literature as desirable for diversity indices 
and test our new diversity index $\dns$ against them.
Not all of these properties are necessary in our context, but we look at them to provide some reasons why we call the new index \emph{diversity} index.
We consider only the new diversity index here, but we want to stress that not each of the indices that are used in ecology and that we consider satisfies all following adapted properties.

\substruct{Set Monotonicity}{
This property defined by \citet{IZSAK2000151} states that the diversity should increase if a new species is added to a set of species.
We adapt this property by replacing the species with the labels and demanding the following:

A diversity index $D$ satisfies \textit{Set Monotonicity} if, for all elections $\Elec_1=(A,C,U,k,L,\labof)$ and $\Elec_2=(A,C,U,k+1,L,\labof)$ (i.e.~only the committee size differs) with $m>k$ and for all committees $S_1\in \fun{\Rulebase}(\Elec_1)$ and $S_2\in \fun{\Rulebase}(\Elec_2)$ with $S_2 = S_1\cup\brcs{c'}$ and $\fun{n_i}(\Elec_1,S_1)=0$ with $l_i=\fun{\labof}(c')$,
it holds that $\fun{D}(S_1) < \fun{D}(S_2)$.
}

$\dns{}$ satisfies this property, as
\begin{align*}
  {}&\fun{\dns}(\Elec_2,S_2) - \fun{\dns}(\Elec_1,S_1)\\
  ={}& \paren(\sum_{i=1}^{k} \paren(\paren(k + 2)^{k+2-i} - \paren(k+1)^{k+1-i}) \cdot \abs{\sigma_i})\\
  {}&+ \paren(k + 2)^{k+1}
  > 0.
\end{align*}
This property is not important in our context, as we search for a committee of a fixed size and, therefore, only need to compare committees with this size based on the same candidate pool.

\substruct{More Species do not Harm}{
This property mentioned by \citet{IZSAK2000151}, \citet{baczkowski1997properties}, and by \citet{izsak1996sensitivity} states that the diversity should not decrease if a new species is added to a set of species with equal frequencies in such a way that the frequencies of all species are equal.
We adapt this property by replacing the species with the labels and demanding the following:

A diversity index $D$ satisfies this property if for all elections $\Elec_1=(A,C,U,k,L,\labof)$ with $m$ labels for which $\exists i\in\N:k=m\cdot i$ and for all $\Elec_2=(A,C\cup C',U,k+i,L',\labof')$ with $m+1$ labels, $\abs{C'}=i$ and $L',\labof'$ leading to all candidates from $C$ having the same labels as in $\Elec_1$ and all candidates from $C'$ having the same label $l_{m+1}$ not present in $L$,
it holds that $\fun{D}(S_1) \leq \fun{D}(S_2)$ for all $S_1\in \fun{\Rulebase}(\Elec_1), S_2\in \fun{\Rulebase}(\Elec_2)$ with $\forall l\in\bcks[m]: \fun{n_l}(\Elec_1, S_1) = \fun{n_l}(\Elec_2, S_2) = i$ and $\fun{n_{m+1}}(\Elec_2, S_2)= i$.
}

$\dns{}$ satisfies this property, as
\begin{align*}
  \fun{\dns}(\Elec_1,S_1)
  ={}& \sum_{j=1}^i m \cdot (m+1)^{k+1-j} \\
  <{}& \sum_{j=1}^i \paren(m+1) \cdot (m+2)^{k+i+1-j}\\
  ={}& \fun{\dns}(\Elec_2,S_2).
\end{align*}
This property is not important in our context, either, for the same reasons as the previous property.

\substruct{Equal Frequencies are Optimal}\label{substruct:earo}
A property mentioned by \citet{baczkowski1997properties} and \citet{izsak1996sensitivity} demands of a diversity index (for a given number of species) that it is maximal if the frequencies of the species are equal.
We adapt this property by demanding the following:

A diversity index $D$ satisfies this property if, for all elections $\Elec=(A,C,U,k,L,\labof)$ with $m$ labels for which $\exists i\in\mathbb{N}:k=m\cdot i$ or for which $m \geq k$ (let $i=1$ in the latter case),
it holds that $\fun{D}(S_1)$ is optimal if $S_1\in \fun{\Rulebase}(\Elec)$ and $\exists L'\subseteq L$: $|L'|=\funbc{\min}{m, k}$ and $\forall l_j\in L': \fun{n_j}(\Elec, S_1)= i$.

$\dns{}$ satisfies this property:
Let $S_1\in \fun{\Rulebase}(\Elec)$ be a committee satisfying this condition
and $S_2\in \fun{\Rulebase}(\Elec)$ a committee violating this property, i.e.~$\exists l\in\bcks[m]: \fun{n_l}(\Elec, S_2)>i$.
$S_2$ can be transformed into $S_1$ iteratively by replacing a candidate with a label that occurs more than $i$ times with a candidate with a label that occurs less than $i$ times, which leads to a strictly higher diversity.
More formally:
\begin{enumerate}
  \item Let $l_1 \in\bcks[m]$ be a label with $i_1\ceq \fun{n_{l_1}}(\Elec, S_2) < i$ and $l_2 \in\bcks[m]$ be a label with $i_2\ceq \fun{n_{l_2}}(\Elec, S_2) > i$.
  \item Let $S_3\in \fun{\Rulebase}(\Elec)$ be a committee with $\fun{n_{l_1}}(\Elec, S_3) = i_1+1$, $\fun{n_{l_2}}(\Elec, S_3) =i_2-1$, and $\fun{n_{f}}(\Elec, S_3) = \fun{n_{f}}(\Elec, S_2)$ for $f\in \bcks[m] \setminus \brcs{l_1, l_2}$.
  \item Set $S_2 = S_3$.
  \item If there is a $L'\subseteq L$ with $|L'|=m$ so that $\forall l_j\in L': \fun{n_j}(\Elec, S_2)= i$, stop. Otherwise, continue with the first step.
\end{enumerate}
In the second step, the diversity of $S_3$ is strictly higher than that of $S_2$ according to $\dns$, with $\eta = \funbc{\min}{m, k}+1$:
\begin{align*}
  \fun{\dns}(S_3) - \fun{\dns}(\Elec,S_2)
  ={}& \eta^{k - i_1} - \eta^{k+1-i_2}\\
  ={}& \eta^{k - i_1} \paren(1-\eta^{1-\paren(i_2-i_1)})\\
  >{}&  0 \text{ because } i_2-i_1 > 1.
\end{align*}
Clearly, this property is desirable in our context:
If it is possible that each label occurs equally often, this should lead to the highest diversity.

\substruct{Symmetry}
This obviously desirable property mentioned in \citep{leinster2012measuring,izsak1996sensitivity} demands from a diversity index that its value remains the same regardless of the order of the species.
When adapting this property by replacing the species with the labels, $\dns$ fulfills this property trivially.

\substruct{Effective Number}
A property mentioned in \citep{leinster2012measuring,routledgeDivIdxAdmissable} demands from a diversity index that its value for a candidate set with $s$ species equals $s$ if each species has a frequency of $s^{-1}$.
We adapt his property by demanding the following of a diversity index $D$:
For all elections $\Elec=(A,C,U,k,L,\labof)$ with $m$ labels for which $\exists i\in\mathbb{N}:k=m\cdot i$ and $S\in \fun{\Rulebase}(\Elec)$ with $\forall l\in\bcks[m]: \fun{n_l}(\Elec, S)= i$, it holds that $\fun{D}(S) = m$.

This property is not fulfilled by $\dns$, 
as $\fun{\dns}(\Elec,S) = \sum_{j=1}^i m(m+1)^{k +1-j}$.

Because $\dns$ does not fulfill this property, it also does not fulfill a different property defined in \citep{routledgeDivIdxAdmissable}, which demands that the diversity is smaller than the number of species if the species do not have the same frequency.
Analogously, the diversity index does not satisfy a property defined in \citep{leinster2012measuring} which requires an index's value range to be $\brcs{1, \dots, s}$, where $s$ is the number of species (which is $m$ when replacing species with labels). 

Are these properties important in our context?
It seems that this depends on the situation.
It is useful, for example, if it is desirable to derive information about the present species directly from the value of the diversity index (i.e.~the effective number of species; for more information see \citep{jost2006entropy}, for example).
If, on the other hand, the $\distri$ vector is given, the number of represented species and which of two given sets is more diverse according to $\dns$ is \emph{easy} to see (see Section~\ref{sec:props}).
In such a situation, one could do without this property.
In addition, note that the other diversity indices we consider do not satisfy this property, either, $\rich$ being the only exception.

\substruct{Absent Species}
This property mentioned in \citep{leinster2012measuring} requires an index's value to remain the same if a new species is added that occurs zero times.
We adapt this property by replacing the species with the labels and demanding the following from a diversity index $D$:
For all elections $\Elec_1=(A,C,U,k,L,\labof)$ with $m$ labels and for all $\Elec_2=(A,C\cup\brcs{c'},U,k,L',\labof')$ with $c'\notin C$, $m+1$ labels and $L',\labof'$ leading to all candidates from $C$ having the same labels as in $\Elec_1$ and $c'$ having label $l_{m+1}$ not present in $L$,
it holds that $\fun{D}(S_1) = \fun{D}(S_2)$ with $S_1\in \fun{\Rulebase}(\Elec_1)$,$S_2\in \fun{\Rulebase}(\Elec_2)$, and $\forall l \in \bcks[m]: \fun{n_l}(\Elec_1, S_1)= \fun{n_l}(\Elec_2, S_2)$ and $\fun{n_{m+1}}(\Elec_2, S_2) = 0$.

$\dns$ does not fulfill this property:
Consider, for example, as $\Elec_1$ an election with $m=2$ and each label occurring twice (i.e.~$k=4$), then it holds that $\fun{\dns}(\Elec_2,S_2) - \fun{\dns}(\Elec_1,S_1)=424 > 0$.
This property is not important in our context for the same reasons as for \textit{Set Monotonicity}.

}

\section{Which diversity indices to use?}
\label{sec:props}
\appendixsection{sec:props}
Next, we want to distinguish formally between the aforementioned indices by defining properties and testing the indices against them.
Note that not all properties that we define should necessarily be satisfied by a diversity index:
The properties rather draw attention to differences between the indices, which should be taken into account when picking a diversity index to be used.
This is of interest when electing committees consisting of at least six candidates, because $\shann$, $\simps$, and $\dns$ behave the same for small committees when deciding which committee is more diverse:
\begin{observation}[\appref{obs:small_cands}]
  \label{obs:small_cands}
  For all elections $\Elec$ with $k\leq 5$, it holds that $\forall r,r'\in \brcs{\shann, \simps, \dns}, \diamond \in\brcs{<,>,=}, S_1,S_2\in\Rulebase(\Elec): \fun{r}(\Elec, S_1) \diamond\fun{r}(\Elec, S_2)  \iff \fun{r'}(\Elec, S_1) \diamond \fun{r'}(\Elec, S_2)$.
  For all elections $\Elec$ with $k\leq 7$, it holds that $\forall S_1,S_2\in\Rulebase(\Elec), \diamond \in\brcs{<,>,=} : \fun{\shann}(\Elec, S_1) \diamond\fun{\shann}(\Elec, S_2)  \iff \fun{\dns}(\Elec, S_1) \diamond \fun{\dns}(\Elec, S_2)$.
\end{observation}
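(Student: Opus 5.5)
All three indices depend on a committee $\Com$ only through the multiset of positive label counts $(n_i(\Elec,\Com))_{n_i>0}$, which is an integer partition of $k$ with at most $\min\{m,k\}$ parts. So I would reduce the observation to a finite statement about partitions. For each $k$, I would list the partitions $\pi$ of $k$ and show that, restricted to those with at most $m$ parts, the orders induced by $\shann$, $\simps$ and $\dns$ coincide. Equivalently, for every pair $\pi,\pi'$, the sign of $r(\pi)-r(\pi')$ is the same for all $r$ in the relevant set, including the case of equality.

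Restricting to partitions with at most $m$ parts is harmless for $\shann$ and $\simps$, whose formulas do not involve $m$. For $\dns$ the base $\eta=\min\{m,k\}+1$ does depend on $m$, but $|\sigma_i|\le\min\{m,k\}<\eta$. Hence $\dns$ compares two committees exactly by the lexicographic order of the vectors $(|\sigma_1|,|\sigma_2|,\dots,|\sigma_k|)$, independently of $\eta$. This is a digit-expansion argument: the first differing coordinate dominates all later ones. Moreover $(|\sigma_i|)_i$ is the conjugate partition of $\pi$. So $\dns$ ranks partitions by the lexicographic order of their conjugates, and two committees are $\dns$-equal iff they have the same partition. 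The first step is therefore to write down this total order on partitions of each $k\le 7$.

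Next I would compute $\shann(\pi)=-\sum (\ell/k)\log(\ell/k)$ and $\simps(\pi)=-\sum(\ell/k)^2$ for all partitions of $k\le 7$. There are $1,2,3,5,7,11,15$ of them. I would check that the resulting orders are strict and match the $\dns$ order; for $\simps$ this is needed only up to $k\le5$. Strictness matters because the claim includes ``$=$'': distinct partitions must never tie under $\shann$ or $\simps$.

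For $\simps$, comparing $\sum\ell^2$ is integer arithmetic and easy. For example, at $k=5$ the values for $(3,1,1)$ and $(2,2,1)$ are $11$ and $9$, consistent with the lexicographic order. For $\shann$, comparing $k\log k-\sum \ell\log\ell$ amounts to comparing $\sum\ell\log\ell$, i.e.\ products $\prod \ell^\ell$. Each comparison thus becomes an integer comparison such as $3^3=27$ versus $2^2\cdot2^2\cdot 2^2\cdot 1=64$. This keeps the verification exact.

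The main obstacle is the $k=6,7$ cases for $\shann$, where adjacent partitions in the $\dns$ order have close entropies. Examples are $(3,3)$ versus $(4,1,1)$ at $k=6$, with $\prod\ell^\ell$ equal to $729$ versus $256$, and $(4,1,1,1)$ versus $(3,2,2)$ at $k=7$. I would lay these out in a table sorted by the conjugate-lexicographic order and check monotonicity of $\prod\ell^\ell$ along consecutive rows; transitivity then gives all pairs. It may also help to exhibit counterexamples at the thresholds $k=6$ for $\simps$ and $k=8$ for $\shann$, though that is not needed for the statement.
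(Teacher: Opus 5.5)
Your proposal is correct and is essentially the paper's argument: the paper's proof is likewise an exhaustive comparison of all label-count distributions for $k\le 7$, done by computer with a loop over $m\in[k]$. Your additions make the check hand-verifiable and immune to floating-point ties. These are the digit-expansion argument that the $\dns$ order is the lexicographic order on conjugate partitions, independent of $\eta$ and hence also covering $m>k$, and the exact integer comparisons via $\sum\ell^2$ and $\prod\ell^\ell$. Carrying out your table confirms strict monotonicity along the $\dns$ order for $\shann$ up to $k=7$ and for $\simps$ up to $k=5$.
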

\appendixproof{obs:small_cands}
{
We observed this programmatically: For each committee size $k\in\brcs{1,\dots,7}$, we iterated over each possible number $m$ of labels that can occur in the committee, i.e.~$m\in \bcks[k]$, as each candidate can introduce at most one label.
For each such combination of $k$ and $m$, we looked at all possible $\distri$ vectors and computed and compared the diversity indices of interest, exploiting the fact that the naming and ordering of the candidates and agents as well as their votes do not matter for the diversity indices at hand.
}
However, the indices can behave differently for larger $k$.
One reason for this is that only $\rich$ and $\dns$
consider the number of labels present to be more important
than the evenness of the distribution of the labels present,
while $\simps$ and $\shann$ do not---this can be seen e.g.~in \cref{exp:1}.
Thus, the first question that one has to answer is whether having as many labels in the committees as possible has the highest priority---this could be the case when electing a team working on an interdisciplinary project, where having an expert from a discipline that is not covered otherwise is very valuable.
We express this through the following property:
\begin{property}[Present Label Maximization]\label{propty:pfgpmax}%
  A diversity index $D$ satisfies \emph{Present Label Maximization} if, for all elections $\Elec$ and $S_1,S_2\in \fun{\Rulebase}(\Elec)$ for which $\fun{\distri}(\Elec, S_1)_1 < \fun{\distri}(\Elec,S_2)_1$
  (i.e.~$S_1$ contains more different labels than $S_2$),
  it holds that $\fun{D}(S_1) > \fun{D}(S_2)$.
\end{property}
\begin{observation}[\appref{obs:prop:pfgpmax}]
  \label{obs:prop:pfgpmax}
  $\rich$ and $\dns$ satisfy \pref{propty:pfgpmax}, $\simps$ and $\shann$ do not.
\end{observation}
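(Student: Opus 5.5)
For $\rich$ the claim is immediate from the identity $\rich(\Elec,\Com) = m - \distri(\Elec,\Com)_1$ stated in \cref{sec:indices}. If $\distri(\Elec,S_1)_1 < \distri(\Elec,S_2)_1$, then $\rich(\Elec,S_1) > \rich(\Elec,S_2)$ follows directly.

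For $\dns$ I would compare the two sums term by term, reading them as base-$\eta$ expansions with $\eta \ceq \min\{m,k\}+1$.

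First, I would note that $\abs{\sigma_1(\Elec,\Com)} = m - \distri(\Elec,\Com)_1$, so the hypothesis says $\abs{\sigma_1(S_1)} \geq \abs{\sigma_1(S_2)} + 1$.

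Second, I would use the bound $0 \leq \abs{\sigma_i(\Com)} \leq \min\{m,k\} = \eta - 1$, which holds for all $i$ and was already remarked in the paper.

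Third, I would bound the difference:
\begin{align*}
  \dns(S_1) - \dns(S_2) \geq{}& \eta^{k} - \sum_{i=2}^{k} \eta^{k+1-i}\,\abs{\sigma_i(S_2)}\\
  \geq{}& \eta^k - (\eta-1)\sum_{j=1}^{k-1}\eta^{j}\\
  ={}& \eta^k - (\eta^k - \eta) = \eta > 0.
\end{align*}
The first inequality drops all nonnegative terms $\abs{\sigma_i(S_1)}$ with $i \geq 2$. The second uses the bound from the second step, and the final equality is the geometric sum. This shows that $\dns$ satisfies the property. The step needing care is that the base is large enough, i.e.\ that every coefficient is at most $\eta-1$, so that the leading digit dominates.

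For the negative results I would give an explicit counterexample and reuse \cref{exp:1}. There $\distri(\Elec,\ExCom_2)_1 = 0 < 1 = \distri(\Elec,\ExCom_3)_1$, yet the remark after the definitions shows $\shann(\ExCom_2) \approx 0.64 < 0.69 = \shann(\ExCom_3)$ and $\simps(\ExCom_2) = -0.66 < -0.5 = \simps(\ExCom_3)$.

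To be safe I would verify these values explicitly. For $\simps(\ExCom_2)$: $-(0.01+0.01+0.64) = -0.66$. For $\simps(\ExCom_3)$: $-(0.25+0.25) = -0.5$. For $\shann(\ExCom_2)$: $2\cdot 0.1\ln 10 + 0.8\ln 1.25 \approx 0.46+0.18 = 0.64$. For $\shann(\ExCom_3)$: $\ln 2 \approx 0.69$.

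Hence neither $\shann$ nor $\simps$ satisfies \pref{propty:pfgpmax}. No real obstacle is expected; the only delicate point is the digit-bound argument for $\dns$.
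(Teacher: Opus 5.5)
Your proof is correct and matches the paper's. You use the same counterexample ($\ExCom_2$ versus $\ExCom_3$ from \cref{exp:1}) for $\shann$ and $\simps$, and the same identity $\rich = m - \distri_1$ for $\rich$; for $\dns$ the paper simply cites the $\dns$-explainability of $e_{\dns}$ from the proof of \cref{theorem:explain:bounds}, whose core is your base-$\eta$ geometric-series domination bound, so your direct computation is a self-contained specialization of that argument to the first entry.
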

\appendixproof{obs:prop:pfgpmax}{
  \textbf{Diversity indices $\shann$ and $\simps$:} Consider an election $\Elec$ with $k=10$ and $m=3$ labels, and two committees $S_1, S_2 \in \fun{\Rulebase}(\Elec)$ with
  \begin{align*}
    \fun{p_1}(\Elec, S_1) ={}& \fun{p_2}(\Elec, S_1)=0.1, \fun{p_3}(\Elec, S_1)=0.8\\
    \fun{p_1}(\Elec, S_2) ={}& 0, \fun{p_2}(\Elec, S_2)=\fun{p_3}(\Elec, S_2)=0.5.
  \end{align*}
  It holds that $\fun{\distri}(\Elec, S_1)_1 = 0 < 1 = \fun{\distri}(S_2)_1$,
  but $\fun{\shann}(S_1)\approx 0.64 < \fun{\shann}(S_2)\approx 0.69$, and $\fun{\simps}(S_1) = -0.66 < \fun{\simps}(S_2) = -0.5$.

  \textbf{Diversity index $\rich$:} This follows directly from the definition of $\rich$ as $m - \fun{\distri}(\Elec, S)_{1}$.

  \textbf{Diversity index $\dns$:}
  This follows directly from 
  \begin{gather*}
    e_{\dns}(\rdistriv^{(1)}, \rdistriv^{(2)}, \rdistrividx)\\
    = \begin{cases*} \mathit{less}, \text{ if }\dim(\rdistrividx)\geq 1 \text{ and }\rdistriv^{(1)}_1 > \rdistriv^{(2)}_1\\  \mathit{equal}, \text{ if } \dim(\rdistrividx) = 0 \\
  \mathit{more}, \text{ if } \dim(\rdistrividx)\geq 1 \text{ and }\rdistriv^{(1)}_1 < \rdistriv^{(2)}_1 \end{cases*}
  \end{gather*}
  being a $\dns$-explainable function (see \cref{proof:theorem:explain:bounds}) and $\fun{\rdistriv}(\Elec, S_1, S_2)_1  = \fun{\distri}(\Elec, S_1)_1  <  \fun{\rdistriv}(\Elec, S_2, S_1)_1 = \fun{\distri}(\Elec, S_2)_1$.
}

Another property which is quite natural and which should be satisfied by any diversity index is that 
increasing the occurrences of a label by decreasing those of a more frequent label by at least two 
leads to a higher (\pref{propty:balance}) or at least the same (\pref[Weak Occurrence Balancing]{propty:balance}) diversity:
\begin{property}[Occurrence Balancing]\label{propty:balance}%
  A diversity index $D$ satisfies (\emph{Weak}) \emph{Occurrence Balancing} if for all elections $\Elec$ and $S'\in \fun{\Rulebase}(\Elec)$ for which there are $i,j\in \bcks[m]$ with $n_i(\Elec,S') + 2 \leq n_j(\Elec,S')$,
  it holds that $\forall c_i \in \setFP(\Elec,C, i)\setminus S', c_j \in \setFP(\Elec,S', j): \fun{D}(S') < \fun{D}(S'')$ ($\fun{D}(S') \leq \fun{D}(S'')$) with $S''=S'\setminus \brcs{c_j} \cup \brcs{c_i}$.
\end{property}
The following result not only separates $\rich$ from the other three indices,
but is also the basis for showing in \cref{sec:compl} that
finding a committee with the highest diversity
is in \classP{}.
\begin{observation}[\appref{obs:prop:propBalance}]
  \label{obs:prop:propBalance}%
  All diversity indices considered satisfy \pref[Weak Occurrence Balancing]{propty:balance}.
  $\simps$, $\shann$, and $\dns$ satisfy \pref{propty:balance}.
\end{observation}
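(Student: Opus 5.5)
The plan is to fix an election $\Elec$, a committee $S'$, labels $i,j$ with $a \ceq n_i(\Elec,S')$ and $b \ceq n_j(\Elec,S')$ satisfying $a+2\le b$, and a swap $S''=S'\setminus\{c_j\}\cup\{c_i\}$. The swap changes exactly two counts: $n_i$ becomes $a+1$ and $n_j$ becomes $b-1$, while every other $n_f$ is unchanged and $\abs{S''}=\abs{S'}=k$. Each index is a sum of per-label contributions, so in each case I only need to compare the change in the contributions of labels $i$ and $j$.

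For $\rich$, only labels whose count crosses zero matter. If $a=0$, label $i$ becomes present. Label $j$ stays present, since $b-1\ge a+1\ge 1$. Hence $\rich$ increases by one. If $a\ge 1$, nothing changes. Either way $\rich(S'')\ge\rich(S')$, which gives the weak version. The case $a\ge 1$ also shows that $\rich$ does not satisfy the strict version, though the statement does not ask for this.

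For $\simps$, I will compute
\[
\simps(S'')-\simps(S')=\frac{a^2+b^2-(a+1)^2-(b-1)^2}{k^2}=\frac{2(b-a-1)}{k^2}.
\]
This is strictly positive because $b-a\ge 2$. For $\shann$, write $g(x)=-\frac{x}{k}\log\frac{x}{k}$ with $g(0)=0$. The difference is $g(a+1)-g(a)-\bigl(g(b)-g(b-1)\bigr)$. Since $g$ is strictly concave on $[0,k]$, its increments $g(x+1)-g(x)$ are strictly decreasing in $x$. Because $a<b-1$, we get $g(a+1)-g(a)>g(b)-g(b-1)$. The point $x=0$ is harmless, since $g$ is continuous there and concavity holds on the closed interval.

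For $\dns$, let $\eta=\min\{m,k\}+1$. Label $i$ newly enters $\sigma_{a+1}$, and label $j$ leaves $\sigma_b$. All other $\sigma$ sets are unchanged. So the difference equals $\eta^{k-a}-\eta^{k+1-b}$, which is the same computation as in the appendix argument for Equal Frequencies are Optimal. Since $k+1-b\le k-a-1<k-a$ and $\eta\ge 2$, this is strictly positive. I need $a+1\le k$ and $b\le k$ so the relevant indices lie in $[k]$; this holds because $a+1<b\le k$.

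The strict version for $\simps$, $\shann$, and $\dns$ implies the weak version, so together with the $\rich$ case this gives the full statement. The only step needing any care is the concavity argument for $\shann$, in particular the endpoint $a=0$. I would handle it by the continuous extension of $x\log x$ and the strict concavity of $-x\log x$ on $[0,1]$.
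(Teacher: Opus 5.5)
Your proposal is correct and follows essentially the same route as the paper: the same case split for $\rich$, the same direct computation $2(b-a-1)/k^2$ for $\simps$, and the same difference $\eta^{k-a}-\eta^{k+1-b}$ for $\dns$. The only real difference is in the $\shann$ case, where you argue by strict concavity of $-x\log x$ (decreasing unit increments) for all $a$ including $a=0$, while the paper applies the mean value theorem to $x\log x$ and handles $n_i(\Elec,S')=0$ as a separate case.
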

\appendixproof{obs:prop:propBalance}{
In the following, let $n'_l = \fun{n_l}(\Elec, S')$, $n_l'' = \fun{n_l}(\Elec, S'')$ for $l\in \bcks[m]$, $\distri'=\fun{\distri}(\Elec, S')$, $\distri''=\fun{\distri}(\Elec, S'')$.
For the meaning of $i,j,S'$ and $S''$ see the definition of \pref{propty:balance}.

\textbf{Diversity index $\rich$:} If $n_i' = 0$ and thus $\distri_1' > \distri_1''$, it holds that $\fun{\rich}(\Elec, S') = m -\distri_1' < m -\distri_1'' = \fun{\rich}(\Elec, S'')$.
Otherwise, i.e.~if $\distri'_1 = \distri''_1$, $\fun{\rich}(\Elec, S') = \fun{\rich}(\Elec, S'') =m - \distri'_1$.
Hence, $\fun{\rich}(S') \leq \fun{\rich}(S'')$.

\textbf{Diversity index $\shann$:} Let $M_R'=\brcs{i \in \bcks[m]: \fun{p_i}(\Elec,S') > 0}$ and $M_R''=\brcs{i \in \bcks[m]: \fun{p_i}(\Elec,S'') > 0}$. It holds that
\begin{align*}
  {}& \fun{\shann}(\Elec, S'')-\fun{\shann}(\Elec, S') > 0\\
  \iff {}& \sum_{p \in M_R'} \frac{n'_p}{k}\cdot \tlog{\frac{n'_p}{k}}-\sum_{p \in M_R''} \frac{n''_p}{k}\cdot \tlog{\frac{n''_p}{k}} > 0\\
  \iff{}& \sum_{p \in M_R} n'_p\cdot \tlog{n'_p}-\sum_{p \in M_R} n''_p\cdot \tlog{n''_p} > 0.
\end{align*}
If $n_i'=0$, this is equivalent to $n'_j\cdot \tlog{n'_j} - \paren(n'_j-1)\cdot \tlog{n'_j-1} > 0$, which is true because $\log$ is strictly monotonically increasing.
Otherwise, i.e.~if $n_i'>0$, it is equivalent to 
$n'_j\cdot \tlog{n'_j} - \paren(n'_j-1)\cdot \tlog{n'_j-1}
+n'_i\cdot \tlog{n'_i} - (n'_i+1)\cdot \tlog{n'_i +1} > 0$, which we prove in the following:

Let $\fun{f}(x) = x\fun{\log}(x)$ with $\fun{f'}(x) = 1+\fun{\log}(x)$ which is strictly monotonically increasing.
According to the mean value theorem, $\exists c_1 \in \paren(n'_j-1, n'_j), c_2 \in \paren(n'_i, n'_i+1)$ such that $\fun{f}(n_j')-\fun{f}(n'_j-1) = \fun{f'}(c_1) = 1+\fun{\log}(c_1)$ 
and $\fun{f}(n'_i+1) - \fun{f}(n'_i) = \fun{f'}(c_2) = 1 + \fun{\log}(c_2)$.
Thus, the former inequality is equivalent to $1+\fun{\log}(c_1)-1-\fun{\log}(c_2) = \fun{\log}(c_1)-\fun{\log}(c_2) > 0$, which is true due to $c_1 > c_2$.

\textbf{Diversity index $\simps$:} It holds that $\fun{\simps}(\Elec, S'')-\fun{\simps}(\Elec, S') > 0 \iff \sum_{p\in \bcks[m]} \paren(n''_p)^2 - \paren(n'_p)^2 < 0 \iff \paren(n'_j-1)^2- \paren(n'_j)^2 + \paren(n'_i+1)^2- \paren(n'_i)^2 < 0 \iff 2 +2 \paren(n_i' - n_j') < 0$.
The latter is true, as $n_i' + 1 < n_j'$ and thus $n_i' - n_j' \leq -2$.

\textbf{Diversity index $\dns$:} Let $\delta \ceq n'_j-n'_i$ and $\eta = \funbc{\min}{m, k}+1$. It holds that
  \begin{align*}
    {}&\fun{\dns}(S'') - \fun{\dns}(S') > 0\\
    \iff {}& \eta^{k+1-n'_i-1}-\eta^{k+1-n'_j} = \eta^{k-n'_i} - \eta^{k+1-n'_i-\delta} > 0\\
    \iff {}& 1 - \eta^{1-\delta} > 0
  \end{align*}
  which is true, as $1-\delta < 0$.
}

The properties introduced so far allow us to differentiate
between any pair of diversity indices considered apart from $\simps$ and $\shann$,
which both take the evenness of the distribution into account.
However, the indices disagree on whether balancing two labels is preferable for rare or for dominant labels.
We first define which pairs of labels whose occurrences differ by $d$ are \emph{balancable}: $\fun{\mathrm{balancable}}(\Elec, \Com, d)$ returns, for a given election $\Elec$ and a committee $\Com\in\Rulebase(\Elec)$, all pairs $(i,j)$ so that $n_i(\Elec,\Com) + d = n_j(\Elec,S)$ and $n_i(\Elec,S) + \lfloor \frac{d}{2}\rfloor \leq \abs{\setFP(\Elec,C, i)}$ (i.e.~the first label of the pair occurs $d$ fewer times than the second label, but its frequency could be increased by $\lfloor \frac{d}{2}\rfloor$).
The function $\fun{\mathrm{balance}}(\Elec, \Com, d, \paren(i,j))$ actually balances such labels by taking,
in addition to $\Elec$ and $\Com$,
a label pair $\paren(i,j) \in \fun{\mathrm{balancable}}(\Elec, \Com, d)$
as an argument
and returning a committee $\Com'\in\Rulebase(\Elec)$ so that
$n_i(\Elec,\Com') = n_i(\Elec,\Com) + \lfloor \frac{d}{2}\rfloor, n_j(\Elec,\Com') = n_j(\Elec,\Com)-\lfloor \frac{d}{2}\rfloor$
and $\forall e\in\bcks[m]\setminus\brcs{i,j}: n_e(\Elec,\Com') = n_e(\Elec,\Com)$,
i.e.~only the number of $l_i$ and $l_j$ are balanced.
Based on this, we can define the following property:
\begin{property}[Prioritization of Rare Label Balancing]\label{propty:Prio}
  A diversity index $D$ satisfies \emph{Prioritization of Rare Label Balancing} if, for all elections $\Elec$, $S\in \fun{\Rulebase}(\Elec)$ and $d\geq 2$ for which there are $(i,j),(k,l)\in \fun{\mathrm{balancable}}(\Elec, \Com, d)$ with $n_i(\Elec,\Com) < n_k(\Elec,\Com)$, it holds for $\Com_{\paren(i,j)} \ceq \fun{\mathrm{balance}}(\Elec, \Com, d, \paren(i,j))$ and $\Com_{\paren(k,l)} \ceq \fun{\mathrm{balance}}(\Elec, \Com, d, \paren(k,l))$ regarding the diversity that $\fun{D}(\Elec, \Com_{\paren(i,j)}) > \fun{D}(\Elec,\Com_{\paren(k,l)})$.
\end{property}
\begin{observation}[\appref{obs:prop:Prio}]
  \label{obs:prop:Prio}
  $\shann$ and $\dns$ satisfy \pref{propty:Prio}, $\simps$ and $\rich$ do not.
\end{observation}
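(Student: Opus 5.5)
We would compare the two balanced committees through the additive structure of the indices. Write $a \ceq n_i(\Elec,\Com)$, $b \ceq n_k(\Elec,\Com)$ with $a<b$, and $h\ceq\lfloor d/2\rfloor\geq 1$. Here $k$ denotes the label index from \pref{propty:Prio}, not the committee size. Balancing $(i,j)$ turns the counts $(a,a+d)$ into $(a+h,a+d-h)$. Balancing $(k,l)$ turns $(b,b+d)$ into $(b+h,b+d-h)$. The indices $\shann$, $\simps$ and $\dns$ are of the form $\pm\sum_{p} F(n_p)$ for a suitable one-variable function $F$. Hence the difference between the two balanced committees reduces to $G(a)-G(b)$, where
\[ G(x)\ceq F(x+h)+F(x+d-h)-F(x)-F(x+d). \]
One subtlety is that the pairs may overlap, namely $j=k$ when $b=a+d$. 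We would check that this case gives the same expression. For example, with counts $(a,a+d,a+2d)$ the two results are $(a+h,a+d-h,a+2d)$ and $(a,a+d+h,a+2d-h)$, and their $F$-sums again differ by exactly $G(a)-G(a+d)$. The other overlap cases are analogous. The conclusion then only needs monotonicity of $G$.

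For $\shann$, take $F(x)=x\log x$ with $F(0)=0$. As in the proof of \cref{obs:prop:propBalance}, the committee size is common to both committees, so larger $\shann$ corresponds to a smaller $\sum_p F(n_p)$. We therefore need $G(a)<G(b)$. For $x>0$,
\[ G'(x)=\log\frac{(x+h)(x+d-h)}{x(x+d)}, \]
and $(x+h)(x+d-h)-x(x+d)=h(d-h)>0$. So $G$ is strictly increasing on $(0,\infty)$. Since $G$ is continuous at $0$, it is strictly increasing on $[0,\infty)$, which covers $a=0$. We expect this boundary case and the overlap bookkeeping to be the only delicate points.

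For $\dns$, put $\eta=\min\{m,k\}+1\geq 2$, with $k$ now the committee size, and $F(n)=\sum_{t=1}^{n}\eta^{k+1-t}$. Then $\dns=\sum_p F(n_p)$, because label $p$ lies in $\sigma_t$ exactly for $t\le n_p$. A direct computation gives
\[ G(x)=\bigl(1-\eta^{-(d-h)}\bigr)\sum_{t=x+1}^{x+h}\eta^{k+1-t}. \]
This is positive and strictly decreasing in $x$, so $G(a)>G(b)$, which is the required inequality for a maximized index.

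For the negative results, $\simps$ gives $G(x)=(x+h)^2+(x+d-h)^2-x^2-(x+d)^2=2h^2-2hd$. This is independent of $x$, so the two balanced committees always tie. For $\rich$, take any instance with $a\geq 1$, e.g.\ $d=2$ and counts $1,3,2,4$ for four labels, with enough candidates of each label in $C$. Both balancings keep every label present, so $\rich$ ties. One concrete instance of this kind, say $k=10$ and $m=4$, then refutes the property for both $\simps$ and $\rich$.
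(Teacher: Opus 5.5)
Your proof is correct. For $\shann$, $\simps$ and $\rich$ it matches the paper's argument. The paper also measures each balancing as a change relative to $\Com$ that depends only on the smaller count: its function $f$ for $\shann$ is exactly your $-G/k$, and for $\simps$ both changes are the same constant, matching your $G\equiv 2h^2-2hd$. For $\rich$ it likewise notes that no label vanishes when $n_i\geq 1$. Your overlap check for $j=k$ is harmless but unnecessary, since $\Com_{(i,j)}$ and $\Com_{(k,l)}$ each differ from $\Com$ only on their own pair. Your continuity argument at $a=0$ is slightly more careful than the paper, which applies the derivative formula on $x\geq 0$ directly.

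The genuine difference is $\dns$. The paper argues lexicographically, via the $\dns$-explainable function of \cref{theorem:explain:bounds}. With $\alpha=n_i(\Elec,\Com)$, the first index at which $\distri(\Elec,\Com_{(i,j)})$ and $\distri(\Elec,\Com_{(k,l)})$ differ is $\alpha+1$, and $\Com_{(i,j)}$ has one fewer label occurring exactly $\alpha$ times. That is shorter once the explainability result is available, and it shows that only the rarest affected label matters.

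Your route instead writes $\dns=\sum_p\sum_{t=1}^{n_p}\eta^{k+1-t}$ and computes $G$ in closed form. This buys three things:
\begin{itemize}
\item it is self-contained;
\item it gives an explicit margin;
\item it treats all three indices in one framework, where \pref{propty:Prio} is just strict monotonicity of $G$ and $\simps$ fails because a quadratic $F$ makes $G$ constant.
\end{itemize}
It also shows that $\dns$ has exactly the form $\sum_{l}\sum_{i\le n_l}t(i)$ used in \cref{theorem:DSSGP}, with $t(i)=(\min\{m,k\}+1)^{k+1-i}$.

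Finally, your explicit instance for the negative results makes concrete what the paper leaves implicit: the premise of \pref{propty:Prio} can actually be met with $n_i\geq 1$.
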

\appendixproof{obs:prop:Prio}{
  In the following, let $\alpha \ceq n_i(\Elec,\Com)$ and $d_2 \ceq \lfloor \frac{d}{2}\rfloor$.

  \textbf{Diversity index $\rich$:} If $n_i(\Elec,\Com)\geq 1$, it holds that $\distri(\Elec,\Com)_1 = \distri(\Elec,\Com_{\paren(i,j)})_1 = \distri(\Elec,\Com_{\paren(k,l)})_1$ and therefore $\fun{\rich}(\Com_{\paren(i,j)}) =  m - \distri(\Elec,\Com)_1 = \fun{\rich}(\Com_{\paren(k,l)})$.

  \textbf{Diversity index $\simps$:} It holds that
  \begin{align*}
    &\fun{\simps}(\Com_{\paren(i,j)}) - \fun{\simps}(\Com)\\
    ={}& \paren(-\paren(\alpha+d_2)^2 -\paren(\alpha+d-d_2)^2 + \alpha^2 + \paren(\alpha+d)^2)/k^2 \\
    ={}& \paren(-d_2^2 - \paren(d-d_2)^2 +d^2)/k^2
  \end{align*}
  and analogously
  \begin{gather*}
    \fun{\simps}(\Com_{\paren(k,l)}) - \fun{\simps}(\Com) = \paren(-d_2^2 - \paren(d-d_2)^2 +d^2)/k^2.
  \end{gather*}
  Hence,
  \begin{align*}
    &\fun{\simps}(\Com_{\paren(i,j)}) - \fun{\simps}(\Com_{\paren(k,l)})\\
    =&{} \fun{\simps}(\Com_{\paren(i,j)}) - \fun{\simps}(\Com) - \paren(\fun{\simps}(\Com_{\paren(k,l)}) - \fun{\simps}(\Com)) = 0.
  \end{align*}

  \textbf{Diversity index $\dns$:} This follows from
  \begin{align*}
    &\fun{\rdistriv}(\Elec, \Com_{\paren(i,j)}, \Com_{\paren(k,l)})_1  = \fun{\distri}(\Elec, \Com_{\paren(i,j)})_{\alpha +1} \\
    <&{}  \fun{\rdistriv}(\Elec, \Com_{\paren(k,l)}, \Com_{\paren(i,j)})_1 = \fun{\distri}(\Elec, \Com_{\paren(k,l)})_{\alpha +1}
  \end{align*}
  and
  \begin{gather*}
    e_{\dns}(\rdistriv^{(1)}, \rdistriv^{(2)}, \rdistrividx)\\
    = \begin{cases*} \mathit{less}, \text{ if }\dim(\rdistrividx)\geq 1 \text{ and }\rdistriv^{(1)}_1 > \rdistriv^{(2)}_1\\  \mathit{equal}, \text{ if } \dim(\rdistrividx) = 0 \\
  \mathit{more}, \text{ if } \dim(\rdistrividx)\geq 1 \text{ and }\rdistriv^{(1)}_1 < \rdistriv^{(2)}_1 \end{cases*}
  \end{gather*}
  being a $\dns$-explainable function (see \cref{proof:theorem:explain:bounds}).

  \textbf{Diversity index $\shann$:}
  Consider the function
  \begin{align*}
    f(x) = \frac{1}{k}(&-\paren(x+d_2)\tlog{x+d_2}\\
    &-\paren(x+d-d_2)\tlog{x+d-d_2}\\
    &+x\tlog{x}+\paren(x+d)\tlog{x+d}).
  \end{align*}
  Hence, $f(\alpha) = \fun{\shann}(\Com_{\paren(i,j)}) - \fun{\shann}(\Com)$ and $f(n_k(\Elec,\Com)) = \fun{\shann}(\Com_{\paren(k,l)}) - \fun{\shann}(\Com)$.
  The derivative is
  \begin{align*}
    f'(x) ={}& (\tlog{x} + \tlog{d + x} - \tlog{d - d_2 + x}\\
    {}& - \tlog{d_2 + x})/k\\
    ={}& \tlog{\frac{x\paren(d+x)}{\paren(d-d_2+x)\paren(d_2+x)}}/k.
  \end{align*}
  As 
  \begin{align*}
    & \frac{x\paren(d+x)}{\paren(d-d_2+x)\paren(d_2+x)}\\
    ={}& \frac{xd+x^2}{dd_2 +dx -d_2^2 -d_2x + d_2x + x^2}\\
    ={}& \frac{xd+x^2}{xd + x^2 + dd_2 - d_2^2} < 1
  \end{align*}
  due to $d > d_2$, it holds that $f'(x)<0$ for $x\geq 0$.
  Hence, $f(x)$ is strictly monotonically decreasing for $x\geq 0$ and $f(\alpha) = \fun{\shann}(\Com_{\paren(i,j)}) - \fun{\shann}(\Com) > f(n_k(\Elec,\Com)) = \fun{\shann}(\Com_{\paren(k,l)}) - \fun{\shann}(\Com)\iff \fun{\shann}(\Com_{\paren(i,j)}) > {\shann}(\Com_{\paren(k,l)})$.
}
In some sense, this makes $\shann$ more similar to $\dns$ than $\simps$, as $\shann$ and $\dns$ both prefer to increase the frequency of the label that is rarest among the four labels.
In contrast, $\simps$ rates both options as equally good and thus does not differentiate whether you decrease the frequency of the most dominant label or increase the frequency of the rarest label among the four labels at hand.

Next, we want to look at another property that distinguishes between our indices and takes into account that the index is to be used in an election:
One important goal is to ensure that voters (or other stakeholders) are able to understand why a committee has been chosen over a different one,
which is likely to promote acceptance of the result or at least a more informed debate about it (e.g.~when maximizing diversity is incorporated into elections).
Hence, we want to focus on the \textit{explainability} of the diversity indices next, or, in other words, the effort required to decide which of two given committees is more diverse.

For this, let $\Elec$ be an election with $S_1,S_2\in\Rulebase(\Elec)$. 
To decide which of $S_1$ and $S_2$ is more diverse (or whether they are equally diverse),
the only information required by the diversity indices considered is how many labels occur how often,
which is provided by the $\distri$ vectors.
When comparing $\fun{\distri}(\Elec, S_1)$ and $\fun{\distri}(\Elec, S_2)$, e.g.~for $\ExCom_2$ and $\ExCom_3$ from \cref{exp:1}, it seems natural to look only at the entries at which the $\distri$ vectors differ.
Indeed, it is not difficult to see from the definitions of the considered diversity indices that
any $i\in\bcks[k+1]$ with $\fun{\distri}(\Elec, S_1)_i = \fun{\distri}(\Elec, S_2)_i$ is irrelevant.

Thus, we denote all relevant indices as $\fun{I_R}(\Elec, S_1, S_2)=\brcs{i\in\bcks[k+1]: \fun{\distri}(\Elec, S_1)_i \neq \fun{\distri}(\Elec, S_2)_i}$.
Based on this, we define the \emph{reduced distribution vector} $\fun{\rdistriv}(\Elec, S_1, S_2)$ as the $\distri$ vector of $S_1$ at the indices in $\fun{I_R}(\Elec, S_1, S_2)$,
with $\rdistrividx$ as the vector of the elements of $\fun{I_R}(\Elec, S_1, S_2)$ in ascending order:
\begin{align*}
  \fun{\rdistriv}(\Elec, S_1, S_2) &\ceq \paren(\fun{\distri_{\rdistrividx_i}}(\Elec, S_1))_{i=1}^{\abs{\fun{I_R}(\Elec, S_1, S_2)}}.
\end{align*}
Consequently, $\fun{\rdistriv}(\Elec, S_1, S_2)$ and $\fun{\rdistriv}(\Elec, S_2, S_1)$ correspond to $\distri(\Elec, S_1)$ and $\distri(\Elec, S_2)$, respectively, where the entries at indices at which the $\distri$ vectors are equal are removed.
Note that this implies $\dim(\fun{\rdistriv}(\Elec, S_1, S_2))=\dim(\fun{\rdistriv}(\Elec, S_2, S_1))=\dim(\rdistrividx)$.
\begin{remark}\label{remark:rdistr}
  For the committees $\ExCom_2$ and $\ExCom_3$ in \cref{exp:1}, we have
  \begin{gather*}
    \fun{I_R}(\Elec, \ExCom_2, \ExCom_3) = \fun{I_R}(\Elec, \ExCom_3, \ExCom_2) = \brcs{1,2,6,9}, \rdistrividx=(1,2,6,9)\\
    \fun{\rdistriv}(\Elec, \ExCom_2, \ExCom_3) = \paren(0,2,0,1), \fun{\rdistriv}(\Elec, \ExCom_3, \ExCom_2) = \paren(1,0,2,0)\tag*{\rqed}
  \end{gather*}
\end{remark}

Based on this, we call a diversity index \emph{$n$-explainable} if, roughly speaking, at most $n$ indices from $\fun{\rdistriv}(\Elec, S_1, S_2)$, $\fun{\rdistriv}(\Elec, S_2, S_1)$, or $\rdistrividx$ need to be used to decide which of $S_1, S_2$ is more diverse or whether they are equally diverse.
Clearly, if a diversity index uses only a few indices, the decision is easier to follow than when many are used, especially when the dimension of the vectors becomes larger.
Thus, $n$ represents a degree of explainability, with a small value indicating a high explainability.
We want to formalize this with the help of a function that makes this decision:
Let $T$ be all possible triples $(a,b,c)$ for which there are an election $\Elec$ and $S_1,S_2\in\Rulebase(\Elec)$ so that $a=\fun{\rdistriv}(\Elec, S_1, S_2), b=\fun{\rdistriv}(\Elec, S_2, S_1)$, and $c$ is the $\rdistrividx$ vector, i.e.~all the inputs such a function needs to be able to process.
For a given diversity index $D$, a function $e: T \to \brcs{\mathit{less},\mathit{more},\mathit{equal}}$ is called \emph{$D$-explainable using at most $n$ indices} if
\begin{enumerate}[nosep,labelindent=0pt]%
  \item $e$ decides whether the first argument describes a more or less diverse committee than the second, i.e.
    \begin{gather*}
    e(\fun{\rdistriv}(\Elec, S_1, S_2),\fun{\rdistriv}(\Elec, S_2, S_1),\rdistrividx) =\\
    \begin{cases}
      \mathit{less} &\text{if } D(\Elec, S_1) < D(\Elec, S_2)\\ 
      \mathit{equal}&\text{if } D(\Elec, S_1) = D(\Elec, S_2) \\
      \mathit{more} &\text{if } D(\Elec, S_1) > D(\Elec, S_2) \end{cases}
    \end{gather*}
  \item $e((),(),()) = \mathit{equal}$, which happens if two $\distri$ vectors are the same---a scenario in which each diversity index considered classifies the committees as equally diverse.
  \item For each $r\in\N$, there are two index sets $I_o,I_d\subseteq \bcks[r]$ so that
  \begin{itemize}
    \item for all $(\rdistriv^{(1)},\rdistriv^{(2)},\rdistrividx)\in T$ with $\dim(\rdistrividx) = r$, it holds that $\{i\in\bcks[r]: e\text{ uses }\rdistriv^{(1)}_i \text{ or }\rdistriv^{(2)}_i \text{ or both}\} \subseteq I_d$ and $\{i\in\bcks[r]: e\text{ uses }\rdistrividx_i\} \subseteq I_o$, i.e.~at most the indices from $I_o$ are accessed in $\rdistrividx$ and at most those from $I_d$ are accessed in the $\rdistriv$ vectors for any vectors of dimension $r$,
    \item $\abs{I_o}+\abs{I_d}\leq n$, i.e.~at most $n$ indices are used.
  \end{itemize}
  Thus, we count the use of both $\rdistriv^{(1)}_i$ and $\rdistriv^{(2)}_i$ only once, as it is natural to assume that we need to look at the same indices in both vectors to compare them. In addition, we allow the index sets to differ with varying $r$. This is necessary e.g.~when the diversity index needs to access nearly all indices, so that the index sets become larger with growing $r$.
\end{enumerate}
Based on this, we define the following properties:
\begin{property}[$n$-Explainability]\label{propty:obv}
  A diversity index $D$ is \emph{$n$-explainable} if 
  there is a $D$-explainable function~$e$ that uses at most $n$ indices for any argument $(a,b,c)\in T$ with $\dim(a)\geq 1$\footnote{For $\dim(a)=0$, each $D$-explainable function returns $\mathit{equal}$ and hence does not need to use any index.}.
\end{property}
As, for each of $e$'s arguments $t=(a,b,c)\in T$, the number of indices used is bounded upwards by $2\dim(a)=2\dim(b)=2\dim(c)$, we denote the dimension of $a$ for $t=(a,b,c)\in T$ as $r$.
By using all the available information of each argument $t\in T$, each of the considered diversity indices is clearly \pref[$2r$-explainable]{propty:obv}.
Next, we give the smallest possible $n$ so that the index is \pref[$n$-explainable]{propty:obv}, for each of the diversity indices considered.
\begin{theorem}[\appref{theorem:explain:bounds}]
  \label{theorem:explain:bounds}
  \begin{enumerate}[nosep]
    \item $\dns$ is \pref[$1$-explainable]{propty:obv},
    \item $\rich$ is \pref[$2$-explainable]{propty:obv}, but not \pref[$1$-explainable]{propty:obv},
    \item $\simps$ and $\shann$ are \pref[$2r-2$-explainable]{propty:obv}, but not \pref[$2r-3$-explainable]{propty:obv}.
  \end{enumerate}
\end{theorem}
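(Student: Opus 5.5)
The plan is to exploit two linear invariants of $\distri$ vectors and then treat the three items separately. For every $\Com\in\Rulebase(\Elec)$ we have $\sum_{j}\fun{\distri}(\Elec,\Com)_j=m$ and $\sum_j (j-1)\,\fun{\distri}(\Elec,\Com)_j=k$. Since entries outside $\fun{I_R}$ coincide, the reduced vectors $a=\fun{\rdistriv}(\Elec,S_1,S_2)$ and $b=\fun{\rdistriv}(\Elec,S_2,S_1)$ satisfy
\begin{gather*}
\sum_{i=1}^r (a_i-b_i)=0, \qquad \sum_{i=1}^r(\rdistrividx_i-1)(a_i-b_i)=0 .
\end{gather*}
In particular $r\geq 3$ whenever $r\geq 1$.

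\textbf{Item 1 ($\dns$).} Observe that $\abs{\sigma_i}=m-\sum_{j\le i}\distri_j$. Let $\rho_1$ be the first differing index. Then $\abs{\sigma_i}$ agrees for both committees when $i<\rho_1$, and at $i=\rho_1$ we have $\abs{\sigma_{\rho_1}(S_1)}-\abs{\sigma_{\rho_1}(S_2)}=b_1-a_1$. Every $\abs{\sigma_i}$ lies in $\{0,\dots,\min\{m,k\}\}$ and the base is $\min\{m,k\}+1$, so $\dns$ compares the vectors $(\abs{\sigma_i})_i$ lexicographically: the term $\eta^{k+1-\rho_1}$ outweighs $\sum_{i>\rho_1}\min\{m,k\}\eta^{k+1-i}<\eta^{k+1-\rho_1}$. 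Hence the function $e_{\dns}$ already quoted in the appendix is $\dns$-explainable. It returns \emph{more} iff $a_1<b_1$, returns \emph{less} iff $a_1>b_1$, and only accesses index $1$ of the $\rdistriv$ vectors, so $I_d=\{1\}$ and $I_o=\emptyset$.

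\textbf{Item 2 ($\rich$).} Since $\rich=m-\distri_1$, the committees differ in $\rich$ iff $\rdistrividx_1=1$, and then the direction is given by comparing $a_1$ and $b_1$. So $I_o=I_d=\{1\}$ suffices. For the lower bound we fix $r$ and rule out each single index.
\begin{itemize}
\item If only a $\rdistrividx$ entry is used, the output cannot depend on $a,b$. Swapping $S_1,S_2$ in an instance with $\rdistrividx_1=1$ keeps $\rdistrividx$ but flips the answer.
\item If only $a_i,b_i$ are used for one $i$, we exhibit two instances with identical $(a_i,b_i)$. In the first, $\rdistrividx_1=1$ and the answer is \emph{more}; in the second, $\rdistrividx_1\ge2$ and the answer is \emph{equal}.
\end{itemize}
Both instances are built from explicit label-count multisets, e.g.\ by shifting a whole configuration up by one occurrence per label. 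The padding is chosen so that the dimension equals $r$.

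\textbf{Item 3 ($\simps$, $\shann$), upper bound.} Both indices are of the form $\sum_j \distri_j\, g(j-1)$ with $g$ fixed given $k$. It therefore suffices to recover all of $a$, $b$ and $\rdistrividx$ from $a_1..a_{r-1}$, $b_1..b_{r-1}$ and $\rdistrividx_1..\rdistrividx_{r-1}$, which is $2r-2$ indices. The first invariant gives $a_r-b_r$. This difference is nonzero, so the second invariant determines $\rdistrividx_r$. The comparison only needs the differences $a_i-b_i$, because the sign of $\sum_i(a_i-b_i)g(\rdistrividx_i-1)$ decides the outcome.

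\textbf{Item 3, lower bound.} This is the main obstacle. For every $r$ and every choice of $I_o,I_d$ with $\abs{I_o}+\abs{I_d}=2r-3$, we need two triples in $T$ that agree on all accessed entries but yield different answers. At least two of the $2r$ “slots” are missing, so we split into cases.
\begin{itemize}
\item \emph{Some $d$-slot $i$ is missing together with another slot.} We keep everything else fixed and vary $(a_i,b_i)$ together with the other free quantity. We compensate via the two invariants and choose the values so that the strictly convex/concave weights $g$ (namely $j^2$ resp.\ $j\log j$) change the sign of the score difference.
\item \emph{Two $\rdistrividx$ slots are missing.} We move one or two differing positions, e.g.\ $\rdistrividx_r$ to $\rdistrividx_r+1$ together with a matching shift elsewhere. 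This keeps the invariants and the accessed entries intact but reverses the comparison. The reversal again follows from strict convexity, in the same spirit as the \pref{propty:Prio} computation.
\end{itemize}
I would first construct a base instance of dimension $r$ (a few “active” labels whose counts encode the required differences, plus padding labels) where the score difference is small. Then I would show that each allowed perturbation flips its sign. The delicate part is ensuring that the perturbed vectors are realisable as $\distri$ vectors with the same $k$ and $m$ and keep exactly the dimension $r$.
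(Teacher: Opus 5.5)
Items 1 and 2 and the upper bound in item 3 are sound, and items 1 and 3 (upper bound) essentially follow the paper. The negative half of item 2 takes a different route. You refute $1$-explainability of $\rich$ directly. Swapping $S_1,S_2$ defeats any $e$ that reads only $\rdistrividx$. Raising every label by one occurrence leaves both $\rdistriv$ vectors unchanged, replaces $\rdistrividx$ by $\rdistrividx+1$, and turns the answer into \emph{equal*}; this defeats any $e$ that reads only $\rdistriv$ entries. The paper instead derives this from the proof of the characterization theorem. Your argument is correct and more self-contained.

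The gap is the lower bound in item 3, and it has two parts.

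First, you aim to prove the lower bound for every $r$, but it is false for $r=3$. Write $d_i=a_i-b_i$ and $x_i=\rdistrividx_i-1$. The invariants give $d_2=-(d_1+d_3)$ and $d_1(x_2-x_1)=d_3(x_3-x_2)$, hence $\sum_i d_ix_i^2=d_1(x_2-x_1)(x_3-x_1)$. The same substitution and strict convexity of $x\log x$ show that $\sum_i d_ix_i\log x_i$ also has the sign of $d_1$. So at $r=3$ both $\simps$ and $\shann$ are decided by comparing $a_1$ with $b_1$, which is a single index. Since one function $e$ must serve all $r$, it suffices to exhibit a single bad $r$, and by the above it must satisfy $r\geq 4$. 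The paper uses $r=4$.

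Second, your case split only uses that two slots are hidden, but $2r-3$ used slots leave three hidden, and your first case cannot work as described. The answer depends only on the $d_i$ and $x_i$. If the unknowns are some $d_i$ plus one further $d_j$ or $x_j$, the two invariants determine them uniquely:
\begin{itemize}
\item for $d_i,d_j$: $d_i+d_j$ and $x_id_i+x_jd_j$ are known, and $x_i\neq x_j$;
\item for $d_i,x_j$: the first invariant gives $d_i$, and the second gives $x_j$ because $d_j\neq 0$.
\end{itemize}
This is exactly the mechanism behind your $2r-2$ upper bound. So fixing everything else leaves nothing to vary. You must exploit the third hidden slot. That leaves a one-parameter integer family constrained by $\rdistrividx_1<\dots<\rdistrividx_r$, $a_i,b_i\geq 0$, $a_i\neq b_i$ and the visible entries.

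The $r=3$ computation shows that such a family need not contain a sign change, so appealing to convexity does not finish the argument; witnesses must actually be produced. The paper does this for $r=4$, listing explicit pairs for all $\binom{8}{3}=56$ placements of the three hidden slots. For example, with hidden $\rdistriv$-positions $1,2,3$ it takes $\rdistrividx=(3,4,10,11)$ with $(a,b)=((2,0,0,2),(0,2,2,0))$ versus $((0,5,0,2),(4,0,3,0))$. These give $\sum_i d_ix_i^2=28$ and $-14$ respectively, with opposite signs also for $x\log x$. Your proposal contains no such construction.

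By contrast, the realisability you flag as delicate is routine. Every nonnegative integer triple with $a_i\neq b_i$ satisfying both invariants lies in $T$: set all common $\distri$ entries to $0$, take $m=\sum_i a_i$ and $k=\sum_i(\rdistrividx_i-1)a_i$. Then $k\geq\rdistrividx_r-1$ because $a_r\neq b_r$ and both weighted sums are equal.
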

While the proof can be found in \cref{proof:theorem:explain:bounds}, we want to give some further insights here:
For $\dns$ it holds---due to its lexicographic nature---that
$\fun{\rdistriv}(\Elec, S_1, S_2)_1 < \fun{\rdistriv}(\Elec, S_2, S_1)_1
\iff \fun{\dns}(\Elec, S_1)>\fun{\dns}(\Elec, S_2)$ so that
\begin{gather*}
  e_{\dns}(\rdistriv^{(1)}, \rdistriv^{(2)}, \rdistrividx)\\
  = \begin{cases}
    \mathit{less} &\text{if }\dim(\rdistrividx)\geq 1 \text{ and }\rdistriv^{(1)}_1 > \rdistriv^{(2)}_1\\
    \mathit{equal}&\text{if } \dim(\rdistrividx) = 0 \\
    \mathit{more} &\text{if } \dim(\rdistrividx)\geq 1 \text{ and }\rdistriv^{(1)}_1 < \rdistriv^{(2)}_1
  \end{cases}
\end{gather*}
can be chosen as an $\dns$-explainable function, which uses at most index $1$ in the $\rdistriv$ vectors.

For $\rich$ it follows directly from its definition that
\begin{gather*}
  e_{\rich}(\rdistriv^{(1)}, \rdistriv^{(2)}, \rdistrividx)\\
  = \begin{cases}
    \mathit{less} & \text{if }\dim(\rdistrividx)\geq 1 \text{ and }\rdistrividx_1 = 1\\
    &\text{and }\rdistriv^{(1)}_1 > \rdistriv^{(2)}_1\\ 
    \mathit{equal}&\text{if } \dim(\rdistrividx) = 0 \text{ or }  \rdistrividx_1 > 1\\
    \mathit{more}&\text{if }\dim(\rdistrividx)\geq 1 \text{ and }\rdistrividx_1 = 1\\
    &\text{and }\rdistriv^{(1)}_1 < \rdistriv^{(2)}_1 \end{cases}
\end{gather*}
can be chosen as a $\rich$-explainable function, using at most $2$ indices (index $1$ in $\rdistrividx$ and index $1$ in the $\rdistriv$ vectors).
However, it is not \pref[$1$-explainable]{propty:obv}.
One way to explain this is that a diversity index being \pref[$1$-explainable]{propty:obv} and fulfilling \pref{propty:pfgpmax} categorizes two committees as equally diverse if and only if their $\distri$ vectors are equal and hence the $\rdistriv$ vectors have the dimension zero, which is clearly not the case for $\rich$ 
(consider e.g.~$\ExCom_1$ and $\ExCom_2$ from \cref{exp:1}).

While $\dns$ and $\rich$ hence only need to use a small and constant number of indices and are therefore easy to comprehend, this is not the case for $\simps$ and $\shann$.
They are \pref[$2r-2$-explainable]{propty:obv}, because the last element in $\rdistrividx$ and in the $\rdistriv$ vectors simply do not need to be used because they can be reconstructed based on the other entries.
Thus, the result that there is no $\simps$- and $\shann$-explainable function using at most $2r-3$ indices is more interesting, which clearly distinguishes $\shann$ and $\simps$ from $\rich$ and $\dns$.

\appendixproof{theorem:explain:bounds}{
  (1) Let $t=(\fun{\rdistriv}(\Elec, S_1, S_2),\fun{\rdistriv}(\Elec, S_2, S_1),\rdistrividx)\in T$, $\distri^{(1)}=\fun{\distri}(\Elec, S_1)$, $\distri^{(2)}=\fun{\distri}(\Elec, S_2)$, $l = \rdistrividx_1$, $\eta = \funbc{\min}{m, k}+1$, $\rdistriv^{(1)}=\fun{\rdistriv}(\Elec, S_1, S_2)$, and $\rdistriv^{(2)}=\fun{\rdistriv}(\Elec, S_2, S_1)$.
  We first show that
  \begin{equation*}
    \fun{\dns}(\Elec, S_1)>\fun{\dns}(\Elec, S_2)\Rightarrow \rdistriv^{(1)}_1 < \rdistriv^{(2)}_1.
  \end{equation*}

  For this, suppose that $\fun{\dns}(\Elec, S_1)>\fun{\dns}(\Elec, S_2)$ and $\rdistriv^{(1)}_1 > \rdistriv^{(2)}_1$.
  Then it holds that
  \begin{equation*}
    \forall i \in\bcks[l-1]: \distri^{(1)}_{i}=\distri^{(2)}_{i}\land\fun{\sigma}(\Elec, S_1)_i=\fun{\sigma}(\Elec, S_2)_i
  \end{equation*}
  and thus
  $\fun{\sigma}(\Elec, S_1)_{l}<\fun{\sigma}(\Elec, S_2)_{l}$ because $\rdistriv^{(1)}_1 > \rdistriv^{(2)}_1\iff \distri^{(1)}_{l}>\distri^{(2)}_{l}$, which means that $S_1$ has more labels occurring $l-1$ times than $S_2$ and thus fewer labels occurring at least $l$ times than $S_2$.
  It holds that
  \begin{align*}
    & \fun{\dns}(\Elec, S_1) - \fun{\dns}(\Elec, S_2)\\
    ={} & \sum_{i=1}^{k} \eta^{k+1-i} \cdot \paren(\abs{\fun{\sigma}(\Elec,S_1)_i}-\abs{\fun{\sigma}(\Elec,S_2)_i})\\
    ={} & \eta^{k+1-l} \cdot \paren(\abs{\fun{\sigma}(\Elec,S_1)_{l}}-\abs{\fun{\sigma}(\Elec,S_2)_{l}})\\
    {}&+ \sum_{i=l+1}^{k} \eta^{k+1-i} \cdot \paren(\abs{\fun{\sigma}(\Elec,S_1)_i}-\abs{\fun{\sigma}(\Elec,S_2)_i})
  \end{align*}
  As it holds that $\fun{\sigma}(\Elec,S_1)_{l}<\fun{\sigma}(\Elec,S_2)_{l}$ and
  $\forall i\in\bcks[k]\setminus\bcks[l]: \fun{\sigma}(\Elec,S_1)_i \leq \eta -1, \fun{\sigma}(\Elec,S_2)_i \geq 0$,
  it follows (with the help of the geometric series formula) that
  \begin{align*}
    &\fun{\dns}(\Elec, S_1) - \fun{\dns}(\Elec, S_2)\\
    \leq &  -\eta^{k+1-l} + \sum_{i=l+1}^{k} \eta^{k+1-i}\cdot \paren(\eta -1)\\
    = & -\eta^{k+1-l} + \paren(\eta -1) \sum_{i=l+1}^{k} \eta^{k+1-i} \\
    = & -\eta^{k+1-l} + \paren(\eta -1) \frac{\eta^{k+1-l}-\eta}{\eta -1}\\
    = & -\eta^{k+1-l} + \eta^{k+1-l}-\eta =-\eta 
    < 0,
  \end{align*}
  a contradiction to the assumption that $\fun{\dns}(\Elec, S_1)>\fun{\dns}(\Elec, S_2)$.
  
  Next, we show $\rdistriv^{(1)}_1 < \rdistriv^{(2)}_1 \Rightarrow \fun{\dns}(\Elec, S_1)>\fun{\dns}(\Elec, S_2)$.
  For this, suppose that $\rdistriv^{(1)}_1 < \rdistriv^{(2)}_1$.
  Thus, $\forall i \in\brcs{1,\dots, l-1}: \distri^{(1)}_{i}=\distri^{(2)}_{i}\land\fun{\sigma}(\Elec,S_1)_i=\fun{\sigma}(\Elec,S_2)_i$ and hence
  $\fun{\sigma}(\Elec,S_1)_l>\fun{\sigma}(\Elec,S_2)_l$ because $\rdistriv^{(1)}_1 < \rdistriv^{(2)}_1\iff \distri^{(1)}_{l}<\distri^{(2)}_{l}$, which means that $S_2$ has more labels occurring $l-1$ times than $S_1$ and thus fewer labels occurring at least $l$ times than $S_1$.
  Based on this, it holds that
  \begin{align*}
    & \fun{\dns}(\Elec, S_1)-\fun{\dns}(\Elec, S_2)\\
    ={}& \sum_{i=1}^{k} \eta^{k+1-i} \cdot \paren(\abs{\fun{\sigma}(\Elec,S_1)_i}-\abs{\fun{\Elec,\sigma}(S_2)_i}) \\
    ={}& \eta^{k+1-l} \cdot \paren(\abs{\fun{\sigma}(\Elec,S_1)_l}-\abs{\fun{\sigma}(\Elec,S_2)_l})\\
    {}& + \sum_{i=l+1}^{k} \eta^{k+1-i} \cdot \paren(\abs{\fun{\sigma}(\Elec,S_1)_i}-\abs{\fun{\sigma}(\Elec,S_2)_i}).
  \end{align*}
  As $\fun{\sigma}(\Elec,S_1)_l>\fun{\sigma}(\Elec,S_2)_l$
  and $\forall i\in\bcks[k]: \eta-1\geq \fun{\sigma}(\Elec,S')_i\geq 0$ for $S'\in\brcs{S_1,S_2}$,
  it follows (with the help of the geometric series formula) that
  \begin{align*}
    & \fun{\dns}(\Elec, S_1)-\fun{\dns}(\Elec, S_2)\\
    \geq {}& \eta^{k+1-l} - \sum_{i=l+1}^{k} \paren(\eta-1) \cdot \eta^{k+1-i}\\
    >{}& \eta^{k+1-l} - \eta -\sum_{i=l+1}^{k} \paren(\eta-1) \cdot \eta^{k+1-i}\\
    ={}& \eta^{k+1-l} - \eta + \eta - \eta^{k+1-l} = 0.
  \end{align*}
  Thus, $\fun{\dns}(\Elec, S_2)<\fun{\dns}(\Elec, S_1)$.
  
  Thus, we have $\rdistriv^{(1)}_1 < \rdistriv^{(2)}_1 \iff \fun{\dns}(\Elec, S_1)>\fun{\dns}(\Elec, S_2)$ overall, so that
  \begin{gather*}
    e_{\dns}(\rdistriv^{(1)}, \rdistriv^{(2)}, \rdistrividx)\\
    = \begin{cases}
      \mathit{less} &\text{if }\dim(\rdistrividx)\geq 1 \text{ and }\rdistriv^{(1)}_1 > \rdistriv^{(2)}_1\\
      \mathit{equal}&\text{if } \dim(\rdistrividx) = 0 \\
      \mathit{more} &\text{if } \dim(\rdistrividx)\geq 1 \text{ and }\rdistriv^{(1)}_1 < \rdistriv^{(2)}_1
    \end{cases}
  \end{gather*}
  can be chosen as a $\dns$-explainable function, which uses at most $1$ index, namely index $1$ in the $\rdistriv$ vectors.

  (2) We first prove that $\rich$ is \pref[$2$-explainable]{propty:obv}:
  As
  \begin{equation*}
    \fun{\rich}(\Elec,S_1) = \sum_{i=1}^k \distri^{(1)}_{i+1} = m - \distri^{(1)}_{1},
  \end{equation*}
  we have
  \begin{gather*}
    e_{\rich}(\rdistriv^{(1)}, \rdistriv^{(2)}, \rdistrividx)\\
    = \begin{cases}
      \mathit{less} & \text{if }\dim(\rdistrividx)\geq 1 \text{ and }\rdistrividx_1 = 1\\
      &\text{and }\rdistriv^{(1)}_1 > \rdistriv^{(2)}_1\\
      \mathit{equal} & \text{if } \dim(\rdistrividx) = 0 \text{ or }  \rdistrividx_1 > 1\\
      \mathit{more} & \text{if }\dim(\rdistrividx)\geq 1 \text{ and }\rdistrividx_1 = 1\\
      &\text{and }\rdistriv^{(1)}_1 < \rdistriv^{(2)}_1
    \end{cases}
  \end{gather*}
  can be chosen as a $\rich$-explainable function, which uses at most $2$ indices, namely index $1$ in the $\rdistriv$ vectors and index $1$ in $\rdistrividx$.

  Next, we prove that $\rich$ is not \pref[$1$-explainable]{propty:obv}.
  We show in \cref{proof:theorem:charac} that each diversity index being \pref[$1$-explainable]{propty:obv} and fulfilling \pref{propty:pfgpmax} categorizes two committees as equally diverse if and only if their $\distri$ vectors are equal and hence the $\rdistriv$ vectors have the dimension zero.
  However, this is not the case for $\rich$: Consider e.g.~$\ExCom_1$ and $\ExCom_2$ from \cref{exp:1} which $\rich$ classifies as equally diverse although the $\distri$ vectors are not the same.

  (3) We first prove that $\simps$ and $\shann$ are \pref[$2r-2$-explainable]{propty:obv}. Let $t=(\fun{\rdistriv}(\Elec, S_1, S_2),\fun{\rdistriv}(\Elec, S_2, S_1),\rdistrividx)\in T$, $\rdistriv^{(1)}=\fun{\rdistriv}(\Elec, S_1, S_2)$, and $\rdistriv^{(2)}=\fun{\rdistriv}(\Elec, S_2, S_1)$. $\simps$ and $\shann$ are \pref[$2r-2$-explainable]{propty:obv} by using all indices but the last, i.e.~by using indices $\bcks[r]\setminus \brcs{r}$ in $\rdistrividx$ and indices $\bcks[r]\setminus \brcs{r}$ in $\fun{\rdistriv}(\Elec, S_1, S_2)$ and $\fun{\rdistriv}(\Elec, S_2, S_1)$.
  This is possible, as the last entries can be reconstructed based on the other entries:
  For this, we will use that the following holds:
  \begin{align}
    \sum_{i=1}^r \rdistriv^{(1)}_i &= \sum_{i=1}^r\rdistriv^{(2)}_i \label{helper:expla:one}\\
    \sum_{i=1}^r \rdistrividx_i\cdot \rdistriv^{(1)}_i &= \sum_{i=1}^r \rdistrividx_i\cdot \rdistriv^{(2)}_i.\label{helper:expla:two}
  \end{align}
  Let w.l.o.g.~$\sum_{i=1}^{r-1} \rdistriv^{(1)}_i < \sum_{i=1}^{r-1}\rdistriv^{(2)}_i$ (these sums cannot be equal because of \cref{helper:expla:one} and as $\forall i\in\bcks[r]: \rdistriv^{(1)}_i \neq \rdistriv^{(2)}_i$).
  Let
  \begin{equation*}
    x\ceq \sum_{i=1}^{r-1}\rdistriv^{(2)}_i - \rdistriv^{(1)}_i.
  \end{equation*}
  Then, it follows from \cref{helper:expla:one} that $\exists d\in\Nzero$ so that $\rdistriv^{(2)}_r = d$ and $\rdistriv^{(1)}_r = x + d$.
  It follows from \cref{helper:expla:two} that
  \begin{align*}
    \rdistrividx_r\paren(\rdistriv^{(1)}_r - \rdistriv^{(2)}_r) ={}& \rdistrividx_r\cdot x\\
    ={}& \sum_{i=1}^{r-1} \rdistrividx_i\paren( \rdistriv^{(2)}_i - \rdistriv^{(1)}_i)
  \end{align*}
  and hence
  \begin{equation*}
    \rdistrividx_r = \frac{\sum_{i=1}^{r-1} \rdistrividx_i\paren( \rdistriv^{(2)}_i - \rdistriv^{(1)}_i)}{x}\eqqcolon y.
  \end{equation*}
  Based on this, it holds for $\simps$ that
  \begin{align*}
    {}&\simps(\Elec, S_2) - \simps(\Elec, S_1)\\
    ={}& -\paren(\sum_{i=1}^{r}\paren(\rdistriv^{(2)}_i - \rdistriv^{(1)}_i)\paren(\frac{\rdistrividx_i-1}{k})^2)\\
    ={}& -\paren(\sum_{i=1}^{r-1}\paren(\rdistriv^{(2)}_i - \rdistriv^{(1)}_i)\paren(\frac{\rdistrividx_i-1}{k})^2)\\
    {}& + x\paren(\frac{y-1}{k})^2\\
    \eqqcolon{}& h_{\simps}(\rdistriv^{(1)}, \rdistriv^{(2)}, \rdistrividx).
  \end{align*}
  Then
  \begin{gather*}
    e_{\simps}(\rdistriv^{(1)}, \rdistriv^{(2)}, \rdistrividx)\\
    = \begin{cases}
      \mathit{less} &\text{if } h_{\simps}(\rdistriv^{(1)}, \rdistriv^{(2)}, \rdistrividx) > 0\\
      \mathit{equal}&\text{if } h_{\simps}(\rdistriv^{(1)}, \rdistriv^{(2)}, \rdistrividx) = 0 \\
      \mathit{more} &\text{if } h_{\simps}(\rdistriv^{(1)}, \rdistriv^{(2)}, \rdistrividx) < 0
    \end{cases}
  \end{gather*}
  can be chosen as a $\simps$-explainable function, which uses at most $2r-2$ indices.

  Analogously,
  it holds for $\shann$ that
  \begin{align*}
    {}&\shann(\Elec, S_2) - \shann(\Elec, S_1)\\
    ={}& -\paren(\sum_{i=1}^{r}\paren(\rdistriv^{(2)}_i - \rdistriv^{(1)}_i) \frac{\rdistrividx_i-1}{k} \tlog{\frac{\rdistrividx_i-1}{k}})\\
    ={}& -\paren(\sum_{i=1}^{r-1}\paren(\rdistriv^{(2)}_i - \rdistriv^{(1)}_i)\frac{\rdistrividx_i-1}{k} \tlog{\frac{\rdistrividx_i-1}{k}})\\
    {}&+ x \frac{y-1}{k} \tlog{\frac{y-1}{k}} \eqqcolon h_{\shann}(\rdistriv^{(1)}, \rdistriv^{(2)}, \rdistrividx).
  \end{align*}
  Then
  \begin{gather*}
    e_{\shann}(\rdistriv^{(1)}, \rdistriv^{(2)}, \rdistrividx)\\
    = \begin{cases}
      \mathit{less} &\text{if } h_{\shann}(\rdistriv^{(1)}, \rdistriv^{(2)}, \rdistrividx) > 0\\
      \mathit{equal}&\text{if } h_{\shann}(\rdistriv^{(1)}, \rdistriv^{(2)}, \rdistrividx) = 0 \\
      \mathit{more} &\text{if } h_{\shann}(\rdistriv^{(1)}, \rdistriv^{(2)}, \rdistrividx) < 0
    \end{cases}
  \end{gather*}
  can be chosen as a $\shann$-explainable function, which uses at most $2r-2$ indices.

  Next, we show that $\simps$ and $\shann$ are not \pref[$2r-3$-explainable]{propty:obv}:
  For a $t=(\fun{\rdistriv}(\Elec, S_1, S_2),\fun{\rdistriv}(\Elec, S_2, S_1),\rdistrividx)\in T$,
  let $I_o$ be the indices \emph{not} used by $e$ in $\rdistrividx$ and $I_d$ be the indices \emph{neither} used in $\fun{\rdistriv}(\Elec, S_1, S_2)$ \emph{nor} in $\fun{\rdistriv}(\Elec, S_2, S_1)$ in the following.
  We show that $\simps$ and $\shann$ are not \pref[$2r-3$-explainable]{propty:obv} for $r=4$ (i.e.~for $\rdistriv$ vectors and $\rdistrividx$ of dimension $4$) by doing the following:
  For each possible $I_o$ and $I_d$ with $\abs{I_o}+\abs{I_d} = 3$, we specify $t_1 = (\rdistriv^{(1)}, \rdistriv^{(2)}, \rdistrividx^{(1)})\in T$ and $t_2=(\rdistriv^{(3)}, \rdistriv^{(4)}, \rdistrividx^{(2)})\in T$ so that
  \begin{enumerate}
    \item $t_1$ and $t_2$ cannot be distinguished without using indices from $I_o$ or $I_d$, i.e.~$\forall i\in\bcks[4]\setminus I_o: \rdistrividx^{(1)}_i = \rdistrividx^{(2)}_i$ and $\forall i\in\bcks[4]\setminus I_d: \rdistriv^{(1)}_i = \rdistriv^{(3)}_i$ and $\rdistriv^{(2)}_i = \rdistriv^{(4)}_i$,
    \item and for all elections $\Elec_1$ and $\Elec_2$ and $S_1,S_2\in\Rulebase(\Elec_1), S_3,S_4\in \Rulebase(\Elec_2)$ so that
    \begin{align*}
      \rdistriv^{(1)}={}&\fun{\rdistriv}(\Elec_1, S_1, S_2),\\
      \rdistriv^{(2)}={}&\fun{\rdistriv}(\Elec_1, S_2, S_1)\\
      \rdistriv^{(3)}={}&\fun{\rdistriv}(\Elec_2, S_3, S_4),\\
      \rdistriv^{(4)}={}&\fun{\rdistriv}(\Elec_2, S_4, S_3),
    \end{align*}
  \end{enumerate}
  and $\rdistrividx^{(1)}$ and $\rdistrividx^{(2)}$ being the vector of elements in ascending order of $\fun{I_R}(\Elec_1, S_1, S_2)$ and $\fun{I_R}(\Elec_2, S_3, S_4)$, respectively, it holds, for each $D\in\brcs{\simps,\shann}$, that $D(\Elec, S_1)<D(\Elec, S_2)$ and $D(\Elec, S_3)>D(\Elec, S_4)$.
  Thus, $e$ will misclassify at least one of $t_1$ and $t_2$, as it cannot distinguish between them:
  \begin{itemize}[nosep]
    % |I_d| = 3, |I_rho| = 0
    \item $I_d=\brcs{1, 2, 3}, I_o=\brcs{}$:
    \begin{align*}
      \rdistrividx^{(1)} ={}& (3, 4, 10, 11){}&  \rdistrividx^{(2)} ={}& (3, 4, 10, 11) \\
      \rdistriv^{(1)} ={}& (2, 0, 0, 2){}&  \rdistriv^{(3)} ={}& (0, 5, 0, 2)\\
      \rdistriv^{(2)} ={}& (0, 2, 2, 0){}&  \rdistriv^{(4)} ={}& (4, 0, 3, 0)
    \end{align*}
    \item $I_d=\brcs{1, 2, 4}, I_o=\brcs{}$:
    \begin{align*}
      \rdistrividx^{(1)} ={}& (3, 10, 11, 13){}&  \rdistrividx^{(2)} ={}& (3, 10, 11, 13) \\
      \rdistriv^{(1)} ={}& (1, 0, 4, 1){}&  \rdistriv^{(3)} ={}& (0, 4, 4, 0)\\
      \rdistriv^{(2)} ={}& (0, 6, 0, 0){}&  \rdistriv^{(4)} ={}& (2, 0, 0, 6)
    \end{align*}
    \item $I_d=\brcs{1, 3, 4}, I_o=\brcs{}$:
    \begin{align*}
      \rdistrividx^{(1)} ={}& (3, 4, 10, 15){}&  \rdistrividx^{(2)} ={}& (3, 4, 10, 15) \\
      \rdistriv^{(1)} ={}& (0, 8, 0, 4){}&  \rdistriv^{(3)} ={}& (0, 8, 4, 0)\\
      \rdistriv^{(2)} ={}& (4, 0, 8, 0){}&  \rdistriv^{(4)} ={}& (9, 0, 0, 3)
    \end{align*}
    \item $I_d=\brcs{2, 3, 4}, I_o=\brcs{}$:
    \begin{align*}
      \rdistrividx^{(1)} ={}& (4, 5, 8, 11){}&  \rdistrividx^{(2)} ={}& (4, 5, 8, 11) \\
      \rdistriv^{(1)} ={}& (3, 1, 0, 5){}&  \rdistriv^{(3)} ={}& (3, 0, 7, 0)\\
      \rdistriv^{(2)} ={}& (0, 0, 9, 0){}&  \rdistriv^{(4)} ={}& (0, 7, 0, 3)
    \end{align*}
    %
    % |I_d| = 2, |I_rho| = 1
    %
    \item $I_d=\brcs{1,2}, I_o=\brcs{1}$:
    \begin{align*}
      \rdistrividx^{(1)} ={}& (3, 8, 14, 15){}&  \rdistrividx^{(2)} ={}& (7, 8, 14, 15) \\
      \rdistriv^{(1)} ={}& (1, 0, 9, 0){}&  \rdistriv^{(3)} ={}& (5, 0, 9, 0)\\
      \rdistriv^{(2)} ={}& (0, 3, 0, 7){}&  \rdistriv^{(4)} ={}& (0, 7, 0, 7)
    \end{align*}
    \item $I_d=\brcs{1,2}, I_o=\brcs{2}$:
    \begin{align*}
      \rdistrividx^{(1)} ={}& (7, 9, 11, 12){}&  \rdistrividx^{(2)} ={}& (7, 8, 11, 12) \\
      \rdistriv^{(1)} ={}& (0, 4, 0, 4){}&  \rdistriv^{(3)} ={}& (0, 8, 0, 4)\\
      \rdistriv^{(2)} ={}& (1, 0, 7, 0){}&  \rdistriv^{(4)} ={}& (5, 0, 7, 0)
    \end{align*}
    \item $I_d=\brcs{1,2}, I_o=\brcs{3}$:
    \begin{align*}
      \rdistrividx^{(1)} ={}& (3, 6, 9, 12){}&  \rdistrividx^{(2)} ={}& (3, 6, 11, 12) \\
      \rdistriv^{(1)} ={}& (1, 3, 0, 5){}&  \rdistriv^{(3)} ={}& (0, 9, 0, 5)\\
      \rdistriv^{(2)} ={}& (0, 0, 9, 0){}&  \rdistriv^{(4)} ={}& (5, 0, 9, 0)
    \end{align*}
    \item $I_d=\brcs{1,2}, I_o=\brcs{4}$:
    \begin{align*}
      \rdistrividx^{(1)} ={}& (5, 8, 12, 13){}&  \rdistrividx^{(2)} ={}& (5, 8, 12, 15) \\
      \rdistriv^{(1)} ={}& (2, 0, 9, 0){}&  \rdistriv^{(3)} ={}& (0, 0, 9, 0)\\
      \rdistriv^{(2)} ={}& (0, 5, 0, 6){}&  \rdistriv^{(4)} ={}& (2, 1, 0, 6)
    \end{align*}
    \item $I_d=\brcs{1,3}, I_o=\brcs{1}$:
    \begin{align*}
      \rdistrividx^{(1)} ={}& (6, 7, 12, 15){}&  \rdistrividx^{(2)} ={}& (3, 7, 12, 15) \\
      \rdistriv^{(1)} ={}& (0, 9, 0, 3){}&  \rdistriv^{(3)} ={}& (0, 9, 0, 3)\\
      \rdistriv^{(2)} ={}& (6, 0, 6, 0){}&  \rdistriv^{(4)} ={}& (4, 0, 8, 0)
    \end{align*}
    \item $I_d=\brcs{1,3}, I_o=\brcs{1}$:
    \begin{align*}
      \rdistrividx^{(1)} ={}& (6, 7, 12, 15){}&  \rdistrividx^{(2)} ={}& (3, 7, 12, 15) \\
      \rdistriv^{(1)} ={}& (0, 9, 0, 3){}&  \rdistriv^{(3)} ={}& (0, 9, 0, 3)\\
      \rdistriv^{(2)} ={}& (6, 0, 6, 0){}&  \rdistriv^{(4)} ={}& (4, 0, 8, 0)
    \end{align*}
    \item $I_d=\brcs{1,3}, I_o=\brcs{2}$:
    \begin{align*}
      \rdistrividx^{(1)} ={}& (3, 9, 13, 15){}&  \rdistrividx^{(2)} ={}& (3, 4, 13, 15) \\
      \rdistriv^{(1)} ={}& (1, 0, 6, 0){}&  \rdistriv^{(3)} ={}& (3, 0, 4, 0)\\
      \rdistriv^{(2)} ={}& (0, 4, 0, 3){}&  \rdistriv^{(4)} ={}& (0, 4, 0, 3)
    \end{align*}
    \item $I_d=\brcs{1,3}, I_o=\brcs{3}$:
    \begin{align*}
      \rdistrividx^{(1)} ={}& (5, 8, 10, 15){}&  \rdistrividx^{(2)} ={}& (5, 8, 14, 15) \\
      \rdistriv^{(1)} ={}& (1, 5, 0, 3){}&  \rdistriv^{(3)} ={}& (0, 5, 0, 3)\\
      \rdistriv^{(2)} ={}& (0, 0, 9, 0){}&  \rdistriv^{(4)} ={}& (3, 0, 5, 0)
    \end{align*}
    \item $I_d=\brcs{1,3}, I_o=\brcs{4}$:
    \begin{align*}
      \rdistrividx^{(1)} ={}& (3, 4, 6, 10){}&  \rdistrividx^{(2)} ={}& (3, 4, 6, 7) \\
      \rdistriv^{(1)} ={}& (0, 5, 0, 1){}&  \rdistriv^{(3)} ={}& (0, 5, 0, 1)\\
      \rdistriv^{(2)} ={}& (2, 0, 4, 0){}&  \rdistriv^{(4)} ={}& (3, 0, 3, 0)
    \end{align*}
    \item $I_d=\brcs{1,4}, I_o=\brcs{1}$:
    \begin{align*}
      \rdistrividx^{(1)} ={}& (6, 7, 11, 15){}&  \rdistrividx^{(2)} ={}& (3, 7, 11, 15) \\
      \rdistriv^{(1)} ={}& (0, 7, 0, 2){}&  \rdistriv^{(3)} ={}& (0, 7, 0, 1)\\
      \rdistriv^{(2)} ={}& (4, 0, 5, 0){}&  \rdistriv^{(4)} ={}& (3, 0, 5, 0)
    \end{align*}
    \item $I_d=\brcs{1,4}, I_o=\brcs{2}$:
    \begin{align*}
      \rdistrividx^{(1)} ={}& (3, 7, 8, 15){}&  \rdistrividx^{(2)} ={}& (3, 4, 8, 15) \\
      \rdistriv^{(1)} ={}& (3, 0, 4, 1){}&  \rdistriv^{(3)} ={}& (5, 0, 4, 0)\\
      \rdistriv^{(2)} ={}& (0, 8, 0, 0){}&  \rdistriv^{(4)} ={}& (0, 8, 0, 1)
    \end{align*}
    \item $I_d=\brcs{1,4}, I_o=\brcs{3}$:
    \begin{align*}
      \rdistrividx^{(1)} ={}& (3, 10, 14, 15){}&  \rdistrividx^{(2)} ={}& (3, 10, 11, 15) \\
      \rdistriv^{(1)} ={}& (1, 0, 8, 0){}&  \rdistriv^{(3)} ={}& (0, 0, 8, 0)\\
      \rdistriv^{(2)} ={}& (0, 4, 0, 5){}&  \rdistriv^{(4)} ={}& (1, 4, 0, 3)
    \end{align*}
    \item $I_d=\brcs{1,4}, I_o=\brcs{4}$:
    \begin{align*}
      \rdistrividx^{(1)} ={}& (3, 4, 8, 9){}&  \rdistrividx^{(2)} ={}& (3, 4, 8, 15) \\
      \rdistriv^{(1)} ={}& (6, 0, 4, 0){}&  \rdistriv^{(3)} ={}& (5, 0, 4, 0)\\
      \rdistriv^{(2)} ={}& (0, 8, 0, 2){}&  \rdistriv^{(4)} ={}& (0, 8, 0, 1)
    \end{align*}
    \item $I_d=\brcs{2,3}, I_o=\brcs{1}$:
    \begin{align*}
      \rdistrividx^{(1)} ={}& (5, 6, 9, 11){}&  \rdistrividx^{(2)} ={}& (3, 6, 9, 11) \\
      \rdistriv^{(1)} ={}& (0, 6, 0, 3){}&  \rdistriv^{(3)} ={}& (0, 8, 0, 3)\\
      \rdistriv^{(2)} ={}& (3, 0, 6, 0){}&  \rdistriv^{(4)} ={}& (3, 0, 8, 0)
    \end{align*}
    \item $I_d=\brcs{2,3}, I_o=\brcs{2}$:
    \begin{align*}
      \rdistrividx^{(1)} ={}& (3, 4, 10, 14){}&  \rdistrividx^{(2)} ={}& (3, 8, 10, 14) \\
      \rdistriv^{(1)} ={}& (0, 3, 0, 1){}&  \rdistriv^{(3)} ={}& (0, 9, 0, 1)\\
      \rdistriv^{(2)} ={}& (2, 0, 2, 0){}&  \rdistriv^{(4)} ={}& (2, 0, 8, 0)
    \end{align*}
    \item $I_d=\brcs{2,3}, I_o=\brcs{3}$:
    \begin{align*}
      \rdistrividx^{(1)} ={}& (4, 8, 10, 15){}&  \rdistrividx^{(2)} ={}& (4, 8, 14, 15) \\
      \rdistriv^{(1)} ={}& (0, 8, 0, 2){}&  \rdistriv^{(3)} ={}& (0, 2, 0, 2)\\
      \rdistriv^{(2)} ={}& (1, 0, 9, 0){}&  \rdistriv^{(4)} ={}& (1, 0, 3, 0)
    \end{align*}
    \item $I_d=\brcs{2,3}, I_o=\brcs{4}$:
    \begin{align*}
      \rdistrividx^{(1)} ={}& (3, 6, 9, 12){}&  \rdistrividx^{(2)} ={}& (3, 6, 9, 10) \\
      \rdistriv^{(1)} ={}& (0, 5, 0, 3){}&  \rdistriv^{(3)} ={}& (0, 3, 0, 3)\\
      \rdistriv^{(2)} ={}& (1, 0, 7, 0){}&  \rdistriv^{(4)} ={}& (1, 0, 5, 0)
    \end{align*}
    \item $I_d=\brcs{2,4}, I_o=\brcs{1}$:
    \begin{align*}
      \rdistrividx^{(1)} ={}& (10, 11, 12, 13){}&  \rdistrividx^{(2)} ={}& (8, 11, 12, 13) \\
      \rdistriv^{(1)} ={}& (0, 5, 0, 3){}&  \rdistriv^{(3)} ={}& (0, 6, 0, 2)\\
      \rdistriv^{(2)} ={}& (1, 0, 7, 0){}&  \rdistriv^{(4)} ={}& (1, 0, 7, 0)
    \end{align*}
    \item $I_d=\brcs{2,4}, I_o=\brcs{2}$:
    \begin{align*}
      \rdistrividx^{(1)} ={}& (3, 8, 10, 12){}&  \rdistrividx^{(2)} ={}& (3, 4, 10, 12) \\
      \rdistriv^{(1)} ={}& (2, 0, 7, 0){}&  \rdistriv^{(3)} ={}& (2, 0, 7, 0)\\
      \rdistriv^{(2)} ={}& (0, 8, 0, 1){}&  \rdistriv^{(4)} ={}& (0, 4, 0, 5)
    \end{align*}
    \item $I_d=\brcs{2,4}, I_o=\brcs{3}$:
    \begin{align*}
      \rdistrividx^{(1)} ={}& (4, 5, 14, 15){}&  \rdistrividx^{(2)} ={}& (4, 5, 10, 15) \\
      \rdistriv^{(1)} ={}& (5, 0, 5, 0){}&  \rdistriv^{(3)} ={}& (5, 0, 5, 0)\\
      \rdistriv^{(2)} ={}& (0, 6, 0, 4){}&  \rdistriv^{(4)} ={}& (0, 8, 0, 2)
    \end{align*}
    \item $I_d=\brcs{2,4}, I_o=\brcs{4}$:
    \begin{align*}
      \rdistrividx^{(1)} ={}& (3, 6, 8, 9){}&  \rdistrividx^{(2)} ={}& (3, 6, 8, 15) \\
      \rdistriv^{(1)} ={}& (1, 0, 6, 0){}&  \rdistriv^{(3)} ={}& (1, 0, 6, 0)\\
      \rdistriv^{(2)} ={}& (0, 4, 0, 3){}&  \rdistriv^{(4)} ={}& (0, 6, 0, 1)
    \end{align*}
    \item $I_d=\brcs{3,4}, I_o=\brcs{1}$:
    \begin{align*}
      \rdistrividx^{(1)} ={}& (5, 7, 11, 15){}&  \rdistrividx^{(2)} ={}& (6, 7, 11, 15) \\
      \rdistriv^{(1)} ={}& (4, 0, 4, 0){}&  \rdistriv^{(3)} ={}& (4, 0, 5, 0)\\
      \rdistriv^{(2)} ={}& (0, 7, 0, 1){}&  \rdistriv^{(4)} ={}& (0, 7, 0, 2)
    \end{align*}
    \item $I_d=\brcs{3,4}, I_o=\brcs{2}$:
    \begin{align*}
      \rdistrividx^{(1)} ={}& (3, 4, 7, 11){}&  \rdistrividx^{(2)} ={}& (3, 6, 7, 11) \\
      \rdistriv^{(1)} ={}& (0, 4, 0, 1){}&  \rdistriv^{(3)} ={}& (0, 4, 0, 0)\\
      \rdistriv^{(2)} ={}& (2, 0, 3, 0){}&  \rdistriv^{(4)} ={}& (2, 0, 1, 1)
    \end{align*}
    \item $I_d=\brcs{3,4}, I_o=\brcs{3}$:
    \begin{align*}
      \rdistrividx^{(1)} ={}& (6, 9, 11, 15){}&  \rdistrividx^{(2)} ={}& (6, 9, 12, 15) \\
      \rdistriv^{(1)} ={}& (2, 0, 6, 0){}&  \rdistriv^{(3)} ={}& (2, 0, 8, 0)\\
      \rdistriv^{(2)} ={}& (0, 7, 0, 1){}&  \rdistriv^{(4)} ={}& (0, 7, 0, 3)
    \end{align*}
    \item $I_d=\brcs{3,4}, I_o=\brcs{4}$:
    \begin{align*}
      \rdistrividx^{(1)} ={}& (6, 8, 10, 14){}&  \rdistrividx^{(2)} ={}& (6, 8, 10, 11) \\
      \rdistriv^{(1)} ={}& (0, 6, 0, 1){}&  \rdistriv^{(3)} ={}& (0, 6, 0, 4)\\
      \rdistriv^{(2)} ={}& (2, 0, 5, 0){}&  \rdistriv^{(4)} ={}& (2, 0, 8, 0)
    \end{align*}
    %
    % |I_d| = 1, |I_rho| = 2
    %
    \item $I_d=\brcs{1}$ or $I_d=\brcs{2}$ or $I_d=\brcs{3}$ or $I_d=\brcs{4}$, $I_o=\brcs{1,2}$:
    \begin{align*}
      \rdistrividx^{(1)} ={}& (6, 7, 12, 15) & \rdistrividx^{(2)} ={}& (3, 5, 12, 15) \\
      \rdistriv^{(1)} ={}& (0, 6, 0, 2) & \rdistriv^{(3)} ={}& (0, 6, 0, 2) \\
      \rdistriv^{(2)} ={}& (4, 0, 4, 0) & \rdistriv^{(4)} ={}& (4, 0, 4, 0)
    \end{align*}
    \item $I_d=\brcs{1}$ or $I_d=\brcs{2}$ or $I_d=\brcs{3}$ or $I_d=\brcs{4}$, $I_o=\brcs{1,3}$:
    \begin{align*}
      \rdistrividx^{(1)} ={}& (5, 6, 10, 13) & \rdistrividx^{(2)} ={}& (3, 6, 11, 13) \\
      \rdistriv^{(1)} ={}& (0, 8, 0, 4) & \rdistriv^{(3)} ={}& (0, 8, 0, 4) \\
      \rdistriv^{(2)} ={}& (4, 0, 8, 0) & \rdistriv^{(4)} ={}& (4, 0, 8, 0) 
    \end{align*}
    \item $I_d=\brcs{1}$ or $I_d=\brcs{2}$ or $I_d=\brcs{3}$ or $I_d=\brcs{4}$, $I_o=\brcs{1,4}$:
    \begin{align*}
      \rdistrividx^{(1)} ={}& (3, 6, 8, 9) & \rdistrividx^{(2)} ={}& (5, 6, 8, 11) \\
      \rdistriv^{(1)} ={}& (2, 0, 6, 0) & \rdistriv^{(3)} ={}& (2, 0, 6, 0) \\
      \rdistriv^{(2)} ={}& (0, 6, 0, 2) & \rdistriv^{(4)} ={}& (0, 6, 0, 2)
    \end{align*}
    \item $I_d=\brcs{1}$ or $I_d=\brcs{2}$ or $I_d=\brcs{3}$ or $I_d=\brcs{4}$, $I_o=\brcs{2,3}$:
    \begin{align*}
      \rdistrividx^{(1)} ={}& (3, 4, 10, 13) & \rdistrividx^{(2)} ={}& (3, 6, 12, 13) \\
      \rdistriv^{(1)} ={}& (0, 5, 0, 3) & \rdistriv^{(3)} ={}& (0, 5, 0, 3) \\
      \rdistriv^{(2)} ={}& (3, 0, 5, 0) & \rdistriv^{(4)} ={}& (3, 0, 5, 0) 
    \end{align*}
    \item $I_d=\brcs{1}$ or $I_d=\brcs{2}$ or $I_d=\brcs{3}$ or $I_d=\brcs{4}$, $I_o=\brcs{2,4}$:
    \begin{align*}
      \rdistrividx^{(1)} ={}& (3, 6, 10, 11) & \rdistrividx^{(2)} ={}& (3, 4, 10, 15) \\
      \rdistriv^{(1)} ={}& (2, 0, 4, 0) & \rdistriv^{(3)} ={}& (2, 0, 4, 0) \\
      \rdistriv^{(2)} ={}& (0, 4, 0, 2) & \rdistriv^{(4)} ={}& (0, 4, 0, 2)  
    \end{align*}
    \item $I_d=\brcs{1}$ or $I_d=\brcs{2}$ or $I_d=\brcs{3}$ or $I_d=\brcs{4}$, $I_o=\brcs{3,4}$:
    \begin{align*}
      \rdistrividx^{(1)} ={}& (3, 4, 6, 12) & \rdistrividx^{(2)} ={}& (3, 4, 5, 6) \\
      \rdistriv^{(1)} ={}& (0, 9, 0, 1) & \rdistriv^{(3)} ={}& (0, 9, 0, 1) \\
      \rdistriv^{(2)} ={}& (4, 0, 6, 0) & \rdistriv^{(4)} ={}& (4, 0, 6, 0)  
    \end{align*}
    %
    % |I_d| = 0, |I_rho| = 3
    % 
    \item $I_d=\brcs{}$, $I_o=\brcs{1,2,3}$:
    \begin{align*}
      \rdistrividx^{(1)} ={}& (6, 9, 14, 15) & \rdistrividx^{(2)} ={}& (4, 5, 12, 15) \\
      \rdistriv^{(1)} ={}& (3, 0, 9, 0) & \rdistriv^{(3)} ={}& (3, 0, 9, 0) \\
      \rdistriv^{(2)} ={}& (0, 6, 0, 6) & \rdistriv^{(4)} ={}& (0, 6, 0, 6) 
    \end{align*}
    \item $I_d=\brcs{}$, $I_o=\brcs{1,2,4}$:
    \begin{align*}
      \rdistrividx^{(1)} ={}& (3, 6, 7, 9) & \rdistrividx^{(2)} ={}& (3, 5, 7, 15) \\
      \rdistriv^{(1)} ={}& (1, 0, 6, 0) & \rdistriv^{(3)} ={}& (1, 0, 6, 0) \\
      \rdistriv^{(2)} ={}& (0, 6, 0, 1) & \rdistriv^{(4)} ={}& (0, 6, 0, 1) 
    \end{align*}
    \item $I_d=\brcs{}$, $I_o=\brcs{1,3,4}$:
    \begin{align*}
      \rdistrividx^{(1)} ={}& (3, 5, 8, 10) & \rdistrividx^{(2)} ={}& (4, 5, 8, 15) \\
      \rdistriv^{(1)} ={}& (5, 0, 5, 0) & \rdistriv^{(3)} ={}& (5, 0, 5, 0) \\
      \rdistriv^{(2)} ={}& (0, 9, 0, 1) & \rdistriv^{(4)} ={}& (0, 9, 0, 1) 
    \end{align*}
    \item $I_d=\brcs{}$, $I_o=\brcs{2,3,4}$:
    \begin{align*}
      \rdistrividx^{(1)} ={}& (3, 5, 9, 14) & \rdistrividx^{(2)} ={}& (3, 10, 13, 14) \\
      \rdistriv^{(1)} ={}& (0, 4, 0, 2) & \rdistriv^{(3)} ={}& (0, 4, 0, 2) \\
      \rdistriv^{(2)} ={}& (1, 0, 5, 0) & \rdistriv^{(4)} ={}& (1, 0, 5, 0) 
    \end{align*}
  \end{itemize}
}

To conclude this section, we answer whether the behavior of $\dns$---which is the only index fulfilling all the properties introduced---can be characterized by (some of) the properties introduced so far:
\begin{theorem}[\appref{theorem:charac}]
  \label{theorem:charac}
  $\dns$ is characterized by \pref[$1$-Explainability]{propty:obv} and \pref{propty:pfgpmax}.
\end{theorem}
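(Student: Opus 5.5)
The statement is read as follows: a diversity index $D$ is \pref[$1$-explainable]{propty:obv} and satisfies \pref{propty:pfgpmax} if and only if, for every election $\Elec$ and all $S_1,S_2\in\Rulebase(\Elec)$, $D(\Elec,S_1)\diamond D(\Elec,S_2) \iff \dns(\Elec,S_1)\diamond\dns(\Elec,S_2)$ for each $\diamond\in\brcs{<,=,>}$. The direction ``$\dns$ has both properties'' is already available: part~(1) of \cref{theorem:explain:bounds} gives $1$-explainability, and \cref{obs:prop:pfgpmax} gives \pref{propty:pfgpmax}. It also suffices for any index that induces the same order as $\dns$, since both properties depend only on the induced order. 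So the work is the converse.

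Fix $D$ and a $D$-explainable function $e$ that uses at most one index for every argument of dimension $r\geq 1$. First I will note which dimensions occur. The vectors $\distri(\Elec,S_1)$ and $\distri(\Elec,S_2)$ have equal entry sums ($m$) and equal weighted sums ($k$). Hence their difference has at least three nonzero entries, so $r\geq 3$ whenever the vectors differ.

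\textbf{Step 1: the used index lies in the $\rdistriv$ vectors.} Suppose that for some $r$ the only index $e$ uses lies in $\rdistrividx$, or that $e$ uses none. Swapping $S_1$ and $S_2$ produces the argument $(b,a,c)$ from $(a,b,c)$ with the same $\rdistrividx=c$. Then $e$ returns the same value on both, but it must return \emph{less} on one iff it returns \emph{more} on the other. So $e$ outputs \emph{equal} on every argument of dimension $r$. To contradict this, I exhibit, for each $r\geq 3$, two committees of the same election whose $\distri$ vectors differ in exactly $r$ entries, including entry~$1$. \pref{propty:pfgpmax} forces them to have different diversity, which is the contradiction. The construction uses many labels and a committee size $k$ chosen large enough to realize $r$ differing entries. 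Thus for every realizable $r$ there is a single position $j_r$ such that $e$ depends only on $(\rdistriv^{(1)}_{j_r},\rdistriv^{(2)}_{j_r})$; it may additionally use nothing from $\rdistrividx$.

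\textbf{Step 2: $j_r=1$.} This is the main obstacle. Assume $j_r=j>1$. I will construct two pairs of committees, each with $\rdistrividx_1=1$ so that \pref{propty:pfgpmax} applies, and with identical values at position $j$ of both $\rdistriv$ vectors. In one pair the first entries satisfy $a_1<b_1$ (so $D$ must say \emph{more}); in the other, $a_1>b_1$ (so $D$ must say \emph{less}). Since $e$ sees only position $j$, it returns the same answer on both pairs, a contradiction.

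The constructions must respect the two conservation constraints on differences of $\distri$ vectors. I will satisfy these by placing compensating changes at the positions after $j$, or at position~$2$ when $j=r$. The free parameters are $m$, $k$ and the multiplicities. I would first try a generic gadget: take committees differing by ``move one candidate from a label of count $x$ to a label of count $y$'', and combine several such moves to adjust the relevant entries independently. I would check the boundary cases $j=2$ and $j=r$ separately, since these are where the available slack is smallest.

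\textbf{Step 3: conclude.} Now $e$ is a function of $(a_1,b_1)$ alone, and $a_1\neq b_1$ always holds when $r\geq1$. For any values $a_1\neq b_1$, I realize them by a pair of committees with $\rdistrividx_1=1$. \pref{propty:pfgpmax} then fixes $e=\mathit{more}$ iff $a_1<b_1$ (and \emph{less} otherwise), for all such value pairs regardless of $\rdistrividx$. Also, $e=\mathit{equal}$ exactly when the $\distri$ vectors coincide. So $e$ coincides with $e_{\dns}$. By the equivalence $\rdistriv^{(1)}_1<\rdistriv^{(2)}_1\iff\dns(\Elec,S_1)>\dns(\Elec,S_2)$ from the proof of \cref{theorem:explain:bounds}, $D$ induces the same order as $\dns$ on every election.
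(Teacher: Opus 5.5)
Your Steps~0 and~1 are fine. For dimensions $r\geq 4$ your plan is the paper's plan: explicit tuple pairs rule out every position $l\neq 1$, and then each first-entry pair is realized with $\rdistrividx_1=1$. The gap is Step~2 at $r=3$. There the claim $j_3=1$ is false, and the tuples you intend to build do not exist.

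Write $\Delta_i=\rdistriv^{(1)}_i-\rdistriv^{(2)}_i$. For $r=3$, the identities $\sum_i\Delta_i=0$ and $\sum_i\rdistrividx_i\Delta_i=0$ give $(\rdistrividx_3-\rdistrividx_2)\Delta_3=(\rdistrividx_2-\rdistrividx_1)\Delta_1$ and $\Delta_2=-\Delta_1-\Delta_3$. Hence $\rdistriv^{(1)}_1<\rdistriv^{(2)}_1\iff\rdistriv^{(1)}_2>\rdistriv^{(2)}_2\iff\rdistriv^{(1)}_3<\rdistriv^{(2)}_3$.

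So two dimension-$3$ tuples that agree at position $2$ or $3$ always have the same sign at position $1$, and your contradiction cannot arise. In fact, deciding by position $3$ alone on dimension $3$ (\emph{more} iff $\rdistriv^{(1)}_3<\rdistriv^{(2)}_3$, \emph{less} otherwise), and using $e_{\dns}$ elsewhere, is a legitimate one-index $\dns$-explainable function. So $j_3\in\{2,3\}$ genuinely occurs, even for $\dns$ itself. Your Step~3 assumes $e$ reads only the first entries, so it does not cover this dimension.

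To close the gap, treat $r=3$ separately, as the paper does. After proving the sign coupling, show for each possible used position $j$ that every value pair occurring at position $j$ in dimension $3$ is also realized by some tuple with $\rdistrividx_1=1$. For $j=3$ and values $(x+d,x)$, take for example $\rdistriv^{(1)}=(d,0,x+d)$, $\rdistriv^{(2)}=(0,2d,x)$, $\rdistrividx=(1,2,3)$. For $j=2$, use that $|\Delta_2|\geq 2$ is forced. Present Label Maximization then fixes the orientation of $e$ at position $j$, and the coupling shows the result equals $e_{\dns}$.

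Two smaller points:
\begin{itemize}
\item In Step~3, each pair $(x,x+d)$ with $\rdistrividx_1=1$ must be realized within the same dimension $r$, because the index sets and the behaviour of $e$ may depend on $r$. The paper does this by a case distinction on $d$ versus $r-3$.
\item The constructions in Steps~2 and~3 are only announced in your write-up, yet they make up most of the actual proof.
\end{itemize}
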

\appendixproof{theorem:charac}{
  In the following, we will use that the following equations hold for each $(\rdistriv^{(1)},\rdistriv^{(2)},\rdistrividx)\in T$:
  \begin{align}
    \sum_{i=1}^r \rdistriv^{(1)}_i &= \sum_{i=1}^r\rdistriv^{(2)}_i \label{helper:charac:one}\\
    \sum_{i=1}^r \rdistrividx_i\cdot \rdistriv^{(1)}_i &= \sum_{i=1}^r \rdistrividx_i\cdot \rdistriv^{(2)}_i \label{helper:charac:two}.
  \end{align}
  First, note that $r \in\brcs{1,2}$ is not possible:
  Assume that $r=1$ is possible (i.e.~$\dim(\rdistriv^{(1)})=\dim(\rdistriv^{(2)}) = 1$) and let w.l.o.g. $\rdistriv^{(1)}_1>\rdistriv^{(2)}_1$.
  However, there must be another index $r_2$ so that $\rdistrividx_2 = r_2$ and $\rdistriv^{(1)}_2<\rdistriv^{(2)}_2$ because of \cref{helper:charac:one}, i.e.~$r$ needs to be at least two.
  W.l.o.g., let $\rdistrividx_1 < \rdistrividx_2$.
  Assume that $r=2$ is possible and thus $d\ceq \rdistriv^{(1)}_1-\rdistriv^{(2)}_1 = \rdistriv^{(2)}_2-\rdistriv^{(1)}_2>0$ because of \cref{helper:charac:one}.
  Then, it holds that
  \begin{align*}
    {}&\sum_{i=1}^r \rdistrividx_i\cdot \paren(\rdistriv^{(1)}_i-\rdistriv^{(2)}_i) \\
    ={}& \rdistrividx_1 \cdot d - \rdistrividx_2 \cdot d = d \cdot \paren(\rdistrividx_1-\rdistrividx_2) \neq 0,
  \end{align*}
  a contradiction to \cref*{helper:charac:two}.
  Thus, $r\in\brcs{1,2}$ is not possible.

  Secondly, note that a function explaining a diversity index that fulfills \pref[$1$-Explainability]{propty:obv} and \pref{propty:pfgpmax} clearly does not use one index from the third argument, i.e.~from $\rdistrividx$:
  If it would, it could not distinguish between $(a,b,c),(b,a,c) \in T$ and would hence classify everything as equally diverse.

  Next, for $r\geq 4$, we show that a diversity index fulfilling \pref[$1$-Explainability]{propty:obv} and \pref{propty:pfgpmax} uses only index $1$ of the $\rdistriv$ vectors and classifies the committee with a smaller value at index $1$ of the $\rdistriv$ vector as more diverse---note that this means that two committees are classified as equally diverse if and only if the $\rdistriv$ vectors are empty.

  To prove this, we first give, for each $l\in\bcks[r]\setminus\brcs{1}$,
  \begin{align*}
    t_1 ={}& (\rdistriv^{(1)}, \rdistriv^{(2)}, \rdistrividx^{(1)})\in T\\
    t_2 = {}& (\rdistriv^{(3)}, \rdistriv^{(4)}, \rdistrividx^{(2)})\in T
  \end{align*}
  so that
  \begin{enumerate}
    \item $t_1$ and $t_2$ cannot be distinguished by looking only at index $l$ in the $\rdistriv$ vectors, i.e.~$\rdistriv^{(1)}_l = \rdistriv^{(3)}_l$ and $\rdistriv^{(2)}_l = \rdistriv^{(4)}_l$,
    \item and for all elections $\Elec_1$ and $\Elec_2$ and $S_1,S_2\in\Rulebase(\Elec_1), S_3,S_4\in \Rulebase(\Elec_2)$ so that
    \begin{align*}
      \rdistriv^{(1)}={}&\fun{\rdistriv}(\Elec_1, S_1, S_2),\\
      \rdistriv^{(2)}={}&\fun{\rdistriv}(\Elec_1, S_2, S_1),\\
      \rdistriv^{(3)}={}&\fun{\rdistriv}(\Elec_2, S_3, S_4),\\
      \rdistriv^{(4)}={}&\fun{\rdistriv}(\Elec_2, S_4, S_3),
    \end{align*}
    \item and so that $D(\Elec, S_1)>D(\Elec, S_2)$ and $D(\Elec, S_3)<D(\Elec, S_4)$ for an index $D$ satisfying \pref{propty:pfgpmax}.
  \end{enumerate}
  Thus, at least one of $t_1$ and $t_2$ would be misclassified.
  For this, we consider for each $l$
  \begin{align*}
    \rdistriv^{(1)}_1 &= 0 &\rdistriv^{(2)}_1 &= r-3\\
    \forall j\in\brcs{2,\dots, r-1}: \rdistriv^{(1)}_j &= 1 &\rdistriv^{(2)}_j &= 0\\
    \rdistriv^{(1)}_r &= 0 &\rdistriv^{(2)}_r &= 1
  \end{align*}
  with $\forall j \in\bcks[r-1]: \rdistrividx_j = j$, and $\rdistrividx_r = 1+\sum_{j=2}^{r-1} j-1$.
  Thus, $\fun{D}(\Elec_1, S_1) > \fun{D}(\Elec_1, S_2)$.
  \begin{itemize}[nosep]
    \item For $l\in\bcks[r]\setminus\brcs{1,2,r-1}$, let
      \begin{align*}
        \rdistriv^{(3)}_1 &= 1 & \rdistriv^{(4)}_1 &= 0  \\
        \rdistriv^{(3)}_2 &= 0 & \rdistriv^{(4)}_2 &= r-1  \\
        \forall j\in\brcs{3,\dots, r-2}: \rdistriv^{(3)}_j &= 1 & \rdistriv^{(4)}_j &= 0 \\
        \rdistriv^{(3)}_{r-1} &= 3 & \rdistriv^{(4)}_{r-1} &= 0 \\
        \rdistriv^{(3)}_r &= 0 & \rdistriv^{(4)}_r &= 1
      \end{align*}
      with $\forall j \in\bcks[r-1]: \rdistrividx_j = j$ and $\rdistrividx_r = 1+\paren(\sum_{j=3}^{r-2} j-1) + 3\paren(r-2) -\paren(r-1)$.
      Thus, $\fun{D}(\Elec_2, S_4) > \fun{D}(\Elec_2, S_3)$.
      However, $\forall l \in\bcks[r]\setminus\brcs{1,2,r-1,r}: \rdistriv^{(1)}_l = \rdistriv^{(3)}_l = 1 > \rdistriv^{(2)}_l = \rdistriv^{(4)}_l=0$ and $\rdistriv^{(1)}_r = \rdistriv^{(3)}_r = 0 < \rdistriv^{(2)}_r = \rdistriv^{(4)}_r=1$.
    \item For $l=2$:
      \begin{itemize}
        \item If $r=4$: Let
        \begin{align*}
          \rdistriv^{(3)}_1 &= 1 & \rdistriv^{(4)}_1 &= 0  \\
          \rdistriv^{(3)}_2 &= 1 & \rdistriv^{(4)}_2 &= 0  \\
          \rdistriv^{(3)}_3 &= 0 & \rdistriv^{(4)}_3 &= 3  \\
          \rdistriv^{(3)}_4 &= 1 & \rdistriv^{(4)}_4 &= 0
        \end{align*}
        with $\forall j \in\bcks[3]: \rdistrividx_j = j$ and $\rdistrividx_4 = 6$.
        Thus, $\fun{D}(\Elec_2, S_4) > \fun{D}(\Elec_2, S_3)$.
        However, $\rdistriv^{(1)}_2 = \rdistriv^{(3)}_2 = 1 > \rdistriv^{(2)}_2 = \rdistriv^{(4)}_2=0$.
        \item If $r=5$: Let
        \begin{align*}
          \rdistriv^{(3)}_1 &= 1 & \rdistriv^{(4)}_1 &= 0  \\
          \rdistriv^{(3)}_2 &= 1 & \rdistriv^{(4)}_2 &= 0  \\
          \rdistriv^{(3)}_3 &= 0 & \rdistriv^{(4)}_3 &= 2  \\
          \rdistriv^{(3)}_4 &= 0 & \rdistriv^{(4)}_4 &= 1  \\
          \rdistriv^{(3)}_5 &= 1 & \rdistriv^{(4)}_5 &= 0
        \end{align*}
        with $\forall j \in\bcks[4]: \rdistrividx_j = j$ and $\rdistrividx_5 = 7$.
        Thus, $\fun{D}(\Elec_2, S_4) > \fun{D}(\Elec_2, S_3)$.
        However, $\rdistriv^{(1)}_2 = \rdistriv^{(3)}_2 = 1 > \rdistriv^{(2)}_2 = \rdistriv^{(4)}_2=0$.
        \item If $r\geq 6$: Let $R=\brcs{3,\dots, r-1}$ and
        \begin{align*}
          \rdistriv^{(3)}_1 &= r-5 & \rdistriv^{(4)}_1 &= 0  \\
          \rdistriv^{(3)}_2 &= 1 & \rdistriv^{(4)}_2 &= 0  \\
          \forall j\in R: \rdistriv^{(3)}_j &= 0 & \rdistriv^{(4)}_j &= 1  \\
          \rdistriv^{(3)}_r &= 1 & \rdistriv^{(4)}_r &= 0
        \end{align*}
        with $\forall j \in\bcks[r-1]: \rdistrividx_j = j$ and $\rdistrividx_r = \sum_{j=3}^{r-1} j-1$.
        Thus, $\fun{D}(\Elec_2, S_4) > \fun{D}(\Elec_2, S_3)$.
        However, $\rdistriv^{(1)}_2 = \rdistriv^{(3)}_2 = 1 > \rdistriv^{(2)}_2 = \rdistriv^{(4)}_2=0$.
      \end{itemize}
    \item For $l=r-1$:
      \begin{itemize}
        \item If $r=4$: Let
        \begin{align*}
          \rdistriv^{(3)}_1 &= 1 & \rdistriv^{(4)}_1 &= 0 & \rdistrividx_1 &= 1 \\
          \rdistriv^{(3)}_2 &= 0 & \rdistriv^{(4)}_2 &= 3 & \rdistrividx_2 &= 5\\
          \rdistriv^{(3)}_3 &= 1 & \rdistriv^{(4)}_3 &= 0 & \rdistrividx_3 &= 6\\
          \rdistriv^{(3)}_4 &= 1 & \rdistriv^{(4)}_4 &= 0 & \rdistrividx_4 &= 8
        \end{align*}
        Thus, $\fun{D}(\Elec_2, S_4) > \fun{D}(\Elec_2, S_3)$.
        However, $\rdistriv^{(1)}_3 = \rdistriv^{(3)}_3 = 1 > \rdistriv^{(2)}_3 = \rdistriv^{(4)}_3=0$.
        \item If $r\geq 5$: Let $R=\brcs{3,\dots, r-1}$ and
        \begin{align*}
          \rdistriv^{(3)}_1 &= 1 & \rdistriv^{(4)}_1 &= 0  \\
          \rdistriv^{(3)}_2 &= 0 & \rdistriv^{(4)}_2 &= r-3  \\
          \forall j\in R: \rdistriv^{(3)}_j &= 1 & \rdistriv^{(4)}_j &= 0  \\
          \rdistriv^{(3)}_r &= 0 & \rdistriv^{(4)}_r &= 1
        \end{align*}
        with $\forall j \in\bcks[2]: \rdistrividx_j = j$, $\forall j \in\bcks[r-1]\setminus\bcks[2]: \rdistrividx_j = r + j - 4$ and $\rdistrividx_r = -r + 4 + \sum_{j=3}^{r-1} r - 5 + j$.
        Thus, $\fun{D}(\Elec_2, S_4) > \fun{D}(\Elec_2, S_3)$.
        However, $\rdistriv^{(1)}_{r-1} = \rdistriv^{(3)}_{r-1} = 1 > \rdistriv^{(2)}_{r-1} = \rdistriv^{(4)}_{r-1}=0$.
      \end{itemize}
  \end{itemize}
  Thus, a $D$-explainable function $e_D$ for a diversity index $D$ that is \pref[$1$-explainable]{propty:obv} and that fulfills \pref{propty:pfgpmax} uses only at most $\rdistriv^{(1)}_1$ and $\rdistriv^{(2)}_1$ for $t=(\rdistriv^{(1)},\rdistriv^{(2)},\rdistrividx)\in T$.
  As $D$ fulfills \pref{propty:pfgpmax}, it needs to hold that, if $\rdistrividx_1 =1$, the committee described by $\rdistriv^{(1)}$ is more diverse than the committee described by $\rdistriv^{(2)}$ if $\rdistriv^{(1)}_1 < \rdistriv^{(2)}$ and otherwise as less diverse.
  As $e_D$ does not use $\rdistrividx_1$, it needs to hold that
  \begin{gather*}
    e_{D}(\rdistriv^{(1)}, \rdistriv^{(2)}, \rdistrividx)\\
    = \begin{cases}
      \mathit{less} &\text{if }\dim(\rdistrividx)\geq 1 \text{ and }\rdistriv^{(1)}_1 > \rdistriv^{(2)}_1\\
      \mathit{equal}&\text{if } \dim(\rdistrividx) = 0 \\
      \mathit{more} &\text{if } \dim(\rdistrividx)\geq 1 \text{ and }\rdistriv^{(1)}_1 < \rdistriv^{(2)}_1
    \end{cases}
  \\ = e_{\dns}(\rdistriv^{(1)}, \rdistriv^{(2)}, \rdistrividx),
  \end{gather*}
  as long as there is, for each $x\in\Nzero, d\in\N$, a $(\rdistriv^{(1)},\rdistriv^{(2)},\rdistrividx)\in T$ with $\rdistriv^{(1)}_1 = x, \rdistriv^{(2)}_2 = x+d$ and $\rdistrividx_1 = 1$, i.e.~an election with two committees so that one of them has fewer labels occurring zero times.
  Then, $e_{D}$ cannot know for a given argument whether $\rdistrividx_1 = 1$ or not and needs to classify them as if $\rdistrividx_1 = 1$.
  This is actually the case, which we prove next (still for $r\geq 4$), by making a case distinction on how much $d$ and $r-3$ differ:
  \begin{itemize}
    \item If $d=r-3$: Let $R=\brcs{2,\dots, r-1}$ and
    \begin{align*}
      \rdistriv^{(1)}_1 &= x & \rdistriv^{(2)}_1 &= x+d  \\
      \forall j\in R: \rdistriv^{(1)}_j &= 1 & \rdistriv^{(2)}_j &= 0  \\
      \rdistriv^{(1)}_r &= 0 & \rdistriv^{(2)}_r &= 1,
    \end{align*}
    with $\forall j \in\bcks[r-1]: \rdistrividx_j = j$ and $\rdistrividx_r = 1+\sum_{j=2}^{r-1} j-1$.
    \item If $d>r-3$: Let $R=\brcs{3,\dots, r-1}$ and
    \begin{align*}
      \rdistriv^{(1)}_1 &= x & \rdistriv^{(2)}_1 &= x+d  \\
      \rdistriv^{(1)}_2 &= d-r+4 & \rdistriv^{(2)}_2 &= 0 \\
      \forall j\in R: \rdistriv^{(1)}_j &= 1 & \rdistriv^{(2)}_j &= 0  \\
      \rdistriv^{(1)}_r &= 0 & \rdistriv^{(2)}_r &= 1,
    \end{align*}
    with $\forall j \in\bcks[r-1]: \rdistrividx_j = j$ and $\rdistrividx_r = d-r +5 +\sum_{j=3}^{r-1} j-1$.
    \item If $d<r-3$: Let $R=\brcs{4,\dots, r-1}$ and
    \begin{align*}
      \rdistriv^{(1)}_1 &= x & \rdistriv^{(2)}_1 &= x+d  \\
      \rdistriv^{(1)}_2 &= 0 & \rdistriv^{(2)}_2 &= r-3-d \\
      \rdistriv^{(1)}_3 &= 2 & \rdistriv^{(2)}_3 &= 0 \\
      \forall j\in R: \rdistriv^{(1)}_j &= 1 & \rdistriv^{(2)}_j &= 0  \\
      \rdistriv^{(1)}_r &= 0 & \rdistriv^{(2)}_r &= 1,
    \end{align*}
    with $\forall j \in\bcks[2]: \rdistrividx_j = j$, $\forall j \in\bcks[r-1]\setminus\brcs{1,2}: \rdistrividx_j = r + j - d - 4$ and $\rdistrividx_r = 1 - \paren(r-3-d) + 2\paren(r + 3 - d - 5) + \sum_{j=4}^{r-1} \paren(r + j - d - 5)$.
  \end{itemize}
  This concludes the proof for $r\geq 4$.

  With this, only $r=3$ is left to consider.
  In the following, let $D$ be a diversity index that is \pref[$1$-explainable]{propty:obv} and that fulfills \pref{propty:pfgpmax}.
  First, we show for each
  $t = (\rdistriv^{(1)}, \rdistriv^{(2)}, \rdistrividx^{(1)})\in T$
  that
  $\rdistriv^{(1)}_1 < \rdistriv^{(2)}_1 \iff \rdistriv^{(1)}_2 > \rdistriv^{(2)}_2 \iff \rdistriv^{(1)}_3 < \rdistriv^{(2)}_3$ (note that, for each $i\in \bcks[r]$, it holds that $\rdistriv^{(1)}_i \neq \rdistriv^{(2)}_i$ and hence $\rdistriv^{(1)}_i \geq \rdistriv^{(2)}_i \implies \rdistriv^{(1)}_i > \rdistriv^{(2)}_i$ and analogously $\rdistriv^{(2)}_i \geq \rdistriv^{(1)}_i \implies \rdistriv^{(2)}_i > \rdistriv^{(1)}_i$):
  \begin{itemize}[nosep]
    \item $\rdistriv^{(1)}_1 < \rdistriv^{(2)}_1 \Rightarrow \rdistriv^{(1)}_2 > \rdistriv^{(2)}_2$: Suppose $\rdistriv^{(1)}_2 < \rdistriv^{(2)}_2$ holds and let $d_1,d_2\in\N$ so that $\rdistriv^{(2)}_1=\rdistriv^{(1)}_1 + d_1$ and $\rdistriv^{(2)}_2 = \rdistriv^{(1)}_2 + d_2$. It follows from \cref{helper:charac:one} that $\rdistriv^{(1)}_3 = \rdistriv^{(2)}_3+d_1+d_2$, leading to
    \begin{align*}
      {}&\sum_{i=1}^r \rdistrividx_i \paren(\rdistriv^{(2)}_i-\rdistriv^{(1)}_i)
      \\ ={}& \rdistrividx_1 \cdot d_1 + \rdistrividx_2\cdot d_2 - \rdistrividx_3 \cdot (d_1 + d_2)\\
      <{}& \rdistrividx_3 \cdot d_1 + \rdistrividx_3 \cdot d_2 - \rdistrividx_3 \cdot (d_1 + d_2) = 0,
    \end{align*}
    a contradiction to \cref{helper:charac:two}.
    \item $\rdistriv^{(1)}_2 > \rdistriv^{(2)}_2 \Rightarrow \rdistriv^{(1)}_3 < \rdistriv^{(2)}_3$: Suppose $\rdistriv^{(1)}_3 > \rdistriv^{(2)}_3$ holds and let $d_2,d_3\in\N$ so that $\rdistriv^{(1)}_3=\rdistriv^{(2)}_3 + d_3$ and $\rdistriv^{(1)}_2 = \rdistriv^{(2)}_2 + d_2$. It follows from \cref{helper:charac:one} that $\rdistriv^{(2)}_1 = \rdistriv^{(1)}_1+d_3+d_2$, leading to
    \begin{align*}
      {}&\sum_{i=1}^r \rdistrividx_i \paren(\rdistriv^{(2)}_i-\rdistriv^{(1)}_i)
      \\ ={}& \rdistrividx_1 \paren(d_2 +d_3) - \rdistrividx_2\cdot d_2 - \rdistrividx_3 \cdot d_3\\
      <{}& \rdistrividx_1 \paren(d_2 +d_3) - \rdistrividx_1 \cdot d_2 - \rdistrividx_1 \cdot d_3 = 0,
    \end{align*}
    a contradiction to \cref{helper:charac:two}.
    \item $\rdistriv^{(1)}_3 < \rdistriv^{(2)}_3 \Rightarrow \rdistriv^{(1)}_2 > \rdistriv^{(2)}_2$: Suppose $\rdistriv^{(1)}_2 < \rdistriv^{(2)}_2$ holds and let $d_2,d_3\in\N$
    so that $\rdistriv^{(2)}_3=\rdistriv^{(1)}_3 + d_3$ and $\rdistriv^{(2)}_2 = \rdistriv^{(1)}_2 + d_2$. It follows from \cref{helper:charac:one} that $\rdistriv^{(1)}_1 = \rdistriv^{(2)}_1+d_3+d_2$, leading to
    \begin{align*}
      {}&\sum_{i=1}^r \rdistrividx_i \paren(\rdistriv^{(2)}_i-\rdistriv^{(1)}_i)
      \\ ={}& - \rdistrividx_1 \paren(d_2 +d_3) + \rdistrividx_2\cdot d_2 + \rdistrividx_3 \cdot d_3\\
      >{}& - \rdistrividx_1 \paren(d_2 +d_3) + \rdistrividx_1 \cdot d_2 + \rdistrividx_1 \cdot d_3 = 0,
    \end{align*}
    a contradiction to \cref{helper:charac:two}.
    \item $\rdistriv^{(1)}_2 > \rdistriv^{(2)}_2 \Rightarrow \rdistriv^{(1)}_1 < \rdistriv^{(2)}_1$: Suppose $\rdistriv^{(1)}_1 > \rdistriv^{(2)}_1$ holds and let $d_1,d_2\in\N$
    so that $\rdistriv^{(1)}_2=\rdistriv^{(2)}_2 + d_2$ and $\rdistriv^{(1)}_1 = \rdistriv^{(2)}_1 + d_1$. It follows from \cref{helper:charac:one} that $\rdistriv^{(2)}_3 = \rdistriv^{(1)}_3+d_1+d_2$, leading to
    \begin{align*}
      {}&\sum_{i=1}^r \rdistrividx_i \paren(\rdistriv^{(2)}_i-\rdistriv^{(1)}_i)
      \\ ={}& - \rdistrividx_1 \cdot d_1 - \rdistrividx_2\cdot d_2 + \rdistrividx_3 \paren(d_1+d_2)\\
      > {}& - \rdistrividx_3 \cdot d_1 - \rdistrividx_3 \cdot d_2 + \rdistrividx_3 \paren(d_1+d_2) = 0,
    \end{align*}
    a contradiction to \cref{helper:charac:two}.
  \end{itemize}
  
  Next, we make a case distinction on which index from $\rdistriv^{(1)}$ and $\rdistriv^{(2)}$ is used for a $(\rdistriv^{(1)}, \rdistriv^{(2)}, \rdistrividx)\in T$ with $\dim(\rdistrividx)=3$.
  \begin{itemize}
    \item If only index $3$ is used, it needs to hold for the $D$-explainable function $e_D$ that
    \begin{gather*}
      e_D(\rdistriv^{(1)},\rdistriv^{(2)},\rdistrividx)\\
      = \begin{cases}
        \mathit{less} & \text{if } \rdistriv^{(1)}_3 > \rdistriv^{(2)}_3 \\
        \mathit{equal}& \text{if } \dim(\rdistrividx)=0 \\
        \mathit{more} & \text{if } \rdistriv^{(1)}_3 < \rdistriv^{(2)}_3
      \end{cases}
    \end{gather*}
    Assume that is not the case. One possibility is that there is a $(\rdistriv^{(1)},\rdistriv^{(2)},\rdistrividx)\in T$ with $\dim(\rdistrividx)=3$, $\rdistriv^{(1)}_3 > \rdistriv^{(2)}_3$, and $e_D(\rdistriv^{(1)},\rdistriv^{(2)},\rdistrividx)\neq \mathit{less}$.
    Let $x\in\Nzero, d\in \N$ so that $\rdistriv^{(1)}_3 = x+d$ and $\rdistriv^{(2)}_3 =x$.
    However, for
    \begin{align*}
      \rdistriv^{(1)}_1 &= d & \rdistriv^{(2)}_1 &= 0 & \rdistrividx_1 &= 1 \\
      \rdistriv^{(1)}_2 &= 0 & \rdistriv^{(2)}_2 &= 2d & \rdistrividx_2 &= 2\\
      \rdistriv^{(1)}_3 &= x+d & \rdistriv^{(2)}_3 &= x & \rdistrividx_3 &= 3,
    \end{align*}
    $\rdistriv^{(1)}$ describes a less diverse committee than $\rdistriv^{(2)}$, as $D$ satisfies \pref{propty:pfgpmax}, a contradiction to $e_D(\rdistriv^{(1)},\rdistriv^{(2)},\rdistrividx)\neq \mathit{less}$.
    The same $\rdistriv^{(1)},\rdistriv^{(2)}$, and $\rdistrividx$ can be used to get a contradiction for the case that there is a $(\rdistriv^{(1)},\rdistriv^{(2)},\rdistrividx)\in T$ with $\dim(\rdistrividx)=3$, $\rdistriv^{(1)}_3 < \rdistriv^{(2)}_3$, and $e_D(\rdistriv^{(1)},\rdistriv^{(2)},\rdistrividx)\neq \mathit{more}$.

    Thus, the above definition of $e_D$ is correct and, in addition, equivalent to $e_{\dns}$ as $\rdistriv^{(1)}_1 < \rdistriv^{(2)}_1 \iff \rdistriv^{(1)}_3 < \rdistriv^{(2)}_3$.
    
    \item If only index $2$ is used, it needs to hold for the $D$-explainable function $e_D$ that
    \begin{gather*}
      e_D(\rdistriv^{(1)},\rdistriv^{(2)},\rdistrividx)\\
      = \begin{cases}
        \mathit{less} & \text{if } \rdistriv^{(1)}_2 < \rdistriv^{(2)}_2 \\
        \mathit{equal}& \text{if } \dim(\rdistrividx)=0 \\
        \mathit{more} & \text{if } \rdistriv^{(1)}_2 > \rdistriv^{(2)}_2
      \end{cases}
    \end{gather*}
    Assume that is not the case. One possibility is that there is a $(\rdistriv^{(1)},\rdistriv^{(2)},\rdistrividx)\in T$ with $\dim(\rdistrividx)=3$, $\rdistriv^{(1)}_2 < \rdistriv^{(2)}_2$, and $e_D(\rdistriv^{(1)},\rdistriv^{(2)},\rdistrividx)\neq \mathit{less}$.
    Let $x\in\Nzero, d\in \N$ so that $\rdistriv^{(2)}_2 = x+d$ and $\rdistriv^{(1)}_2 =x$.
    Note that $d\geq 2$: Assume $d=1$ is possible. It follows from $\rdistriv^{(1)}_1 < \rdistriv^{(2)}_1 \iff \rdistriv^{(1)}_2 > \rdistriv^{(2)}_2 \iff \rdistriv^{(1)}_3 < \rdistriv^{(2)}_3$ that $\rdistriv^{(1)}_1 > \rdistriv^{(2)}_1$ and $\rdistriv^{(1)}_3 > \rdistriv^{(2)}_3$.
    Let $d_1,d_3\in\N$ so that $\rdistriv^{(1)}_3 = \rdistriv^{(2)}_3+d_1$ and $\rdistriv^{(1)}_1 = \rdistriv^{(2)}_1+d_1$.
    Because of \cref{helper:charac:one}, it needs to hold that $d_1+d_3 = d=1$, a contradiction to $d_1,d_2\in \N$.

    For
    \begin{align*}
      \rdistriv^{(1)}_1 &= d-1 & \rdistriv^{(2)}_1 &= 0 & \rdistrividx_1 &= 1 \\
      \rdistriv^{(1)}_2 &= x & \rdistriv^{(2)}_2 &= x+d& \rdistrividx_2 &= 2\\
      \rdistriv^{(1)}_3 &= 1 & \rdistriv^{(2)}_3 &= 0 & \rdistrividx_3 &= d+1,
    \end{align*}
    $\rdistriv^{(1)}$ describes a less diverse committee than $\rdistriv^{(2)}$, as $D$ satisfies \pref{propty:pfgpmax}, a contradiction to $e_D(\rdistriv^{(1)},\rdistriv^{(2)},\rdistrividx)\neq \mathit{less}$.
    The same $\rdistriv^{(1)},\rdistriv^{(2)}$, and $\rdistrividx$ can be used to get a contradiction for the case that there is a $(\rdistriv^{(1)},\rdistriv^{(2)},\rdistrividx)\in T$ with $\dim(\rdistrividx)=3$, $\rdistriv^{(1)}_2 > \rdistriv^{(2)}_2$, and $e_D(\rdistriv^{(1)},\rdistriv^{(2)},\rdistrividx)\neq \mathit{more}$.

    Thus, the above definition of $e_D$ is correct and, in addition, equivalent to $e_{\dns}$, as $\rdistriv^{(1)}_1 < \rdistriv^{(2)}_1 \iff \rdistriv^{(1)}_2 > \rdistriv^{(2)}_2$.

    \item If only index $1$ is used, it needs to hold for the $D$-explainable function $e_D$ that
    \begin{gather*}
      e_D(\rdistriv^{(1)},\rdistriv^{(2)},\rdistrividx)\\
      = \begin{cases}
        \mathit{less} & \text{if } \rdistriv^{(1)}_1 > \rdistriv^{(2)}_1 \\
        \mathit{equal}& \text{if } \dim(\rdistrividx)=0 \\
        \mathit{more} & \text{if } \rdistriv^{(1)}_1 < \rdistriv^{(2)}_1
      \end{cases},
    \end{gather*}
    Assume that is not the case. One possibility is that there is a $(\rdistriv^{(1)},\rdistriv^{(2)},\rdistrividx)\in T$ with $\dim(\rdistrividx)=3$, $\rdistriv^{(1)}_1 > \rdistriv^{(2)}_1$, and $e_D(\rdistriv^{(1)},\rdistriv^{(2)},\rdistrividx)\neq \mathit{less}$.
    Let $x\in\Nzero, d\in \N$ so that $\rdistriv^{(1)}_1 = x+d$ and $\rdistriv^{(2)}_1 =x$.
    However, for
    \begin{align*}
      \rdistriv^{(1)}_1 &= x+d & \rdistriv^{(2)}_1 &= x    & \rdistrividx_1 &= 1 \\
      \rdistriv^{(1)}_2 &= 0 & \rdistriv^{(2)}_2 &= d+1   & \rdistrividx_2 &= 2\\
      \rdistriv^{(1)}_3 &= 1 & \rdistriv^{(2)}_3 &= 0  & \rdistrividx_3 &= d+2,
    \end{align*}
    $\rdistriv^{(1)}$ describes a less diverse committee than $\rdistriv^{(2)}$, as $D$ satisfies \pref{propty:pfgpmax}, a contradiction to $e_D(\rdistriv^{(1)},\rdistriv^{(2)},\rdistrividx)\neq \mathit{less}$.
    The same $\rdistriv^{(1)},\rdistriv^{(2)}$, and $\rdistrividx$ can be used to get a contradiction for the case that there is a $(\rdistriv^{(1)},\rdistriv^{(2)},\rdistrividx)\in T$ with $\dim(\rdistrividx)=3$, $\rdistriv^{(1)}_1 < \rdistriv^{(2)}_1$, and $e_D(\rdistriv^{(1)},\rdistriv^{(2)},\rdistrividx)\neq \mathit{more}$.

    Thus, the above definition of $e_D$ is correct and, in addition, equivalent to $e_{\dns}$.
  \end{itemize}
}

\section{Incorporating Diversity Indices into Elections}
\label{sec:approaches}
In this section, we propose two 
ways 
to incorporate diversity (indices) into elections.
One way is to maximize the diversity given a minimal satisfaction for each agent:
\begin{definition}[\prob{MAX-$D$-DSAT}]
  \label{def:maxdsa}
  Given an $\Elec=(A,C,U,k,L,\labof)$ and a function $h : A \to \Nzero$, find a committee with maximal diversity with respect to the diversity index $D$ among all committees $S\in\fun{\Rulebase}(\Elec)$ for which $\forall a\in A: 
  \fun{\mathrm{sat}}(\Elec, S, a)
  \geq \fun{h}(a)$.
\end{definition}
On the one hand, this provides a certain amount of freedom in the search for a diverse committee, but on the other hand, it gives the voters the certainty that their satisfaction cannot be worsened arbitrarily.
One possibility for defining $\fun{h}(a)$ is to compute a committee~$S$ with a well-known rule and to define $\fun{h}(a)$ as the satisfaction of agent $a$ with $S$ minus one (which we will consider in our numerical experiments) or to ensure the same minimum satisfaction for each agent.

A different way to incorporate diversity is to maximize the diversity given a scoring function (e.g.~of a scoring-based voting rule) and a minimum committee score:
\begin{definition}[\prob{MAX-$\paren(D, s)$-DSCR}]
  \label{def:maxdss}
  Given an election $\Elec$ and a bound $\beta \in \Nzero$, find a committee with maximal diversity with respect to diversity index~$D$ among all committees $S\in\fun{\Rulebase}(\Elec)$ for which $\fun{s}(\Elec, S)\geq \beta$, where $s$ is a scoring function.
\end{definition}
We will show results for the
following, well-known scoring-based approval voting rules in the following sections:
Multi-Winner Approval Voting (AV), Satisfaction Approval Voting (SAV), Proportional Approval Voting (PAV), and Approval Chamberlin-Courant (CC) (see e.g.~\citep{lackner2023multi} for a definition of these rules).

We will refer to the score of the scoring-based voting rule $\Rule$ as $\mathrm{score}_{\Rule}$, and to the decision variant of \prob{MAX-$D$-DSAT} and \prob{MAX-$\paren(D, s)$-DSCR}, in which the goal is to find a committee with a diversity which is at least a given value $\delta$, as \prob{$D$-DSAT} and \prob{$\paren(D, s)$-DSCR}, respectively.

\subsection{Complexity Results}
\label{sec:compl}
\appendixsection{sec:compl}
\NewDocumentCommand\wC{d()}{\fun{w}( #1 )}
First, we consider the computational complexity of finding a ``diversity optimal'' committee without additional constraints.
For this, we can exploit the fact that each index satisfies \pref[Weak Occurrence Balancing]{propty:balance}.
Therefore, 
the highest diversity is reached when starting with an empty committee and iteratively adding a candidate with a label that, among the labels with unselected candidates,
occurs least often in the current committee:

\begin{observation}[\appref{obs:indicespoly}]
\label{obs:indicespoly}
Given an election $\Elec$ and one of the diversity indices considered, choosing a committee $S\in\fun{\Rulebase}(\Elec)$ with the highest diversity is polynomial-time solvable.
\end{observation}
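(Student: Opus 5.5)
The plan is to show that the greedy procedure described before the statement is optimal for every index considered. The procedure starts from $S=\emptyset$ and, for $k$ rounds, adds an unselected candidate whose label has the fewest occurrences in the current committee among all labels that still have unselected candidates. Each round takes polynomial time, so the whole procedure does as well. Optimality rests on three facts.

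\textbf{First reduction.} Each of $\rich$, $\simps$, $\shann$ and $\dns$ depends on a committee only through its count vector $(n_i)_{i\in[m]}$, and in fact only through $\distri$. Moreover, a vector $n\in\Nzero^m$ is the count vector of some $S\in\Rulebase(\Elec)$ if and only if $\sum_i n_i=k$ and $0\le n_i\le c_i$ for all $i$, where $c_i\ceq\abs{\setFP(\Elec,C,i)}$: we simply pick any $n_i$ candidates of label $l_i$. So it suffices to show that the greedy count vector $g$ maximizes $D$ over all such feasible vectors.

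\textbf{Greedy invariant.} For all labels $i,j$ with $g_j<c_j$, we have $g_i\le g_j+1$. To see this, consider the round in which label $i$ received its last candidate; at that moment its count was $g_i-1$. Label $j$ still had an unselected candidate then, because its count at that time was at most $g_j<c_j$. By the greedy choice, $g_i-1\le(\text{count of }j\text{ then})\le g_j$.

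\textbf{Exchange argument.} Take any feasible $n\neq g$. Since both vectors sum to $k$, there are labels $i,j$ with $n_i<g_i$ and $n_j>g_j$.
\begin{itemize}
\item Since $n_i<g_i\le c_i$, label $i$ has a candidate outside the committee.
\item Since $g_j<n_j\le c_j$, the invariant gives $g_i\le g_j+1$. Hence $n_i\le g_i-1\le g_j\le n_j-1$.
\end{itemize}
We replace one candidate of label $j$ by one of label $i$, yielding $n'$, and distinguish two cases.
\begin{itemize}
\item If $n_i+2\le n_j$, then \pref[Weak Occurrence Balancing]{propty:balance} (\cref{obs:prop:propBalance}) gives $D(n)\le D(n')$.
\item If $n_i+1=n_j$, the swap only exchanges the values of coordinates $i$ and $j$. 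Thus $\distri$ is unchanged and $D(n')=D(n)$.
\end{itemize}
In both cases the potential $\sum_\ell\abs{n_\ell-g_\ell}$ decreases by exactly $2$, so after finitely many steps we reach $g$ with $D(n)\le D(g)$. Hence the greedy committee is optimal.

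The main obstacle is the middle step: obtaining the gap $n_i+2\le n_j$ that \pref{propty:balance} requires. The greedy invariant only yields $n_i+1\le n_j$. The plan closes the remaining gap by the separate observation that a difference of exactly one is a mere relabelling, which leaves $\distri$, and hence $D$, unchanged. Care is also needed with label capacities, which is why the invariant is stated only for labels $j$ that are not saturated.
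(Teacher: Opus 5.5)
Your proof is correct, but it is organized differently from the paper's. The paper starts from the same greedy output and the same invariant, stated there as: if $n_{l'}(C')+1<n_l(C')$ then label $l'$ has no unselected candidates. It then proves a separate structural lemma: \emph{every} committee satisfying this invariant has the same $\distri$ vector. This is shown by a case distinction on a pair of labels whose counts moved in opposite directions. The paper finishes by arguing that any committee violating the invariant can be balanced, via Weak Occurrence Balancing, without losing diversity.

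You skip the uniqueness lemma entirely. You use the invariant only for the greedy vector $g$ and walk an arbitrary feasible vector $n$ directly toward $g$. You handle the gap-one case by noting that such a swap only permutes two coordinates and therefore fixes $\distri$; the paper hides this same observation in its sub-case $d=d'=1$.

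What your version buys is a shorter, self-contained argument with an explicit termination measure $\sum_\ell\abs{n_\ell-g_\ell}$. The paper, by contrast, leaves termination of its balancing process implicit and settles the $d=d'=1$ sub-case only by the informal instruction to pick another pair of labels.

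What the paper's version buys is a stronger, index-independent fact: all committees satisfying the invariant share one $\distri$ vector. Hence, for the indices with strict Occurrence Balancing ($\simps$, $\shann$, $\dns$), the invariant exactly characterizes the optimal committees. Your argument, as written, only certifies that the greedy committee is one of them. You could recover the characterization with a little extra bookkeeping: if $n$ also satisfies the invariant, every swap in your walk has gap exactly one, and the invariant is preserved along the way.

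For the statement itself, polynomial-time solvability, your argument is complete.
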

\appendixproof{obs:indicespoly}{
The following algorithm yields a most diverse committee:
\begin{enumerate}[nosep]
  \item Start with an empty candidate set $C'=\emptyset$.
  \item Pick an $l' \in \argmin_{l\in L'} \fun{n_{l}}(\Elec,C')$ with $L'=\{l\in \bcks[m]: \fun{n_{l}}(\Elec,C\setminus C') > 0\}$ (i.e.~a label is in $L'$ if it is assigned to at least one candidate not chosen yet).
  \item Add a $c\in \setFP(\Elec,C\setminus C', l')$ to $C'$.
  \item If $|C'| = k$, stop. Otherwise, go to step 2.
\end{enumerate}

Clearly, this returns a committee $C'$ satisfying the following property:
$\forall \paren(l, l')\in \bcks[m]\times \bcks[m]: \fun{n_{l'}}(\Elec,C') + 1 <\fun{n_{l}}(\Elec,C') \Rightarrow \fun{n_{l'}}(\Elec,C\setminus C') = 0$
(otherwise it would contradict step 2 and 3).

In the following, let $\distri'=\fun{\distri}(\Elec, C')$.
First, we show that all committees from $\fun{\Rulebase}(\Elec)$ fulfilling the above property have the same $\distri$ vector and therefore the same diversity according to each of the indices at hand.
Assume this is not the case, i.e.~there exists a $C''\in\fun{\Rulebase}(\Elec)$ with $\distri''\ceq \fun{\distri}(\Elec, C'')\neq \distri'$ that fulfills the property.
Thus, there are $l\in \bcks[m]$, $l'\in \bcks[m]\setminus \brcs{l}$ and $d, d'\in \N$ such that $\fun{n_{l}}(\Elec,C') = \fun{n_{l}}(\Elec,C'') + d$ and 
$\fun{n_{l'}}(\Elec,C') + d' = \fun{n_{l'}}(\Elec,C'')$ and, hence, $\fun{n_{l}}(\Elec,C\setminus C'') > 0$.
We make a case distinction as to how much the number of occurrences of $l$ and $l'$ differ in $C'$---note that $\fun{n_{l}}(\Elec,C') > \fun{n_{l'}}(\Elec,C') +1$ is not possible because $C'$ satisfies the property---, each of which leads to a contradiction:
\begin{itemize}[nosep]
  \item $\fun{n_{l}}(\Elec,C') \leq \fun{n_{l'}}(\Elec,C')$:
  As the number of $l$ is smaller in $C''$ than in $C'$, $\fun{n_{l}}(\Elec,C\setminus C'') > 0$ and
  $\fun{n_{l'}}(\Elec,C'')-\fun{n_{l}}(\Elec,C'') > 1$, because
  $\fun{n_{l'}}(\Elec,C'') > \fun{n_{l'}}(\Elec,C') \geq \fun{n_{l}}(\Elec,C') > \fun{n_{l}}(\Elec,C'')$.
  Therefore, $C''$ violates the above property, a contradiction. 
  \item $\fun{n_{l}}(\Elec,C') = \fun{n_{l'}}(\Elec,C') +1$:
  If $d=d'=1$, this pair of labels does not lead to the $\distri$ vectors being different, i.e.~there needs to be a different pair of labels for which the number of occurrences differ between $C'$ and $C''$ as described above (continue the case distinction for a different pair of labels).
  Otherwise, $\fun{n_{l'}}(\Elec,C'')-\fun{n_{l}}(\Elec,C'') > 1$, because
  $\fun{n_{l'}}(\Elec,C'') = \fun{n_{l'}}(\Elec,C')+d' = \fun{n_{l}}(\Elec,C')+d'-1$ and $\fun{n_{l}}(\Elec,C') = \fun{n_{l}}(\Elec,C'') + d$ and $d,d'\geq 1$ and $d+d'>2$.
  Therefore, $C''$ violates the above property, a contradiction. 
\end{itemize}
Thus, each committee satisfying the property has the same diversity according to the indices at hand.
In addition, as these diversity indices satisfy \pref[Weak Occurrence Balancing]{propty:balance}, each committee $C''\in\fun{\Rulebase}(\Elec)$ that does not fulfill the property is at most as diverse as $C'$, making $C'$ one of the most diverse committees.
}

Yet, \prob{$D$-DSAT} is \NP-hard for these indices.
The same holds for \prob{$\paren(D, s)$-DSCR} if finding winning committees is \NP-hard for $\Rule^s$.
We show this result for a broader class of diversity indices satisfying the following, clearly desirable property:
\begin{property}[Uniqueness Optimality]\label{propty:UniqueOptimal}
  A diversity index $D$ satisfies \emph{Uniqueness Optimality} 
  if for all elections with $m\geq k$, it holds that $S\in\fun{\Rulebase}(\Elec)$ has optimal diversity if and only if no two candidates in~$S$ have the same label.
\end{property}
Clearly, each of $\dns, \simps$, and $\shann$ satisfy this property because they satisfy \pref{propty:balance}, and $\rich$ satisfies it because it is defined as $m - \distri(\Elec,\Com)_1$ and $\distri(\Elec,\Com)_1$ gets smaller if labels occur more than once.
For such indices, we have the following:
\begin{observation}
  \label{observation:ApproaGenNPhard}
  If $D$ is a diversity index that satisfies \pref{propty:UniqueOptimal},
  (1) \prob{$\paren(D, s)$-DSCR} is \NP-hard if the decision problem of $\Rule^s$ is \NP-hard and (2) \prob{$D$-DSAT} is \NP-hard.
\end{observation}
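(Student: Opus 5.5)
Both parts go by a polynomial-time reduction that gives every candidate its own label. Then \pref{propty:UniqueOptimal} makes every committee in $\fun{\Rulebase}(\Elec)$ diversity-optimal, so the diversity objective is neutralized and only the score or satisfaction constraint remains. Let $\delta^*$ be the optimal diversity value for such an election with $m=\abs{C}\geq k$ labels. Since $D$ is a function of the election and committee, $\delta^*$ equals $D$ evaluated on any committee of $k$ pairwise distinct labels, e.g.\ on any fixed $k$-subset of $C$. We use $\delta^*$ as the diversity threshold. For the indices considered it is clearly computable in polynomial time from the $\distri$ vector $(m-k,k,0,\dots,0)$. For a general $D$ we assume, as is implicit in calling the problem a decision problem, that $D$ can be evaluated efficiently.

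\textbf{Part (1).} We reduce from the decision problem of $\Rule^s$: given $\Elec'=(A,C,U,k)$ and $\beta$, is there $S$ with $|S|=k$ and $\fun{s}(\Elec',S)\geq\beta$? We build $\Elec=(A,C,U,k,L,\labof)$ with $L=\brcs{l_c : c\in C}$ and $\labof(c)=l_c$, keep the bound $\beta$, and set the threshold to $\delta^*$. The scoring function only depends on $A,C,U,k$, so scores are unchanged. Every $k$-committee has pairwise distinct labels, so by \pref{propty:UniqueOptimal} its diversity is exactly $\delta^*$. Hence the constructed instance is a yes-instance of \prob{$\paren(D, s)$-DSCR} iff some $k$-committee reaches score $\beta$.

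\textbf{Part (2).} We need an \NP-hard source problem whose constraints are per-agent lower bounds on $|S\cap U(a)|$. The most direct choice is \prob{Hitting Set}: given a universe $X$, subsets $F_1,\dots,F_q$ and $k$, is there $S\subseteq X$ with $|S|\le k$ hitting every $F_j$? We take $C=X$, one agent $a_j$ per set with $U(a_j)=F_j$, and $h(a_j)=1$. We again give each candidate its own label and use threshold $\delta^*$. A hitting set of size at most $k$ can be padded to size exactly $k$, assuming $k\le|X|$ (otherwise cap $k$, which is trivial). Conversely, any feasible committee is a hitting set. Diversity is again automatically $\delta^*$. Alternatively, hardness follows from (1) applied to CC, since a CC score of $|A|$ is the same as every agent having satisfaction at least $1$. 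We will mention this connection.

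The only delicate point is making sure the diversity threshold does not depend on which committee is chosen. This is where \pref{propty:UniqueOptimal}, used in the "if" direction, is essential: every committee with distinct labels is optimal and therefore attains the same value $\delta^*$. Everything else is routine bookkeeping.
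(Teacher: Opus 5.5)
Your proposal is correct and follows the paper's proof: in both parts you give every candidate its own label so that Uniqueness Optimality makes all $k$-committees equally (optimally) diverse, and your Hitting Set source in part (2) with $h\equiv 1$ is the same problem the paper reduces from (a committee in which every agent has satisfaction at least one). One small caveat concerns your aside that (2) ``follows from (1) applied to CC'': it only works if CC is hard already at the specific bound $\beta=\abs{A}$, which is the full-coverage (Hitting Set) hardness you use anyway and does not follow from (1) itself.
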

\begin{proof}
  (1) The reduction from the \NP-hard decision problem of $\Rule^s$, i.e.~where given an election $\Elec=(A,C,U,k)$, the task is to decide whether there is $\Com\in\fun{\Rulebase}(\Elec)$ with $\fun{s}(\Elec, \Com)\geq s^*$, works as follows:
  Choose the election $\Elec'=(A,C,U,k,L',\labof')$ with $L' = \brcs{l_c: c\in C}$ as the possible labels and $\fun{\labof'}(c) = l_c$
  (i.e.~each candidate has a different label).
  Set $\delta = \fun{D}(\Elec, \Com^*)$,
  where $\Com^*$ is a committee for~$\Elec$ in which $k$ labels occur exactly once, 
  and $\beta = s^*$.
  Thus, each committee of size $k$ for~$\Elec'$ has the diversity $\fun{D}(\Elec, \Com^*)$.
  Hence, $\paren(D, s)$-DSCR has a solution if and only if 
  there is a committee of score at least~$s^*$.\\
  (2) The reduction from the problem of finding a committee in which each agent has a satisfaction of at least one, which is \NP-hard \citep{procaccia2008CChard}, works analogously by giving each candidate a different label, setting $\delta$ to $\fun{D}(\Elec, \Com^*)$ with $\Com^*$ being a committee with $k$ labels, and $\fun{h}(a) = 1$ for all agents~$a\in A$.
\end{proof}
As the decision problems of CC and PAV are \NP-hard \citep{procaccia2008CChard,aziz2015PAChard}, we have that
\prob{$\paren(D, \mathrm{score}_{\mathrm{CC}})$-DSCR} and \prob{$\paren(D, \mathrm{score}_{\mathrm{PAV}})$-DSCR} are \NP-hard, where $D$ is one of the indices considered.

In the remainder of this section, 
we want to focus on separable scoring functions:
\begin{definition}
  A scoring function~$s$ is called \emph{separable} if there is a polynomial-time computable function~$w$ mapping an election and a candidate to $\mathbb{N}$ such that 
  for every election~$\Elec$ and
  for every~$S\subseteq C$,
  we have that $s(\Elec,S)=\sum_{c\in S} \wC(\Elec, c)$.
\end{definition}
Clearly, 
a committee of $\Rule^{s}$ can be computed in polynomial time if $s$ is separable.
The scoring function of
AV is such a separable function with $\wC(\Elec,c) = |\brcs{a \in A: c\in\fun{U}(a)}|\leq \abs{A}$, and the scoring function of SAV can be transformed into a separable function with $\wC(\Elec,c) = \sum_{a \in A: c\in\fun{U}(a)} \paren(\ell/\abs{\fun{U}(a)})$, where $\ell$ is the least common multiple of $\bigcup_{a\in A} \brcs{\abs{\fun{U}(a)}}$, which can be computed in polynomial time using binary representation.
For separable functions,
we have:

\begin{theorem}[\appref{theorem:DssDnf}]
  \label{theorem:DssDnf}
  \prob{MAX-$\paren(D, s)$-DSCR} is in \classP{} if $s$ is a separable function and $D\in\brcs{\rich,\dns}$.
\end{theorem}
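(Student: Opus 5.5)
Both $\rich$ and $\dns$ depend only on the multiset of label counts $(n_1,\dots,n_m)$. Each is a sum of per-label contributions: $\rich$ gives $1$ to every label with $n_i\geq 1$, and $\dns$ gives $\sum_{j=1}^{n_i}\eta^{k+1-j}$ with $\eta=\min\{m,k\}+1$. Both contributions are concave in $n_i$, since the marginal gains are non-increasing. The score is also separable, so I would solve the problem by dynamic programming over the labels. To get an optimization rather than a decision procedure, I would maximize diversity subject to the score bound $\beta$.

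First, consider a fixed label $l_i$ from which we take exactly $t$ candidates. The diversity contribution does not depend on which ones we take, so we should take the $t$ candidates of $\setFP(\Elec,C,i)$ with the largest weights $\wC(\Elec,c)$. Denote this maximal partial score by $W_i(t)$; it can be computed by sorting. The problem then reduces to choosing a vector $(t_1,\dots,t_m)$ with $\sum_i t_i=k$, $0\leq t_i\leq \abs{\setFP(\Elec,C,i)}$, and $\sum_i W_i(t_i)\geq\beta$, maximizing $\sum_i g(t_i)$, where $g$ is the per-label diversity contribution.

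The obvious DP over (label index, number of chosen candidates, score) is pseudo-polynomial in the score, which is problematic because the weights (especially for SAV after scaling by the lcm) may be exponential. I would therefore swap the roles of score and diversity. Let the table $F[i][j][\cdot]$ store, for the first $i$ labels, $j$ chosen candidates, and a given diversity \emph{profile}, the maximum achievable score. For $\rich$ this is easy: the diversity is the number of labels with $t_i\geq 1$, which is at most $k$. The DP state $F[i][j][r]$ with $r\leq k$ therefore has $O(mk^2)$ entries, and each transition enumerates $t_i$. At the end we output the largest $r$ with $F[m][k][r]\geq\beta$, backtracking to recover the committee.

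For $\dns$ the value is exponentially large, but it is determined by the vector $(\abs{\sigma_1},\dots,\abs{\sigma_k})$, and there can be exponentially many such vectors. Here I would exploit the lexicographic structure, which is the main obstacle. I would determine $\abs{\sigma_1},\abs{\sigma_2},\dots$ greedily, one level at a time. At level $j$, with the values $\abs{\sigma_1},\dots,\abs{\sigma_{j-1}}$ already fixed, I find the maximal $\abs{\sigma_j}$ such that some feasible committee attains all of the fixed values together with this value and satisfies the score bound.

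Each such feasibility check is again a DP that maximizes score. The constraints ``exactly $s_{j'}$ labels with $t_i\geq j'$ for all $j'\leq j$'' must be tracked, and the counts of labels in each level band are what matter. Since $\abs{\sigma_1}\geq\abs{\sigma_2}\geq\dots$, fixing the prefix means fixing how many labels have $t_i=0,1,\dots,j-2$ exactly and how many have $t_i\geq j-1$. A DP over labels whose state records how many labels have been assigned to each band would be too large. I would instead argue via Weak Occurrence Balancing and the exchange structure that an optimal solution assigns $t_i\in\{\text{band}\}$ in a way that can be decided by a min-cost flow/assignment. Concretely, this is a bipartite assignment of labels to ``slots'' (the $s_{j'}$ labels at each level $j'$), where the profit of assigning label $i$ to reach level $j'$ is the marginal weight of its $j'$-th best candidate. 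Because these marginals are non-increasing in $j'$ within each label, the problem becomes a max-weight flow with nested capacities, solvable in polynomial time.

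Running this check for at most $k$ levels with at most $k+1$ candidate values each yields a polynomial algorithm overall. The delicate step I expect is justifying this flow formulation. I would need to show that the concavity of $W_i$ makes a label's candidates fill slots in order, so that the flow solution is integral and corresponds to valid counts $t_i$.
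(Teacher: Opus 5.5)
Your $\rich$ part is correct. The table $F[i][j][r]$ storing the maximum score is right and polynomial; the paper instead obtains $\rich$ from the $\dns$ algorithm via Present Label Maximization. Reducing $\dns$ to lexicographic maximization of $(|\sigma_1|,\dots,|\sigma_k|)$ is also sound.

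The gap is the feasibility oracle, which carries the whole difficulty, and the justification you sketch for it is false. Your slot model does not force a label that fills a level-$j'$ slot to also fill slots at levels $1,\dots,j'-1$. Concavity of $W_i$ cannot repair this: non-increasing marginals make prefixes optimal when the number of candidates taken from \emph{one label} is fixed, not when the number of labels reaching \emph{each level} is prescribed. Concretely, take two labels with sorted weights $(10,1)$ and $(5,5)$, $k=2$, and the prefix $|\sigma_1|=|\sigma_2|=1$. The slot optimum puts the first label in the level-1 slot and the second label in the level-2 slot, with value $15$. The only committees with this prefix have scores $11$ and $10$. So the flow solution yields neither valid counts $t_i$ nor the maximum score for a prescribed prefix, and a test ``relaxed value $\geq\beta$'' can accept an infeasible prefix. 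Weak Occurrence Balancing does not help, since it concerns diversity, not scores.

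The missing idea is that you only need exactness at the prefixes your procedure actually queries, and there it follows from the maximality of the values already fixed. The slot relaxation needs no flow at all. It decomposes into taking the $s_{j'}$ largest $j'$-th marginals at each level $j'\leq j$, plus the $k-\sum_{j'\leq j}s_{j'}$ largest marginals from deeper levels. The repair argument runs as follows:
\begin{itemize}
\item Take an optimal relaxed solution of value $\geq\beta$ for the maximal accepted $s_j$.
\item Repeatedly replace a selected unit $(i,a)$ whose predecessor $(i,a-1)$ is unselected by that predecessor. This never lowers the score and terminates in a genuine committee.
\item Each replacement strictly raises the count at the earliest affected level $e$ and leaves the counts at levels $1,\dots,e-1$ unchanged.
\item If $e\leq j$, this committee has score at least $\beta$ and would have made a value larger than $s_e$ pass the test at step $e$, contradicting your choice of $s_e$.
\item Hence $e>j$, and the repaired committee realizes exactly $(s_1,\dots,s_j)$ with score at least $\beta$.
\end{itemize}

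With this lemma your route is correct and arguably cleaner than the paper's. The paper starts from a maximum-score committee and, for $j=0,1,\dots$, swaps a best-weight unused candidate of a label occurring exactly $j$ times for a worst-weight candidate of a label occurring more than $j+1$ times, as long as the score stays at least $\beta$. It proves optimality by an inductive exchange argument.
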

Therefore, for $D\in \brcs{\rich,\dns}$, it holds that \prob{MAX-$\paren(D, \mathrm{score}_{\mathrm{AV}})$-DSCR} and \prob{MAX-$\paren(D, \mathrm{score}_{\mathrm{SAV}})$-DSCR} are in \classP{}.%
\appendixproof{theorem:DssDnf}{
  First, we show the result for \textbf{$\dns$}:
  In the following, let $\eta=\funbc{\min}{m, n}$ and
  \begin{align*}
    \muDO(i,\Gamma) \coloneqq {}& \sum_{j=1}^{i} \paren(\eta + 1))^{n+1-j} \cdot \abs{\fun{\sigma_j}(\Gamma)}\\
    \text{with } \fun{\sigma_j}(\Gamma) = {}& \brcs{l\in\bcks[m]: \fun{n_l}(\Elec,\Gamma) \geq j},\\
    \fun{L_j}(\Gamma) \ceq {}& \brcs{l\in \bcks[m]: \fun{n_l}(\Elec,\Gamma)=j}
  \end{align*}
  and therefore $\fun{\dns}(\Elec,K) = \muDO(k, K)$.
  Note that for two committees $S_1$ and $S_2$ it holds, if $\exists j\in \bcks[i]$ with $\fun{\distri_{j}}(S_1)\neq \fun{\distri_{j}}(S_2)$, that
  \begin{equation*}
    \muDO(i,S_1)>\muDO(i,K_2) \iff \fun{\distri_{m'}}(S_1)< \fun{\distri_{m'}}(S_2)
  \end{equation*}
  with $m'=\min\brcs{j\in \bcks[i]: \fun{\distri_{j}}(S_1)\neq \fun{\distri_{j}}(S_2)}$, for an analogous reason to why 
  \begin{gather*}
    e_{\dns}(\rdistriv^{(1)}, \rdistriv^{(2)}, \rdistrividx)\\
    = \begin{cases*} \mathit{less}, \text{ if }\dim(\rdistrividx)\geq 1 \text{ and }\rdistriv^{(1)}_1 > \rdistriv^{(2)}_1\\  \mathit{equal}, \text{ if } \dim(\rdistrividx) = 0 \\
  \mathit{more}, \text{ if } \dim(\rdistrividx)\geq 1 \text{ and }\rdistriv^{(1)}_1 < \rdistriv^{(2)}_1 \end{cases*}
  \end{gather*}
  is a $\dns$-explainable function (see \cref{proof:theorem:explain:bounds}).

  Our polynomial-time algorithm works as follows:
  \begin{enumerate}[nosep]
    \item Start with a committee $K^*$ which the rule $\Rule^s$ determines and which therefore has the highest score, set $K=K^*$ and $j=0$.
    If $\fun{s}(\Elec,K)<\beta$, stop, as there is no committee fulfilling the condition regarding the score.
    \item If $j=k$, return $K$.
    Otherwise, let 
    \begin{equation*}
      I_j = \brcs{i \in \bcks[m]: \fun{n_i}(\Elec,K) = j \land \fun{n_i}(\Elec,C) > j}
    \end{equation*}
    and, for $i\in I_j$, let $w_i = \max_{c \in \setFP(\Elec,C, i)\setminus K} \wC(\Elec,c)$,
    $c_i$ a candidate with label $l_i$ that would contribute the most to score among the candidates with this label that are not part of $K$, i.e.~$c_i \in \brcs{c \in\setFP(\Elec,C, i)\setminus K: \wC(\Elec,c) = w_i}$, $X_s = \brcs{c_i: i \in I_j}$, and $X_e = \brcs{}$.
    \item Pick a $c_a\in \argmax_{c_i \in X_s\setminus X_e} w_i$.
    Let
    \begin{equation*}
      C_p = \brcs{c \in K: \fun{n_i}(\Elec,K) > j+1 \text{ with } l_i = \fun{\labof}(c)}.
    \end{equation*}
    Pick a $c_r \in \argmin_{c\in C_p} \wC(\Elec,c)$, i.e.~a candidate which is currently part of $K$, has a label which occurs more than $j+1$ times in $K$, and which contributes the least to the score among such candidates.
    Let $K'=K\cup\brcs{c_a}\setminus\brcs{c_r}$.
    If $\fun{s}(\Elec,K')<\beta$, set $j=j+1$ and go to step 2 (as we cannot increase the number of occurrences of labels occurring $j$ times (further) without decreasing the number of labels occurring at most $j+1$ times).
    Otherwise, set $K=K'$, $X_s = X_s \setminus \brcs{c_a}$ and $X_e = X_e \cup \brcs{c_a}$.
    If $|X_s|=0$ (there are no labels occurring $j$ times left), set $j=j+1$ and go to step 2.
    Otherwise, start with step 3 again.
\end{enumerate}
We show by induction that, when visiting the second step for the $i$-th time, with $i\in \brcs{2,\dots k+1}$ and $K_i$ being the committee when reaching this step for the $i$-th time, there is
\begin{enumerate}[nosep]
  \item no other committee $K'\in\fun{\Rulebase}(\Elec)$ with $\fun{s}(\Elec,K') \geq\beta$ and $\muDO(i-1,K')>\muDO(i-1,K_i)$ (later referred to as the first condition).
  \item no other committee $K''\in\fun{\Rulebase}(\Elec)$ with $\fun{s}(\Elec,K'')> \fun{s}(\Elec,K_i)$ and $\muDO(i-1,K'')=\muDO(i-1,K_i)$ (later referred to as the second condition).
\end{enumerate}
Note that these conditions imply the following (later referred to as the third condition):
\begin{gather*}
  \forall l\in \bcks[m], l'\in \brcs{\phi\in \bcks[m]: \fun{n_{\phi}}(\Elec,K_i) < \fun{n_{l}}(\Elec,K_i)},\\
  c \in \setFP(\Elec,K_i, l), c'\in \setFP(\Elec,C\setminus K_i, l'):\\
  \wC(\Elec,c')\leq \wC(\Elec,c).
\end{gather*} 
Assume this is not the case, i.e.~there are $l,l',c,c'$ so that $\wC(\Elec,c') > \wC(\Elec,c)$.
Let $K_d = K_i\setminus\brcs{c}\cup\brcs{c'}$.
Therefore, $\fun{s}(\Elec,K_d)> \fun{s}(\Elec,K_i)\geq \beta$.
In addition, if $\fun{n_{l'}}(\Elec,K_i)\geq i-1$ or $\fun{n_{l}}(\Elec,K_i) = \fun{n_{l'}}(\Elec,K_i)+1$, it follows that $\muDO(i-1,K_d)=\muDO(i-1,K_i)$, which contradicts the second condition.
On the other hand, if $\fun{n_{l'}}(\Elec,K_i) < i-1$ and $\fun{n_{l}}(\Elec,K_i) > \fun{n_{l'}}(\Elec,K_i)+1$, it follows that $\muDO(i-1,K_d)>\muDO(i-1,K_i)$, which contradicts the first condition.

We start with $i=2$, i.e.~with visiting the second step for the second time:
If $K_1=K^*$ has either no label occurring $0$ times, or $l_p$ labels occurring $0$ times and all other committees of size $k$ which satisfy the bound $\beta$ have at least $l_p$ labels occurring $0$ times, $\fun{s}(\Elec,K_1)=\fun{s}(\Elec,K_2)\geq\fun{s}(\Elec,K')$ for all other committees $K'$ of size $k$, as $K_1=K^*=K_2$.
Otherwise, let $K'\in\fun{\Rulebase}(\Elec)$ be a committee respecting $\beta$ in which $l_n<l_p$ labels occur $0$ times and, therefore, in which more labels occur at least once. We show that $K_2$ has at most $l_n$ labels occurring $0$ times and, if it has exactly $l_n$ many labels occurring $0$ times, a score which is at least as good.

First, we transform $K'$ so that the score does not decrease and the number of labels occurring $0$ times does not increase.
As long as $\exists l'\in \bcks[m]: \fun{n_{l'}}(\Elec,K') = 0 <\fun{n_{l'}}(\Elec,K_1)$
and therefore $\exists l\in \bcks[m]: \fun{n_{l}}(\Elec,K_1) = 0 < \fun{n_{l}}(\Elec,K')$,
we set $K' = K' \setminus \brcs{c} \cup \brcs{c'}$ with $c'\in \setFP(\Elec,K_1, l')$ and $c\in \setFP(\Elec,K', l)$, which does not reduce the score (as $K_1$ maximizes the score).
Let $l_n$ be the number of labels occurring $0$ times in this updated $K'$.

Pick the $L_a\subseteq \fun{L_0}(K_1)\setminus \fun{L_0}(K')$ with $\abs{L_a}=l_p-l_n$ and for each $l\in L_a$ a candidate $c_a^{\paren(l)} \in \setFP(\Elec,K',l)$ and let $C_a = \bigcup_{l\in L_a}\brcs{c_a^{\paren(l)}}$.
There needs to be a $C_r\subseteq C$ with $|C_r|=|C_a|=l_p-l_n$ such that $\forall c\in C_r: c\in K_1\setminus K'$ and $\fun{n_{\theta}}(\Elec,K_1) - \fun{n_{\theta}}(\Elec,C_r) \geq \fun{n_{\theta}}(\Elec,K') >0$ with $l_{\theta} = \fun{\labof}(c)$, as it needs to hold that
$k=\sum_{i\in \bcks[m]}\fun{n_i}(K_1) = \sum_{i \in \bcks[m]\setminus \fun{L_0}(K_1)}\fun{n_i}(K_1) = \sum_{i\in \bcks[m]\setminus \fun{L_0}(K_1)}\fun{n_i}(K') + \sum_{i\in L_a}\fun{n_i}(K')$ with $\sum_{i\in L_a}\fun{n_i}(K') \geq l_p-l_n$.

It holds that $\fun{s}(\Elec,K' \cup C_r \setminus C_a)\leq \fun{s}(\Elec,K_1)$ and therefore it holds that $\fun{s}(\Elec,K_1\setminus C_r \cup C_a)\geq \fun{s}(\Elec,K')\geq\beta$.
Consequently, the algorithm will repeat the third step at least $l_p-l_n$ times, in each step removing a candidate from $C_r$ or, alternatively, a candidate $c'$ of a label occurring at least $2$ times with a lower $\wC(\Elec,c')$, and in each step adding a candidate from $C_a$ or a candidate $c''$ of a label occurring $0$ times with a larger $\wC(\Elec,c'')$.
Thus, $K_{2}$ has at most $l_n$ labels occurring $0$ times and, if exactly $l_n$ labels occur $0$ times, a score at least as good as the score of $K'$.

Now, the inductive step $i \leadsto i+1$ follows:
Let $l_p$ be the number of labels occurring $i-1$ times in $K_i$.
Note that for a committee $K'\in\fun{\Rulebase}(\Elec)$ with $\muDO(i,K')\geq\muDO(i,K_i)$ and $\fun{\mathrm{score}_{\mathrm{AV}}}(K')\geq \beta$, it has to hold that $\muDO(i-1,K')=\muDO(i-1,K_i)$ and thus $\fun{\sigma_j}(K')=\fun{\sigma_j}(K_i)$ for $j\in\bcks[i-1]$ and $\abs{\fun{L_j}(K')}=\abs{\fun{L_j}(K_i)}$ for $j\in\brcs{0,\dots, i-2}$.

If $l_p=0$ (and therefore the number of labels occurring at least $i$ times is optimal) or all other committees $K'\in\fun{\Rulebase}(\Elec)$ which satisfy the bound $\beta$ and for which $\muDO(i-1,K')=\muDO(i-1,K_i)$ have at least $l_p$ many labels occurring $i-1$ times (and thus at most as many labels occurring at least $i$ times as $K_i$), $K_{i+1}=K_i$ and $\fun{s}(\Elec,K_i)\geq\fun{s}(\Elec,K')$ due to the induction hypothesis.

Otherwise, let $K'\in\fun{\Rulebase}(\Elec)$ be a committee respecting $\beta$ in which $l_n<l_p$ labels occur $i-1$ times and $\muDO(i-1,K')=\muDO(i-1,K_i)$ and thus more labels occur at least $i$ times.
We show that $K_{i+1}$ has at most $l_n$ labels occurring $i-1$ times and, if exactly $l_n$ labels occur $i-1$ times, a score which is at least as good.

First, we transform $K'$ so that $\muDO(i-1,K')$ remains unchanged, the score does not decrease, and the number of labels occurring exactly $i-1$ times does not increase.
First, note that it is not possible that
$\exists l, l' \in \bcks[m]: \fun{n_{l}}(\Elec,K') \leq i-2$, $\fun{n_{l}}(\Elec,K') = \fun{n_{l'}}(\Elec,K_{i}) < \fun{n_{l}}(\Elec,K_i)$,
and $\fun{n_{l'}}(\Elec,K') \geq \fun{n_{l}}(\Elec,K') + 2$:
If this were possible, $\fun{s}(\Elec,K'\setminus \brcs{c'} \cup \brcs{c}) \geq \fun{s}(\Elec,K') \geq \beta$ with $c'\in \setFP(\Elec,K', l')\setminus K_{i}, c\in \setFP(\Elec,K_{i}, l)\setminus K'$ because of the third condition and,
therefore, $\muDO(i-1,K'\setminus \brcs{c'} \cup \brcs{c})>\muDO(i-1,K_i)$, a contradiction to the induction hypothesis.

For $j\in\paren(0,\dots,i-2)$, we do the following:
As long as $\exists l\in \bcks[m]: \fun{n_{l}}(\Elec,K') = j < \fun{n_{l}}(\Elec,K_i)$
and therefore $\exists l'\in \bcks[m]:\fun{n_{l'}}(\Elec,K_i) = j < \fun{n_{l'}}(\Elec,K') = j+1$ (see the previous note),
we set $K' = K' \setminus \brcs{c'} \cup \brcs{c}$ with $c'\in \setFP(\Elec,K', l')\setminus K_i$ and $c\in \setFP(\Elec,K_i, l)\setminus K'$,
which leaves $\muDO(i-1,K')$ unchanged and does not reduce the score because of the third condition.
This results in 
\begin{align*}
 L_d\ceq{}&\brcs{l\in \bcks[m]: \fun{n_l}(\Elec,K_i)=\fun{n_l}(\Elec,K')\leq i-2}\\
  ={}& \brcs{l\in \bcks[m]: \fun{n_l}(\Elec,K_i)\leq i-2 \lor \fun{n_l}(\Elec,K')\leq i-2}.
\end{align*}
Then, for $j=i-1$, we do the following: As long as 
\begin{itemize}[nosep]
  \item $\exists l'\in \bcks[m]:\fun{n_{l'}}(\Elec,K_i) > \fun{n_{l'}}(\Elec,K') = j$
  \item and therefore $\exists l\in \bcks[m]: \fun{n_{l}}(\Elec,K') > \fun{n_{l}}(\Elec,K_i) = j$ (because $\abs{\fun{L_j}(K_i)}>\abs{\fun{L_j}(K')}$)
  \item and $\exists C_a\subseteq K_i\setminus K': |C_a| = \fun{n_{l}}(\Elec,K') -\fun{n_{l}}(\Elec,K_i)$ such that
  $\forall c\in C_a:$
  $\fun{n_{\theta}}(\Elec,K') + \fun{n_{\theta}}(\Elec,C_a) \leq \fun{n_{\theta}}(\Elec,K_i)$ with $l_{\theta} = \fun{\labof}(c)$,
  and $\exists c_{l'} \in C_a: c_{l'} \in \setFP(\Elec,K_i, l')$
  (such a $C_a$ exists, as it holds that
  $\sum_{\phi\in \bcks[m]\setminus L_d\setminus\brcs{l}} \fun{n_{\phi}}(K_i) + \fun{n_l}(\Elec,K_i)
  = \sum_{\phi\in \bcks[m]\setminus L_d\setminus\brcs{l}} \fun{n_{\phi}}(K') + \fun{n_l}(\Elec,K')$ and $\fun{n_l}(\Elec,K_i)<\fun{n_l}(\Elec,K')$),
\end{itemize}
set $K' = K' \setminus C_r \cup C_a$ with $C_r\subseteq \setFP(\Elec,K', l)\setminus K_i$ and $|C_r| = |C_a|$
so that $l'$ occurs more often than $j$ times in $K'$, but now $l$ occurs $j$ times.
This does not reduce the score either because of the third condition, as we decrease the number of occurrences of label $l'$ in $K'$ (by removing candidates with this label that are not part of $K_i$) and as we increase the number of labels which occur more than $j$ times in $K_i$ (by adding candidates with this label that are part of $K_i$ which have a higher score $s$ than the removed candidates because of the third condition).

For the resulting $K'$, let $l_n$ be the number of labels occurring $i-1$ times in this
updated $K'$, for which $\forall l\in \fun{L_{i-1}}(K'): \fun{n_l}(\Elec, K')=\fun{n_l}(\Elec, K_i)$.
Next, take $L_a\subseteq \fun{L_{i-1}}(K_i)\setminus \fun{L_{i-1}}(K')$ for which it holds that $|L_a| = l_p-l_n$
such that $\forall l_a \in L_a: \setFP(\Elec,K', l_a)>i-1$, pick a $c_{l_a}\in \setFP(\Elec,K', l_a) \setminus K_i$ for all $l_a \in L_a$,
and let $C_a = \bigcup_{l_a\in L_a} c_{l_a}$.
Additionally, there needs to be a $C_r\subseteq K_i\setminus K'$ with $|C_r|=|C_a|$ such that $\forall c\in C_r:
\fun{n_{\theta}}(\Elec,K_i) > \fun{n_{\theta}}(\Elec,K') > i-1 \land \fun{n_{\theta}}(\Elec,K_i) - \fun{n_{\theta}}(\Elec,C_r) \geq \fun{n_{\theta}}(\Elec,K')$ with $l_{\theta}=\fun{\labof}(c)$:
Such a $C_r$ exists, as it holds, with $L^{\geq}_{i} = \bcks[m]\setminus L_d\setminus \fun{L_{i-1}}(K_i)$ as the indices of labels which occur at least $i$ times in $K_i$, that
\begin{align*}
  &\sum_{l\in L^{\geq}_{i}} \fun{n_l}(\Elec,K_i)
  +\sum_{\fun{L_{i-1}}(K_i)} \fun{n_l}(\Elec,K_i)\\
  &= \sum_{l\in L^{\geq}_{i}} \fun{n_l}(\Elec,K') +\sum_{\fun{L_{i-1}}(K_i)} \fun{n_l}(\Elec,K')\\
  \iff& \sum_{l\in L^{\geq}_{i}} \fun{n_l}(\Elec,K_i)\\
  &= \sum_{l\in L^{\geq}_{i}} \fun{n_l}(\Elec,K') +\sum_{\fun{L_{i-1}}(K_i)} \fun{n_l}(\Elec,K') - l_p\paren(i-1)\\
  &\geq \sum_{l\in L^{\geq}_{i}} \fun{n_l}(\Elec,K') + l_p-l_n.
\end{align*}
The last inequality holds because
\begin{gather*}
  \sum_{l\in \fun{L_{i-1}}(K_i)} \fun{n_l}(\Elec,K') \geq l_n\paren(i-1) + \paren(l_p-l_n) \text{ and}\\
  {}l_n\paren(i-1) + \paren(l_p-l_n)i  - l_p\paren(i-1)\\
  ={} \paren(l_p-l_n)i - \paren(l_p-l_n)(i-1)= l_p-l_n.
\end{gather*}

$\fun{s}(\Elec,K' \cup C_r \setminus C_a)\leq \fun{s}(\Elec,K_i)$ holds (because of the induction hypothesis and the third condition) and hence $\fun{s}(\Elec,K_i\setminus C_r \cup C_a)\geq \fun{s}(\Elec,K')\geq\beta$.
Consequently, the algorithm will repeat the third step at least $l_p-l_n$ times, in each step removing a candidate from $C_r$ or, alternatively, a candidate $c'$ of a label occurring more than $i$ times with a lower $\wC(\Elec,c')$, and in each step adding a candidate from $C_a$ or a candidate $c''$ of a label occurring $i-1$ times with a larger $\wC(\Elec,c'')$.
Thus, $K_{i+1}$ has at most $l_n$ labels occurring $i-1$ times and, if exactly $l_n$ labels occur $i-1$ times, a score at least as good as the score of $K'$.

The result for \textbf{$\rich$} follows directly from an optimal solution of 
$\dns$ being an optimal solution of $\rich$:
As $\dns$ fulfills \pref{propty:pfgpmax}, a solution $S\in\Rulebase(\Elec)$ that is optimal for $\dns$ will always have the smallest possible value for $\distri(\Elec, S)_1$.
Therefore, $m-\distri(\Elec, S)_1 =\rich(\Elec,S)$ will always be as large as possible.
}%
While the algorithm for \cref{theorem:DssDnf} utilizes the lexicographic nature of $\dns$,
we show a result for a class of diversity indices which includes $\simps$ and $\shann$ next:
\begin{theorem}
  \label{theorem:DSSGP}
  \prob{MAX-$\paren(D, s)$-DSCR} is in \classP{} if 
  \begin{itemize}[nosep]
    \item
%     (i)
    $s$ is a separable function and there is an $\alpha \in\mathbb{N}_0$ polynomial in the input size so that, for each $c\in C$, $\wC(\Elec,c)\leq\alpha$,
    \item
%     (ii)
    the index $D$ can be expressed as $\sum_{l\in \bcks[m]}\sum_{i=1}^{n_l}\fun{t}(i)$ and, for all $i \in\bcks[k]$, it holds that $0<\fun{t}(i)$ and $\fun{t}(i)$ can be computed in polynomial time, and $t$ is strictly monotonically decreasing.
  \end{itemize}
\end{theorem}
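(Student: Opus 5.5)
Since $D$ and $s$ both decompose label by label, I would solve the problem with a dynamic program over the labels, indexed by committee size and score.

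For each label $l\in\bcks[m]$, sort the candidates in $\setFP(\Elec,C,l)$ by nonincreasing weight $\wC(\Elec,c)$. If exactly $j$ candidates of label $l$ are chosen, the diversity contribution of $l$ is fixed at $T(j)\ceq\sum_{i=1}^{j}\fun{t}(i)$, whichever $j$ are chosen. The only remaining question is which score is achievable. Because $s$ is separable, the set of achievable scores from $j$ candidates of label $l$ is $\{\sum_{c\in X}\wC(\Elec,c): X\subseteq \setFP(\Elec,C,l), |X|=j\}$. We only need the largest such score, which comes from taking the top $j$ candidates: the constraint $s\geq\beta$ is monotone, and the diversity of label $l$ depends only on $j$.

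This reduces the problem to choosing counts $(j_1,\dots,j_m)$ with $0\leq j_l\leq \abs{\setFP(\Elec,C,l)}$, $\sum_l j_l=k$, and $\sum_l W_l(j_l)\geq\beta$, where $W_l(j)$ is the sum of the top $j$ weights of label $l$. Among such vectors we maximize $\sum_l T(j_l)$.

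The DP table is $F[l][q][\sigma]$, the maximum value of $\sum_{l'\leq l}T(j_{l'})$ over count vectors for the first $l$ labels with $\sum j_{l'}=q$ and $\sum W_{l'}(j_{l'})=\sigma$, or $-\infty$ if no such vector exists. The transition is
\[ F[l][q][\sigma]=\max_{0\leq j\leq \min\{q,\abs{\setFP(\Elec,C,l)}\}} \bigl(F[l-1][q-j][\sigma-W_l(j)]+T(j)\bigr). \]
The answer is $\max_{\sigma\geq\beta}F[m][k][\sigma]$, and the committee is recovered by backtracking, taking the top $j_l$ candidates of each label $l$.

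The main point to check is that the table has polynomial size. The score dimension is bounded by $k\alpha$, which is polynomial because $\alpha$ is polynomial in the input size. So the table has $O(m\cdot k\cdot k\alpha)$ entries, each computed with $O(k)$ arithmetic operations on values of $t$. Each $t(i)$ is polytime computable, and the table holds real-valued sums of these values. For $\shann$ and $\simps$, the comparisons would need care, for example exact arithmetic on logarithms or a rational rewriting; I would treat comparing sums of $t$-values as a polytime primitive, as the hypothesis suggests.

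The conditions $t>0$ and strict monotonicity are not needed for correctness of this DP. They only serve to show that $\simps$ and $\shann$ fit the template. For $\simps$, the contribution of a label with count $n$ is $-n^2/k^2$, which equals $\sum_{i=1}^{n}-(2i-1)/k^2$. Adding the constant $k\cdot c$ shifts $D$ by the same amount for every committee of size $k$, since $\sum_l n_l=k$. This yields a positive, strictly decreasing $t$, for instance $t(i)=(2k-2i+1)/k^2$. $\shann$ is handled similarly via telescoping differences of $-(n/k)\log(n/k)$, using concavity for monotonicity. I expect the main obstacle to be this equivalence bookkeeping, that is, the argument that an additive constant per candidate preserves maximizers, together with the justification that restricting to the top-$j$ candidates per label loses no optimal solution.
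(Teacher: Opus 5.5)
Your proof is correct, but it follows a different route from the paper.

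The paper does not group candidates by label. It reduces directly to a single 0-1 Knapsack instance over the candidates:
\begin{itemize}
\item each item gets weight $n\alpha+1-w(c_i)$, so the capacity $k(n\alpha+1)-\beta$ encodes the score bound;
\item each item gets value $t(\pi(c_i))+\eta$ with $\eta=k\,t(1)+1$, where $\pi(c_i)$ is the rank of $c_i$ within its label by weight.
\end{itemize}
The large constant $\eta$, together with the capacity, forces a packing of value at least $k\eta$ to contain exactly $k$ items. This choice of $\eta$ relies on $t$ being positive and maximal at $1$. The strict decrease of $t$ then guarantees that an optimal packing takes a $\pi$-prefix of each label, so its value equals $D+k\eta$.

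You instead make the exchange argument explicit up front: for fixed label counts, take the top-weight candidates. You then collapse each label to a menu of options $(j, W_l(j), T(j))$ and run a multiple-choice knapsack DP with explicit cardinality and score coordinates.

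Your route has two advantages. First, it needs neither $t>0$ nor monotonicity, so it proves the statement for any index of the form $\sum_l g_l(n_l)$ with polytime-computable $g_l$. In particular it covers the label-weighted analogue (\cref{theorem:DSSGP:weights}) with no change. Second, it avoids the paper's somewhat delicate bookkeeping that the knapsack optimum has exactly $k$ items and picks prefixes.

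The paper's route has its own advantages. It is a compact reduction to a standard problem used as a black box. The same item-level template is reused for \cref{theorem:DSSWP} by swapping the roles of weights and values. Your DP would adapt just as easily there, by indexing on the integer diversity sum instead of the score.

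Both are pseudo-polynomial DPs whose tables are polynomial precisely because $\alpha$ is. Both also rely on the same primitive of comparing sums of $t$-values.
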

\begin{proof}
Given an instance $I_D$ of MAX-$\paren(D, s)$-DSCR with~$n$ candidates,
we construct an instance~$I_K$ of the 0-1 Knapsack problem in polynomial time.
We assume that~$\beta \leq k\cdot \alpha$ 
($\beta$ being the lower bound for the score, see \cref{def:maxdss}), 
since there is no solution for~$I_D$ otherwise.
For each candidate~$c_i$, we add an item~$x_i$
with the weight~$w_K(x_i) \ceq n\alpha + 1 - w(c_i)$
and the value~$v(x_i) = t(\pi(c_i)) + \eta$,
where~$\eta\ceq k\cdot t(1)+1$ and~$\pi$ outputs~$c_i$'s position in a descending ordering of the candidates with the same label as $c_i$ based on $w$.
The knapsack's bound is~$B\ceq k(n\alpha + 1) - \beta$.
Let, for a solution $X$ of $I_K$, $S(X) = \{c_i \mid x_i\in X\}$, $v(X)$ and $w_K(X)$ the value and weight of $X$, and, for a solution $S$ of $I_D$, $X(S) = \{x_i: c_i \in S\}$. 
We claim that $I_K$ has a solution~$X$ with value at least~$k\cdot \eta$
if and only if 
$S(X)$ is a solution to~$I_D$
and that $I_D$ is infeasible otherwise:

Let~$X$ be a solution to~$I_K$ with value at least~$k\cdot \eta$.
It follows that~$|X|\geq k$.
Assume that~$|X|>k$,
then~$w_K(X)\geq (k+1)(n\alpha + 1) - w(S(X)) \geq (k+1)(n\alpha + 1) - n\alpha = k(n\alpha+1)+1 > B$,
a contradiction.
Thus, $|X|=k$ and $w_K(X)\leq B\iff k(n\alpha + 1) - w(S(X)) \leq k(n\alpha + 1) - \beta \iff \beta \leq w(S(X))$.
Thus,
$S(X)$ is of size exactly~$k$ and respects the score constraint.
If $X$ is a solution to~$I_K$ with value less than~$k\cdot \eta$,
then~$|X|<k$. 
Since~$X$ is maximal,
there is no size-$k$ solution respecting the capacity constraints
and hence no solution to~$I_D$ that respects the score constraint.
Analogously,
$I_K$ has no solution with value at least~$k\cdot \eta$ if~$I_D$ has none.

\newcommand{\OSK}{\ensuremath{X_K^*}}
\newcommand{\OSD}{\ensuremath{S_D^*}}
\newcommand{\OSB}{\ensuremath{S^*}}
Finally, we show that, for an optimal solution $\OSK$ of $I_K$ with value at least~$k\cdot \eta$, $S(\OSK)$ is an optimal solution of $I_D$.
Note that, if, for a label $l_j$, $n_j$ many items $x_i$ with $\fun{\labof}(c_i) = l_j$ are chosen in $\OSK$, the $n_j$ items which come first in the order $\pi$ are always chosen and thus $v(\OSK)=\fun{D}(\Elec, S(\OSK)) + k\cdot \eta$.
Next, assume that $I_D$ has an optimal solution $\OSD\neq S(\OSK)$ with $\fun{D}(\OSD) > \fun{D}(S(\OSK))$.
Consider the committee $\OSB$ with $\fun{n_l}(\Elec, \OSB) = \fun{n_l}(\Elec, \OSD)$ for $l\in \bcks[m]$ and $c\in \OSB \iff \fun{\pi}(c)\leq \fun{n_j}(\Elec, \OSD)$ with $l_j=\fun{\labof}(c)$.
Thus, $\fun{D}(\Elec,\OSB) = \fun{D}(\Elec,\OSD) = \fun{v}(X(\OSB)) - k \cdot \eta > \fun{D}(\Elec, S(\OSK)) = \fun{v}(\OSK) - k \cdot \eta$, a contradiction.

Therefore, the problem can be solved, e.g.~by the solving the 0-1 Knapsack instance with dynamic programming in $\fun{\mathcal{O}}(nB)=\fun{\mathcal{O}}(n \paren(k(n\alpha + 1) - \beta))$ time.
\end{proof}

$\simps$ fulfills the condition in \cref{theorem:DSSGP} by using $\fun{t}(i)=2k+1-2i$, 
$\shann$ through $\fun{t}(i)=-i\fun{\log}(i)+\paren(i-1)\fun{\log}(i-1)+\fun{\log}(k)+2$ with $\fun{t}(1) =\fun{\log}(k) + 2$.
Thus, we have:

\begin{corollary}[\appref{corl:dssP}]
  \label{corl:dssP}
  \prob{MAX-$\paren(\simps, \mathrm{score}_{\mathrm{AV}})$-DSCR} is in \classP{} and, if considering a computational model in which the logarithm of natural numbers and addition and multiplication including a logarithm can be computed in polynomial time, \prob{MAX-$\paren(\shann, \mathrm{score}_{\mathrm{AV}})$-DSCR} is in \classP.
\end{corollary}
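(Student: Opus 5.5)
The plan is to apply \cref{theorem:DSSGP} directly, checking its two hypotheses for the score of AV and for each of $\simps$ and $\shann$.

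\emph{The score hypothesis.} The AV score is separable with $\wC(\Elec,c)=\abs{\brcs{a\in A: c\in\fun{U}(a)}}$, as noted before \cref{theorem:DssDnf}. Moreover $\wC(\Elec,c)\leq\abs{A}$ for every candidate, so we may take $\alpha\ceq\abs{A}$, which is polynomial in the input size.

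\emph{The index hypothesis.} An additive constant or a positive multiplicative factor does not change which committee maximizes the index. The family $\Rulebase(\Elec)$ has fixed size $k$, so $\sum_l n_l=k$. Hence any function $g$ of the form $g(i)=t(i)+\gamma$ changes $\sum_l\sum_{i\le n_l}t(i)$ only by the constant $k\gamma$. So it suffices to write each index as an affine image of $\sum_{l}\sum_{i=1}^{n_l}t(i)$, with positive slope, using the $t$ given after \cref{theorem:DSSGP}. Concretely, we verify the following.

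For $\simps$: we have $k^2\cdot\simps=-\sum_l n_l^2=-\sum_l\sum_{i=1}^{n_l}(2i-1)$. Adding $2k\cdot k=\sum_l\sum_{i\le n_l}2k$ gives $\sum_l\sum_{i\le n_l}(2k+1-2i)$. This $t(i)=2k+1-2i$ is positive for $i\le k$, strictly decreasing, and trivially polynomial-time computable.

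For $\shann$: telescoping gives $n\log n=\sum_{i=1}^{n}\bigl(i\log i-(i-1)\log(i-1)\bigr)$, with the convention $0\log 0=0$. Therefore
\[k\cdot\shann=k\log k-\sum_l n_l\log n_l,\]
and adding $2k$ yields $\sum_l\sum_{i\le n_l}t(i)$ with $t(i)=-i\log i+(i-1)\log(i-1)+\log k+2$. Since $x\log x$ is strictly convex, its increments $i\log i-(i-1)\log(i-1)$ strictly increase in $i$, so $t$ is strictly decreasing. For positivity, the increment is at most $\log i+1$ by the mean value theorem, so $t(i)\ge\log k-\log i+1>0$ for $i\le k$.

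\emph{The main obstacle.} The delicate point is computability for $\shann$: the values $t(i)$ are irrational, and \cref{theorem:DSSGP} feeds them into a knapsack. Since the knapsack dynamic program is indexed by weight, not value, it only needs to add and compare values. That is exactly why the corollary is stated under a computational model where logarithms, and sums and products involving them, are polynomial-time. Under that assumption the hypotheses of \cref{theorem:DSSGP} hold, and both claims follow. For $\simps$ everything is integral, so no such assumption is needed.
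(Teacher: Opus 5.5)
Your proposal is correct and follows the paper's proof essentially step by step. You take $\alpha=\abs{A}$ for the AV score and check the hypotheses of \cref{theorem:DSSGP} with the paper's own choices, $t(i)=2k+1-2i$ for $\simps$ and $t(i)=-i\log(i)+(i-1)\log(i-1)+\log(k)+2$ for $\shann$, relying on the same facts that positive affine rescalings preserve maximizers and that the mean value theorem gives positivity. Your telescoping derivation and the strict-convexity argument for strict monotonicity of the Shannon $t$ are slightly more explicit than the paper, which only asserts that this $t$ is strictly decreasing.
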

\appendixproof{corl:dssP}{
  For $\mathrm{score}_{\mathrm{AV}}$ it holds that
  $\wC(\Elec,c) = |\brcs{a \in A: c\in\fun{U}(a)}|\leq \abs{A}$, for each $c\in C$.

  For $\simps$, it holds that maximizing $-\sum_{i\in \bcks[m]} p_i^2$ yields the
  same solutions (with a different objective value) as maximizing $\paren(\sum_{i\in \bcks[m]} -n_i^2)+2k^2=\sum_{i\in \bcks[m]} -n_i^2+2\cdot n_i \cdot k$.
  Thus, $\fun{t}(i)=-2i+1+2k$ can be chosen (which is strictly monotonically decreasing),
  as it holds that $1\leq \fun{t}(i)\leq 2k+1$ for $i\in\bcks[k]$ and $\sum_{i=1}^n 2k -2i +1 = 2kn-n^2$.

  For $\shann$, let $M_r = \brcs{i \in \bcks[m]: \fun{p_i}(\Elec,S) > 0}$.
  It holds that maximizing
  \begin{align*}
    &-\sum_{i \in M_r} p_i\cdot \tlog{p_i}\\
    ={}&  -\sum_{i \in M_r} \frac{n_i}{k}\cdot \tlog{\frac{n_i}{k}}\\
    = {}& -\frac{1}{k }\paren(\sum_{i \in M_r} n_i\cdot \paren(\tlog{n_i}-\tlog{k}))\\
    = {}& -\frac{1}{k } \paren( - k \cdot \tlog{k} + \sum_{i \in M_r} n_i\cdot \tlog{n_i})
  \end{align*}
  yields the same results as maximizing $- \sum_{i \in M_r} n_i\cdot \tlog{n_i}$ which, in turn, yields the same results as maximizing
  \begin{align*}
    & k\cdot\paren(\tlog{k} +2) -\sum_{i \in M_r} n_i \cdot \tlog{n_i}\\
    ={}&  -\sum_{i \in M_r} n_i \paren(\tlog{n_i} - \tlog{k} - 2) 
  \end{align*}
  Thus, $\fun{t}(i)=-i\fun{\log}(i)+\paren(i-1)\fun{\log}(i-1)+\fun{\log}(k)+2$ with $\fun{t}(1) =\fun{\log}(k) + 2$ can be chosen,
  which is strictly monotonically decreasing, and $\fun{t}(i) > 0$ for $i\in\bcks[k]$:

  Let $\fun{f}(x) = x\fun{\log}(x)$ with $\fun{f'}(x) = 1+\fun{\log}(x)$.
  According to the mean value theorem, $\exists c \in \paren(i-1, i)$ such that $\fun{f}(i)-\fun{f}(i-1) = \fun{f'}(c) = 1+\fun{\log}(c)$.
  Thus, it holds for all $ i \in \bcks[k]$ that
  \begin{align*}
    \fun{t}(i) ={}& -i\fun{\log}(i)+\paren(i-1)\fun{\log}(i-1)+\fun{\log}(k)+2 \\
    = {}& - 1 - \fun{\log}(c) + \fun{\log}(k)+2 > 0,
  \end{align*}
  as $\fun{\log}(c) \leq \fun{\log}(k)$ for $i \in \bcks[k]$.
}

However, we cannot apply \cref{theorem:DSSGP} for SAV with the previously mentioned approach to transform SAV's score into a separable function,
as it leads to weights that could not be bounded by a value polynomial in the input size.
Therefore, we show the following, which is also applicable to SAV:
\begin{theorem}[\appref{theorem:DSSWP}]
  \label{theorem:DSSWP}
  \prob{$\paren(D, s)$-DSCR} is in \classP{} if
    $s$ is a separable function and
    $D$ has the same properties as those in \cref{theorem:DSSGP} and there is a $\zeta\in\mathbb{N}$ polynomial in the input size such that $0<\fun{t}(i)\leq\zeta$ for all $i \in\bcks[k]$.
\end{theorem}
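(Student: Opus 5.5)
The plan is to swap the roles of weight and value in the knapsack reduction from the proof of \cref{theorem:DSSGP}. This is a decision problem: we must decide whether some $S\in\Rulebase(\Elec)$ has $s(\Elec,S)\geq\beta$ and $D(\Elec,S)\geq\delta$. The weights $\wC(\Elec,c)$ may now be exponential in binary, as with SAV. The values $t(i)$, however, are bounded by the polynomial $\zeta$. So we run a dynamic program whose table is indexed by achieved diversity rather than by weight, and which maximizes the score.

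As in the proof of \cref{theorem:DSSGP}, for each label $l_j$ order its candidates by non-increasing $\wC(\Elec,\cdot)$. Whenever $n_j$ candidates of label $l_j$ are chosen, we may assume they are the first $n_j$ in this order. Exchanging a chosen candidate for an unchosen, earlier one of the same label leaves $D$ unchanged, since $D$ depends only on $(n_l)_l$, and does not decrease $s$. Let $W_j(r)$ be the sum of the $r$ largest weights within label $l_j$, for $0\le r\le |\setFP(\Elec,C,j)|$, and let $T(r)=\sum_{i=1}^r t(i)$, which is an integer in $[0,k\zeta]$. The problem then reduces to choosing $r_1,\dots,r_m$ with $\sum_j r_j=k$ so that $\sum_j T(r_j)\ge\delta$ and $\sum_j W_j(r_j)$ is maximal. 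We accept iff this maximum is at least $\beta$.

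This is a grouped knapsack. Define $F[j][q][v]$ as the maximal value of $\sum_{j'\le j}W_{j'}(r_{j'})$ over choices with $\sum_{j'\le j} r_{j'}=q$ and $\sum_{j'\le j}T(r_{j'})=v$, where $v\in\{0,\dots,k\zeta\}$ (or $-\infty$ if no such choice exists). The transition is $F[j][q][v]=\max_r F[j-1][q-r][v-T(r)]+W_j(r)$. There are $O(mk^2\zeta)$ entries, each computed with $O(k)$ additions of binary numbers of polynomial length. The running time is therefore polynomial, because $\zeta$ is polynomial. Strict monotonicity and positivity of $t$ are not even needed for correctness here; we only need integrality, or else polynomial-size support of the sums.

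The main obstacle is this value domain. If $t$ takes non-integer values, as for $\shann$, the reachable sums $\sum_j T(r_j)$ are no longer contained in a small integer range. I would handle this in two ways. First, I would argue that a diversity value is determined by the multiset $(r_j)$, i.e.\ by the $\distri$ vector. Second, I would index the table by that vector's partial contribution, or simply assume, as the statement suggests, that $t$ is integer-valued with $t(i)\le\zeta$. In the integer case the $v$-dimension has size $k\zeta+1$ and the argument goes through. The remaining detail is rounding $\delta$ up to an integer.
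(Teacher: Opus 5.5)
Your proof is correct. Its core idea is the same as the paper's: put the polynomially bounded diversity on the dynamic-programming axis and maximize the possibly exponentially large score, i.e.\ swap the roles that weight and value played in the proof of \cref{theorem:DSSGP}. The implementation differs, though.

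The paper reduces to a plain 0-1 knapsack over individual candidates. Item weights are $\eta - t(\pi(c_i))$ with $\eta = k\zeta+\zeta+1$, item values are $w(c_i)+n\alpha+1$, and the capacity is $k\eta-\delta$. The additive offsets force every solution with value at least $\beta+k(n\alpha+1)$ to contain exactly $k$ items. Strict monotonicity of $t$ is what ensures that a selection which is not a $\pi$-prefix within some label is charged less diversity than it really has ($w_K(X)\ge k\eta-D(S(X))$). The paper also needs some preliminary case distinctions on $\beta$ and $\delta$.

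Your grouped DP over labels instead has an explicit cardinality coordinate $q$ and uses the prefix sums $W_j(r)$ and $T(r)$. This makes the size constraint and the prefix restriction exact rather than encoded. So you need neither the offsets nor the case analysis, and, as you note, you do not use monotonicity or positivity of $t$ at all; a polynomial bound on $|t|$ suffices. The paper's version, in turn, reuses a textbook knapsack algorithm and makes the parallel with \cref{theorem:DSSGP} transparent.

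On integrality, you are right that $t$ must be integer-valued. The paper's proof needs this equally, since its knapsack weights $\eta-t(\pi(c_i))$ must be integers for the $\mathcal{O}(nB)$ dynamic program. That is presumably why the paper says $\shann$ is not covered.

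Your alternative of indexing the table by partial $\distri$ vectors would not rescue the non-integer case. The number of possible multisets of label counts can be as large as the number of integer partitions of $k$, which is superpolynomial in $k$. Drop that hedge and state integrality of $t$ as an explicit assumption.
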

\appendixproof{theorem:DSSWP}{
\newcommand{\OSK}{\ensuremath{X_K^*}}
\newcommand{\OSD}{\ensuremath{S_D^*}}
\newcommand{\OSB}{\ensuremath{S^*}}
Given an instance $I_D$ of $\paren(D, s)$-DSCR with~$n$ candidates, $\beta$ as the score bound, and $\delta$ as the diversity bound,
we construct an instance~$I_K$ of the 0-1 Knapsack problem in polynomial time.
We assume that $\beta \geq 0$---otherwise, we only need to optimize the diversity, which is in \classP{} for such diversity indices, as the same approach as in the proof of \cref{obs:indicespoly} can be used because $\fun{t}(i)$ is strictly monotonically decreasing.
In addition, we assume that $\delta \geq 0$---otherwise, the problem is in \classP{}, as we only need to find a committee of size $k$ fulfilling the score constraint which, if the problem is feasible, can be achieved by choosing $k$ many candidates with the highest weights.
Furthermore, we assume that $\delta\leq k \zeta$ (which can be checked easily), as the problem is infeasible otherwise.

For each candidate~$c_i$, we add an item~$x_i$
with the weight
\begin{equation*}
  w_K(x_i) \ceq -t(\pi(c_i)) + \eta \text{ with } \eta \ceq k\zeta + \zeta + 1,
\end{equation*}
where $\pi$ outputs~$c_i$'s position in a descending ordering of the candidates with the same label as $c_i$ based on $w$,
and the value
\begin{equation*}
  v(x_i) \ceq w(c_i)+ n \alpha +1.
\end{equation*}
Thus, between two candidates $c_i, c_j$ with the same label of which $c_i$ contributes more to the score---i.e.~$w(c_i)>w(c_j)$ and $\pi(x_i) < \pi(x_j)$---, $v(x_i)>v(x_j)$, $t(\pi(x_i))>t(\pi(x_j))$, and hence $w_{K}(x_i)<w_{K}(x_j)$.
Therefore, replacing an item $x_j$ in a solution to $I_K$ by an item $x_i$ with the same label but with a smaller value of $\pi$ will lead to a solution to $I_K$ with a higher value and a lower weight.
Furthermore, we set the knapsack's bound to
\begin{equation*}
  B\ceq -\delta + k \eta.
\end{equation*}
Let, for a solution $X$ of $I_K$, $S(X) = \{c_i \mid x_i\in X\}$, $v(X)$ and $w_K(X)$ be the value and weight of $X$, and, for a solution $S$ of $I_D$, $X(S) = \{x_i: c_i \in S\}$.
We claim the following:
\begin{enumerate}[nosep]
  \item If $X$ is a solution to $I_K$ with value at least $u_v\ceq \beta + k\paren(n \alpha +1)$, then $S(X)$ is a solution to $I_D$.
  \item If $S$ is a solution to $I_D$, there is a (possibly different) solution $S'$ to $I_D$ such that $D(S) = D(S')$ and $X(S')$ is a solution to $I_K$ with value at least $u_v$.
\end{enumerate}
1. Let~$X$ be a solution to~$I_K$ with value at least~$u_v$.
It follows that~$|X|\geq k$, otherwise
\begin{align*}
  v(X) \leq {}& \paren(k-1)\paren(n\alpha + 1 +\alpha)\\
  ={}& k\paren(n\alpha +1 )+k\alpha -n\alpha - 1 -\alpha\\
  \mathrel{\overset{k\leq n}{<}} {}& k\paren(n\alpha +1)\leq u_v,
\end{align*}
a contradiction.
Next, assume that~$|X| = k + i>k$,
then
\begin{align*}
  w_K(X) \geq{}& \paren(k+1)\paren(-\zeta + \eta) =  \paren(k+1)\paren(k \zeta + 1)\\
  ={}& k^2\zeta + k + k\zeta + 1\\
  >{}& k^2\zeta + k\zeta + k = k\eta \geq B,
\end{align*}
a contradiction.
Thus, $|X|=k$.
In addition, $-\delta + k\eta = B \geq w_K(X) \geq k\eta - D(S(X)) \iff \delta \leq D(S(X))$ and $v(X) = k\paren(n\alpha + 1) + w(S(X)) \geq u_v = \beta + k\paren(n \alpha +1) \iff w(S(X)) \geq \beta$.
Thus, $S(X)$ fulfills the score and diversity constraints and is therefore a solution to $I_D$.

2. Let~$\OSD$ be a solution to~$I_D$.
Consider the committee $\OSB$ with $\fun{n_l}(\Elec, \OSB) = \fun{n_l}(\Elec, \OSD)$ for $l\in \bcks[m]$ and $c\in \OSB \iff \fun{\pi}(c)\leq \fun{n_j}(\Elec, \OSD)$ with $l_j=\fun{\labof}(c)$.
Thus,
\begin{gather*}
  \paren(\fun{D}(\Elec,\OSB) = -\fun{w}(X(\OSB)) + k \eta = \fun{D}(\Elec,\OSD) \geq \delta)\\
  \Rightarrow \paren(w(X(\OSB)) \leq -\delta + k\eta),\\
  \paren(\fun{w}(\OSB) \geq \fun{w}(\OSD)\geq \beta)\\
  \Rightarrow (v(X(\OSB)) = \fun{w}(\OSB) + k\paren(n\alpha + 1)\\
  \geq \beta + k\paren(n\alpha + 1) = u_v).
\end{gather*}
Therefore, $X(\OSB)$ is a solution to $I_K$.

Thus, the problem can be solved by the solving the 0-1 Knapsack instance with dynamic programming in $\fun{\mathcal{O}}(nB)=\fun{\mathcal{O}}(n \paren(k \eta-\delta))$ time.
}
The proof of \cref{theorem:DSSWP} is similar in nature to the proof of \cref{theorem:DSSGP} by constructing an instance of the 0-1 knapsack problem, but the diversity constraint is expressed with the help of the weights and the score constraint with the help of the values of the items.
While $\shann$ does not fulfill the imposed conditions of \cref{theorem:DSSWP}%
---leaving the question open whether \prob{$\paren(\shann, \mathrm{score}_{\mathrm{SAV}})$-DSCR} is in \classP{}---%
the previously mentioned choice of $\fun{t}(i)$ for $\simps$ fulfills them. Therefore:
\begin{corollary}
  \prob{$\paren(\simps, \mathrm{score}_{\mathrm{SAV}})$-DSCR} is in \classP{}.
\end{corollary}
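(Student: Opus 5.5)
The plan is to obtain the statement as a direct instance of \cref{theorem:DSSWP}. Three things need checking: that $\mathrm{score}_{\mathrm{SAV}}$ is separable, that $\simps$ has the form required there with a polynomially bounded $t$, and that the decision threshold $\delta$ for $\simps$ translates into an equivalent threshold for the rescaled index that the theorem actually handles.

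\emph{Separability of $\mathrm{score}_{\mathrm{SAV}}$.} As noted before \cref{theorem:DssDnf}, multiplying the SAV score by $\ell$, the least common multiple of all $\abs{\fun{U}(a)}$, gives the separable function $\wC(\Elec,c)=\sum_{a\in A: c\in\fun{U}(a)} \ell/\abs{\fun{U}(a)}$. This function has polynomial encoding length and can be computed in polynomial time. The bound $\beta$ is scaled by the same factor $\ell$, so the instance is unchanged up to this rescaling. The weights $w$ may be exponentially large, but that is harmless: in the knapsack instance built in the proof of \cref{theorem:DSSWP}, the score enters only through the item values, and the capacity $B=k\eta-\delta$ depends only on $\zeta$, $k$ and $\delta$. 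The $\fun{\mathcal{O}}(nB)$ dynamic program indexed by weight therefore stays polynomial. This is precisely why \cref{theorem:DSSGP} cannot be used here but \cref{theorem:DSSWP} can.

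\emph{Form of the index.} As in the proof of \cref{corl:dssP}, set $\fun{t}(i)=2k+1-2i$. Then $\sum_{i=1}^{n_l}\fun{t}(i)=2kn_l-n_l^2$, so
\[
D'(\Elec,\Com)\ceq\sum_{l\in\bcks[m]}\sum_{i=1}^{n_l}\fun{t}(i)=2k^2-\sum_{l}n_l^2=k^2\,\simps(\Elec,\Com)+2k^2 .
\]
For $i\in\bcks[k]$ the function $t$ satisfies $1\le \fun{t}(i)\le 2k+1\eqqcolon\zeta$, it is strictly decreasing, and it is trivially computable in polynomial time. Hence all hypotheses of \cref{theorem:DSSWP} hold for $D'$.

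\emph{Translating the threshold.} This is the only step that needs real care. Given an instance of $\paren(\simps,\mathrm{score}_{\mathrm{SAV}})$-DSCR with diversity bound $\delta$, I would set $\delta'\ceq\lceil k^2\delta+2k^2\rceil$. Since $D'$ is integer-valued and is a strictly increasing affine function of $\simps$, a committee satisfies $\simps\ge\delta$ if and only if it satisfies $D'\ge\delta'$. If $\delta'\le 0$, every committee already meets the diversity bound. In that case I fall back on the easy case handled at the start of the proof of \cref{theorem:DSSWP}: pick the $k$ candidates of largest weight and check the score. If $\delta'>k\zeta$, the instance is rejected. Otherwise I call the algorithm of \cref{theorem:DSSWP} on $(D',\ell\cdot\mathrm{score}_{\mathrm{SAV}})$ with bounds $(\delta',\ell\beta)$. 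All of these transformations take polynomial time, which yields membership in \classP{}.
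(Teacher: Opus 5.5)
Your proposal is correct and follows the paper's route: you plug the lcm-scaled separable SAV weights and $t(i)=2k+1-2i$, bounded by $\zeta=2k+1$, into \cref{theorem:DSSWP}, which is exactly what the paper does. You also translate the $\simps$ bound $\delta$ into the integer bound $\lceil k^2\delta+2k^2\rceil$ for the rescaled index; the paper leaves this step implicit, and you handle it correctly.
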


\subsection{Experiments}
\label{sec:exps}
\appendixsection{sec:exps}
\NewDocumentCommand{\scrRule}{m}{\Rule_{\mathrm{scr}}^{#1}}
\NewDocumentCommand{\scrRuleX}{mm}{\text{#1}_{\mathrm{scr}}^{#2}}
\NewDocumentCommand{\satRule}{m}{#1_{\mathrm{sat}}^{-1}}

To evaluate the problems,
we use datasets with approval preferences from Pabulib \citep{ijcai2023p297}---a collection of participatory budgeting data, a scenario in which incorporating diversity may be desirable---, in which categories (e.g.~urban greenery) and/or targets (e.g.~adults) are assigned to the candidates:
For each such instance, we create up to three new instances of our model by assigning to a candidate as the label (1) the categories, (2) the targets, or (3) the union of the categories and targets (e.g.~$\brcs{\text{urban greenery}, \text{adults}}$ as a label and different sets forming different labels).
We also transformed two datasets \citep{LaSt08,Laslier_2004} with approval preferences from PrefLib \citep{mattei2017apreflib} about the French presidential election in 2002, consisting of seven instances overall, by assigning the combination of gender and political leaning as the label to each candidate.
The dimensions of the experimental data can be seen in \cref{fig:dims}.
We removed instances with $\abs{C}=m$ because each committee leads to the optimal diversity for them, as each diversity index considered satisfies \pref{propty:UniqueOptimal}.
In the following, we show results for $k=10$.
For this, we discarded instances with $\abs{C}\leq k$, which results in $687$ instances.
We conducted the same experiments for $k\in\brcs{6, 8}$ as well, shown in \cref{app:sec:exps}, which support the main takeaways presented here.

The code for creating the instances and running the experiments is published in \citep{Our_Repo}.

\begin{figure}\centering%
  \makebox{\includegraphics{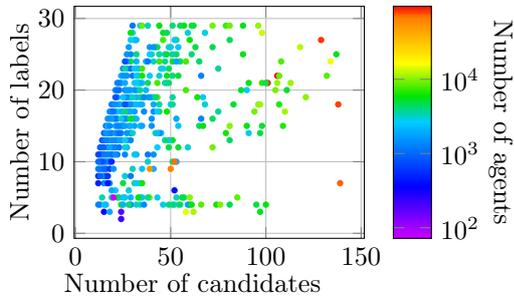}}%
  %\quad%
  \caption{The dimensions of the experimental data, where the color of each point represents the average number of agents of all instances with the given number of labels and candidates.}
  \label{fig:dims}
\end{figure}

\mypar{Experimental Setup.}
To investigate \prob{MAX-$\paren(D, s)$-DSCR}, i.e.~the influence of weakening the score constraint on the diversity reached,
we consider $\mathrm{score}_{\mathrm{AV}}$ and $\mathrm{score}_{\mathrm{SAV}}$ (for which solving \prob{MAX-$\paren(D, s)$-DSCR} is in \classP{}). For a given diversity index, let $\scrRule{p}$ be the rule returning the committees with the highest diversity among the committees reaching at least $p\%$ of the highest value of $\mathrm{score}_{\Rule}$.

To examine \prob{MAX-$D$-DSAT}, i.e.~the influence of the satisfaction constraints, we additionally consider CC, 
PAV, the Method of Equal Shares (Rule X), and Phragmén's sequential rule (seq-Phragmén) (see \citep{lackner2023multi} for definitions).
For each rule $\Rule$ considered, we first compute one committee $S$ using the Python library \emph{abcvoting} \citep{joss-abcvoting} with default parameters and refer to the rule returning this committee as~$\Rule$.
Based on the satisfactions of the agents with $S$,
we look at the rule that returns the committees
with the highest diversity reachable
when the satisfaction of each agent can be decreased by at most one,
which we denote as $\satRule{\Rule}$.

We also compute all winning committees for these rules with \emph{abcvoting} to investigate whether the diversity index could serve as a tiebreaker between them.

As the results for SAV,
PAV, Rule X, and seq-Phragmén
are very similar to the results for AV,
we focus on AV and CC here and show the results for the other rules
in \cref{app:sec:exps}.

\mypar{Experimental Results.}
\begin{figure*}[ht]
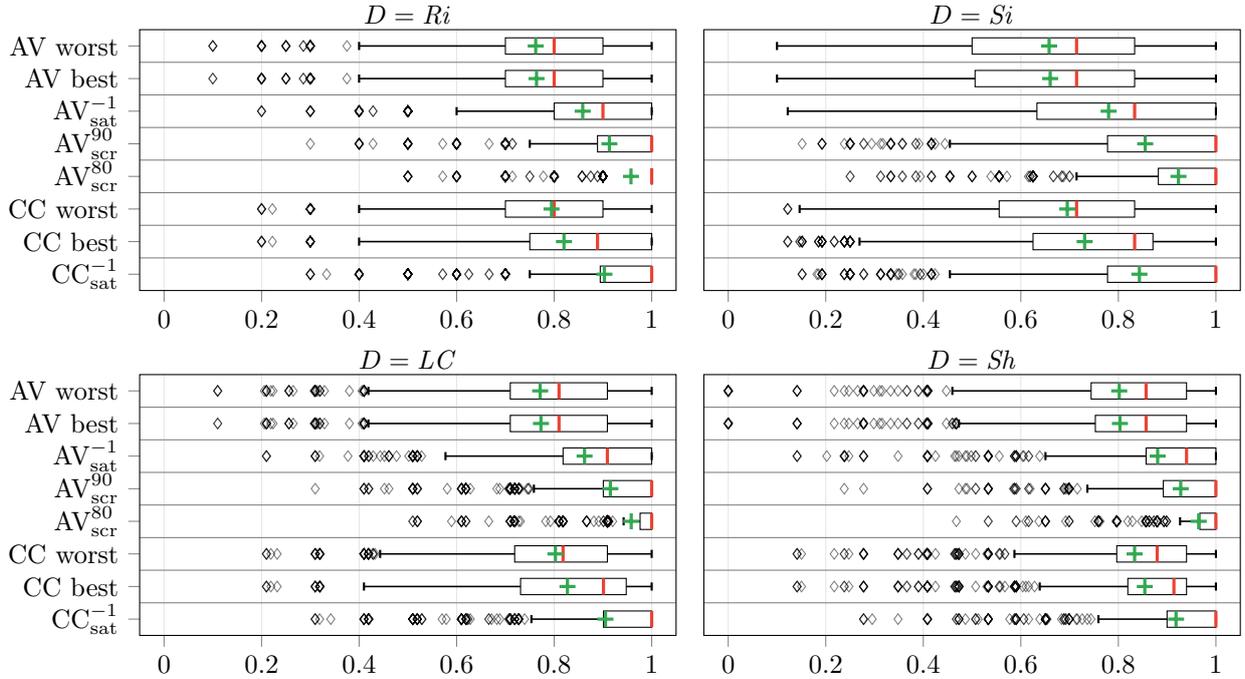

  \centering%
  \includegraphics{external/main-figure1.pdf}
  \includegraphics{external/main-figure2.pdf}

  \includegraphics{external/main-figure3.pdf}
  \includegraphics{external/main-figure4.pdf}
  \caption{The proportion of the optimal diversity reached on the experimental data when using the specified diversity index $D$.
  ``$\Rule$ best'' (``$\Rule$ worst'') refers to the rule choosing the committees with the highest (lowest) diversity among the winning committees of $\Rule$.
  The red line indicates the median, the green cross the mean.}
  \label{fig:exps}
\end{figure*}%
We want to highlight the following observations, based on \cref{fig:exps}:
On average, both AV and CC (without score or satisfaction constraints) achieve roughly $70\%$ of the optimal diversity when measured with $\simps$ and
$80\%$ when using the other indices.
Thus, the diversity can indeed be improved.
Choosing the winning committee with the highest diversity
rarely makes a difference for AV,
which has multiple winning committees for only around $5\%$ of the instances.
In contrast, the diversity differs between the winning committees of CC---which has multiple winning committees for around $30\%$ of the instances---for around $20\%$ of the instances.
Furthermore, CC achieves, even when choosing the winning committee with the smallest diversity, a slightly higher proportion of the optimal diversity than AV on average.
Similarly, $\satRule{\text{CC}}$ also performs better than $\satRule{\text{AV}}$ on average: $\satRule{\text{CC}}$ reaches a higher diversity for around $38$--$44\%$ and at least the same diversity for around $83$--$87\%$ of the instances (depending on the index).

For AV, we also compare $\satRule{\text{AV}}$ and $\scrRuleX{AV}{90}$: On average, $\scrRuleX{AV}{90}$ reaches $4$--$7\%$ more of the optimal diversity than $\satRule{\text{AV}}$.
Consequently, $\scrRuleX{AV}{90}$ can also lead to a noticeable change compared to AV: The percentage of the optimal diversity achieved on average increases by $12$--$19$.
It is also interesting that the gain in diversity is larger overall for $\scrRuleX{AV}{90}$ compared to AV than for $\scrRuleX{AV}{80}$ compared to~$\scrRuleX{AV}{90}$.
The same applies for the gain from $\scrRuleX{AV}{90}$ to $\scrRuleX{AV}{80}$ compared to that from $\scrRuleX{AV}{80}$ to $\scrRuleX{AV}{70}$ (see \cref{app:sec:exps} for the plots),
which suggests that there are diminishing returns when weakening the score constraint.
In addition, for each diversity index, there are instances for which even allowing a score reduction of $50\%$ does not lead to the optimal diversity.

When comparing the four diversity indices, it seems most challenging to achieve the optimal diversity when using $\simps$ for each rule, as visualized in \cref{fig:exps}.
\toappendix{
The dimension of the experimental data when using $k=8$ and $k=6$ can be seen in \cref{fig:dims2}.
Note, that the number of instances increases for smaller $k$ ($729$ instances for $k=8$, $773$ for $k=6$), because we discard instances with at most $k$ many candidates.

For each diversity index considered, the proportion of the optimal diversity reached over all experimental data for the different rules (including the rules that represent our approaches of incorporating diversity, i.e.~$\satRule{\Rule}$ and $\scrRule{p}$) are visualized using box plots in \crefrange{fig:richness10}{fig:n26}, which also include results for $k\in\brcs{6,8,10}$ and seq-Phragmén, Rule X, PAV, and SAV.

\cref{tab:exp:ndo} shows the average proportion of the optimal diversity that a rule reaches for each $k\in\brcs{6,8,10}$, diversity index, and rule considered.
Similarly, the number of instances for which the rule achieves the optimal diversity can be seen in \cref{tab:exp:optim:ndo}.
These results support the qualitative results mentioned in the main body of the paper.
Here, we want to highlight the following, additional observations based on these tables and plots:

When decreasing $k$, it holds that, for the $k$ investigated and for each rule and diversity index considered, the proportion of instances for which the optimal diversity is reached increases, with the only exceptions occurring when looking at $\scrRuleX{AV}{40}, \scrRuleX{AV}{30}, \scrRuleX{SAV}{50}, \scrRuleX{AV}{40}$, and $\scrRuleX{AV}{30}$ (i.e.~when the scoring constraints are relatively weak), for which the proportions stay the same and the optimal diversity is already reached for at least $97\%$ of the instances with $k=10$.
The average percentage of the optimal diversity reached also increases in most cases
when decreasing $k$ and $\satRule{\Rule}$ is considered, with only a few exceptions when using $\shann$.
However, there are far more exceptions in which this percentage stays the same when looking at $\scrRule{p}$.

In contrast, the number of instances for which the winning committees of CC do not all reach the same diversity (so that maximizing the diversity could be used as a tiebreaker) decreases with decreasing $k$.
With decreasing $k$, the proportion of the instances for which $\satRule{\text{CC}}$ achieves better results than $\satRule{\text{AV}}$ decreases as well, but the proportion of the instances for which $\satRule{\text{CC}}$ achieves at least the same diversity increases.

Lastly, we want to highlight that the optimal diversity is reached for at least $73\%$ of the data when reducing the score to be achieved by $20\%$ of the optimal score (i.e.~$\scrRule{80}$) for each combination of diversity index, $k$, and rule considered,
and for at least $84\%$ of the data when allowing a reduction of the score by $30\%$. 

\begin{figure*}
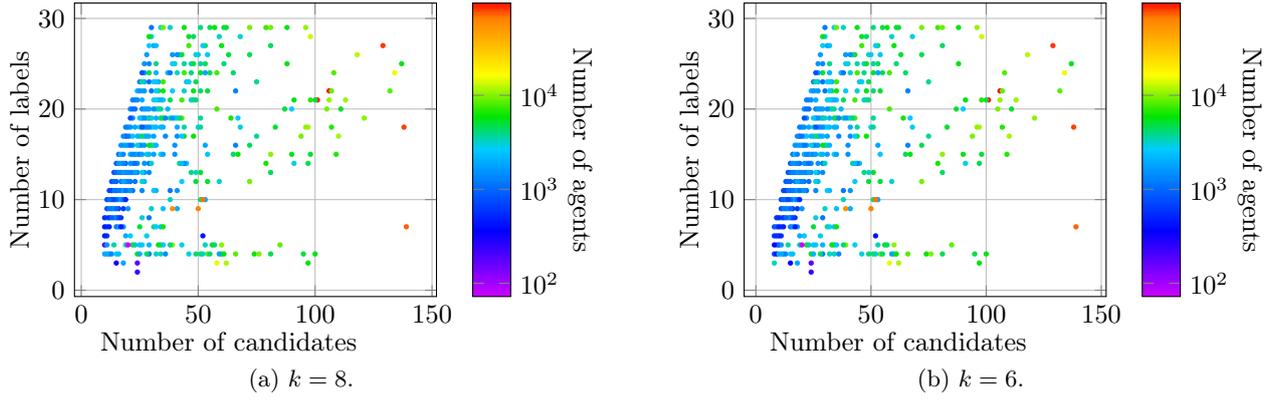

  \centering%
  \begin{subfigure}[t]{.47\textwidth}
		\centering
		\includegraphics{external/main-figure5.pdf}
		\caption{$k=8$.}
	\end{subfigure}\quad
	\begin{subfigure}[t]{.47\textwidth}
		\centering
		\includegraphics{external/main-figure6.pdf}
		\caption{$k=6$.}
	\end{subfigure}%
  \caption{The dimensions of the experimental data, where the color of each point represents the average number of agents of all instances with the given number of labels and candidates.}
  \label{fig:dims2}
\end{figure*}

\begin{figure*}\centering%
  \makebox{\includegraphics{external/main-figure7.pdf}}%
  %\quad%
  \caption{\boldmath The proportion of the optimal diversity reached over all experimental data with $k=10$ when using $\rich$ for the different rules and approaches we consider. ``$\Rule$ best'' (``$\Rule$ worst'') refers to the rule that chooses the committees with the highest (lowest) diversities among the winning committees of $\Rule$. If only $\Rule$ is written, the diversity of the winning committee that \emph{abcvoting} returns for $\Rule$ is considered. The red line indicates the median, the green cross the mean.}
  \label{fig:richness10}
\end{figure*}

\begin{figure*}\centering%
  \makebox{\includegraphics{external/main-figure8.pdf}}%
  %\quad%
  \caption{\boldmath The proportion of the optimal diversity reached over all experimental data with $k=8$ when using $\rich$ for the different rules and approaches we consider. ``$\Rule$ best'' (``$\Rule$ worst'') refers to the rule that chooses the committees with the highest (lowest) diversities among the winning committees of $\Rule$. If only $\Rule$ is written, the diversity of the winning committee that \emph{abcvoting} returns for $\Rule$ is considered. The red line indicates the median, the green cross the mean.}
  \label{fig:richness8}
\end{figure*}

\begin{figure*}\centering%
  \makebox{\includegraphics{external/main-figure9.pdf}}%
  %\quad%
  \caption{\boldmath The proportion of the optimal diversity reached over all experimental data with $k=6$ when using $\rich$ for the different rules and approaches we consider. ``$\Rule$ best'' (``$\Rule$ worst'') refers to the rule that chooses the committees with the highest (lowest) diversities among the winning committees of $\Rule$. If only $\Rule$ is written, the diversity of the winning committee that \emph{abcvoting} returns for $\Rule$ is considered. The red line indicates the median, the green cross the mean.}
  \label{fig:richness6}
\end{figure*}

\begin{figure*}\centering%
  \makebox{\includegraphics{external/main-figure10.pdf}}%
  %\quad%
  \caption{\boldmath The proportion of the optimal diversity reached over all experimental data with $k=10$ when using $\shann$ for the different rules and approaches we consider. ``$\Rule$ best'' (``$\Rule$ worst'') refers to the rule that chooses the committees with the highest (lowest) diversities among the winning committees of $\Rule$. If only $\Rule$ is written, the diversity of the winning committee that \emph{abcvoting} returns for $\Rule$ is considered. The red line indicates the median, the green cross the mean.}
  \label{fig:shann10}
\end{figure*}

\begin{figure*}\centering%
  \makebox{\includegraphics{external/main-figure11.pdf}}%
  %\quad%
  \caption{\boldmath The proportion of the optimal diversity reached over all experimental data with $k=8$ when using $\shann$ for the different rules and approaches we consider. ``$\Rule$ best'' (``$\Rule$ worst'') refers to the rule that chooses the committees with the highest (lowest) diversities among the winning committees of $\Rule$. If only $\Rule$ is written, the diversity of the winning committee that \emph{abcvoting} returns for $\Rule$ is considered. The red line indicates the median, the green cross the mean.}
  \label{fig:shann8}
\end{figure*}

\begin{figure*}\centering%
  \makebox{\includegraphics{external/main-figure12.pdf}}%
  %\quad%
  \caption{\boldmath The proportion of the optimal diversity reached over all experimental data with $k=6$ when using $\shann$ for the different rules and approaches we consider. ``$\Rule$ best'' (``$\Rule$ worst'') refers to the rule that chooses the committees with the highest (lowest) diversities among the winning committees of $\Rule$. If only $\Rule$ is written, the diversity of the winning committee that \emph{abcvoting} returns for $\Rule$ is considered. The red line indicates the median, the green cross the mean.}
  \label{fig:shann6}
\end{figure*}

\begin{figure*}\centering%
  \makebox{\includegraphics{external/main-figure13.pdf}}%
  %\quad%
  \caption{\boldmath The proportion of the optimal diversity reached over all experimental data with $k=10$ when using $\simps$ for the different rules and approaches we consider. ``$\Rule$ best'' (``$\Rule$ worst'') refers to the rule that chooses the committees with the highest (lowest) diversities among the winning committees of $\Rule$. If only $\Rule$ is written, the diversity of the winning committee that \emph{abcvoting} returns for $\Rule$ is considered. The red line indicates the median, the green cross the mean.}
  \label{fig:simps10}
\end{figure*}

\begin{figure*}\centering%
  \makebox{\includegraphics{external/main-figure14.pdf}}%
  %\quad%
  \caption{\boldmath The proportion of the optimal diversity reached over all experimental data with $k=8$ when using $\simps$ for the different rules and approaches we consider. ``$\Rule$ best'' (``$\Rule$ worst'') refers to the rule that chooses the committees with the highest (lowest) diversities among the winning committees of $\Rule$. If only $\Rule$ is written, the diversity of the winning committee that \emph{abcvoting} returns for $\Rule$ is considered. The red line indicates the median, the green cross the mean.}
  \label{fig:simps8}
\end{figure*}

\begin{figure*}\centering%
  \makebox{\includegraphics{external/main-figure15.pdf}}%
  %\quad%
  \caption{\boldmath The proportion of the optimal diversity reached over all experimental data with $k=6$ when using $\simps$ for the different rules and approaches we consider. ``$\Rule$ best'' (``$\Rule$ worst'') refers to the rule that chooses the committees with the highest (lowest) diversities among the winning committees of $\Rule$. If only $\Rule$ is written, the diversity of the winning committee that \emph{abcvoting} returns for $\Rule$ is considered. The red line indicates the median, the green cross the mean.}
  \label{fig:simps6}
\end{figure*}

\begin{figure*}\centering%
  \makebox{\includegraphics{external/main-figure16.pdf}}%
  %\quad%
  \caption{\boldmath The proportion of the optimal diversity reached over all experimental data with $k=10$ when using $\dns$ for the different rules and approaches we consider. ``$\Rule$ best'' (``$\Rule$ worst'') refers to the rule that chooses the committees with the highest (lowest) diversities among the winning committees of $\Rule$. If only $\Rule$ is written, the diversity of the winning committee that \emph{abcvoting} returns for $\Rule$ is considered. The red line indicates the median, the green cross the mean.}
  \label{fig:n210}
\end{figure*}

\begin{figure*}\centering%
  \makebox{\includegraphics{external/main-figure17.pdf}}%
  %\quad%
  \caption{\boldmath The proportion of the optimal diversity reached over all experimental data with $k=8$ when using $\dns$ for the different rules and approaches we consider. ``$\Rule$ best'' (``$\Rule$ worst'') refers to the rule that chooses the committees with the highest (lowest) diversities among the winning committees of $\Rule$. If only $\Rule$ is written, the diversity of the winning committee that \emph{abcvoting} returns for $\Rule$ is considered. The red line indicates the median, the green cross the mean.}
  \label{fig:n28}
\end{figure*}

\begin{figure*}\centering%
  \makebox{\includegraphics{external/main-figure18.pdf}}%
  %\quad%
  \caption{\boldmath The proportion of the optimal diversity reached over all experimental data with $k=6$ when using $\dns$ for the different rules and approaches we consider. ``$\Rule$ best'' (``$\Rule$ worst'') refers to the rule that chooses the committees with the highest (lowest) diversities among the winning committees of $\Rule$. If only $\Rule$ is written, the diversity of the winning committee that \emph{abcvoting} returns for $\Rule$ is considered. The red line indicates the median, the green cross the mean.}
  \label{fig:n26}
\end{figure*}

\begin{table*}\centering
  \begin{tblr}{
    colspec={lrrrrrrrrrrrr},
    column{4} = {rightsep=12pt},
    column{5} = {leftsep=12pt},
    column{7} = {rightsep=12pt},
    column{8} = {leftsep=12pt},
    column{10} = {rightsep=12pt},
    column{11} = {leftsep=12pt},
  }
    Index & \SetCell[c=3]{c}{$\rich$}&&& \SetCell[c=3]{c}{$\shann$}&&& \SetCell[c=3]{c}{$\simps$}&&& \SetCell[c=3]{c}{$\dns$} \\%
    \cmidrule[lr]{2-4}\cmidrule[lr]{5-7}\cmidrule[lr]{8-10}\cmidrule[lr]{11-13}
    $k$                             & $10$ & $8$ & $6$ & $10$ & $8$ & $6$ & $10$ & $8$ & $6$ & $10$ & $8$ & $6$ \\ \midrule
    AV                              &   76 &  78 &  80 &   80 &  81 &  81 &   66 &  69 &  74 &   77 &  79 &  82 \\
    $\satRule{\text{AV}}$           &   86 &  89 &  92 &   88 &  90 &  92 &   78 &  83 &  88 &   86 &  89 &  92 \\
    $\scrRuleX{AV}{90}$             &   91 &  92 &  92 &   93 &  93 &  93 &   85 &  87 &  89 &   92 &  92 &  93 \\
    $\scrRuleX{AV}{80}$             &   96 &  96 &  96 &   96 &  97 &  97 &   92 &  93 &  95 &   96 &  96 &  97 \\
    $\scrRuleX{AV}{70}$             &   98 &  98 &  98 &   98 &  98 &  98 &   96 &  97 &  97 &   98 &  98 &  98 \\
    $\scrRuleX{AV}{60}$             &   99 &  99 &  99 &   99 &  99 &  99 &   98 &  98 &  99 &   99 &  99 &  99 \\
    $\scrRuleX{AV}{50}$             &  100 & 100 & 100 &  100 & 100 & 100 &   99 & 100 & 100 &  100 & 100 & 100 \\
    $\scrRuleX{AV}{40}$             &  100 & 100 & 100 &  100 & 100 & 100 &  100 & 100 & 100 &  100 & 100 & 100 \\
    $\scrRuleX{AV}{30}$             &  100 & 100 & 100 &  100 & 100 & 100 &  100 & 100 & 100 &  100 & 100 & 100 \\
    SAV                             &   77 &  79 &  81 &   81 &  82 &  83 &   67 &  70 &  76 &   78 &  80 &  83 \\
    $\satRule{\text{SAV}}$          &   87 &  90 &  93 &   89 &  91 &  93 &   80 &  84 &  90 &   87 &  90 &  93 \\
    $\scrRuleX{SAV}{90}$            &   92 &  92 &  92 &   93 &  93 &  93 &   86 &  87 &  89 &   92 &  92 &  93 \\
    $\scrRuleX{SAV}{80}$            &   96 &  96 &  96 &   97 &  97 &  97 &   93 &  94 &  95 &   96 &  97 &  97 \\
    $\scrRuleX{SAV}{70}$            &   98 &  98 &  98 &   98 &  98 &  98 &   96 &  96 &  97 &   98 &  98 &  98 \\
    $\scrRuleX{SAV}{60}$            &   99 &  99 &  99 &   99 &  99 &  99 &   98 &  98 &  99 &   99 &  99 &  99 \\
    $\scrRuleX{SAV}{50}$            &  100 & 100 & 100 &  100 & 100 & 100 &   99 &  99 & 100 &  100 & 100 & 100 \\
    $\scrRuleX{SAV}{40}$            &  100 & 100 & 100 &  100 & 100 & 100 &  100 & 100 & 100 &  100 & 100 & 100 \\
    $\scrRuleX{SAV}{30}$            &  100 & 100 & 100 &  100 & 100 & 100 &  100 & 100 & 100 &  100 & 100 & 100 \\
    PAV                             &   78 &  79 &  82 &   82 &  82 &  83 &   67 &  71 &  76 &   78 &  80 &  83 \\
    $\satRule{\text{PAV}}$          &   87 &  90 &  93 &   89 &  91 &  93 &   80 &  84 &  90 &   88 &  90 &  93 \\
    seq-Phragmén                    &   77 &  79 &  82 &   81 &  82 &  83 &   67 &  71 &  76 &   78 &  80 &  84 \\
    $\satRule{\text{seq-Phragmén}}$ &   87 &  90 &  93 &   89 &  91 &  93 &   80 &  84 &  90 &   87 &  90 &  93 \\
    Rule X                          &   77 &  79 &  82 &   81 &  82 &  83 &   67 &  71 &  76 &   78 &  81 &  83 \\
    $\satRule{\text{Rule X}}$       &   87 &  90 &  93 &   89 &  91 &  93 &   79 &  85 &  90 &   87 &  90 &  93 \\
    CC                              &   81 &  83 &  84 &   84 &  85 &  85 &   71 &  75 &  79 &   82 &  84 &  85 \\
    $\satRule{\text{Rule CC}}$      &   90 &  92 &  94 &   92 &  93 &  95 &   84 &  88 &  92 &   91 &  93 &  95 \\
  \end{tblr}
  \caption{For each diversity index in $\brcs{\rich, \shann, \simps}$, each rule considered, and each $k\in\brcs{6,8,10}$, the average percentage of the optimal diversity that the rule reaches is stated.}\label{tab:exp:ndo}
\end{table*}
  
\begin{table*}\centering
  \begin{tblr}{
    colspec={lrrrrrrrrrrrr},
    column{4} = {rightsep=12pt},
    column{5} = {leftsep=12pt},
    column{7} = {rightsep=12pt},
    column{8} = {leftsep=12pt},
    column{10} = {rightsep=12pt},
    column{11} = {leftsep=12pt},
  }
    Index & \SetCell[c=3]{c}{$\rich$}&&& \SetCell[c=3]{c}{$\shann$}&&& \SetCell[c=3]{c}{$\simps$}&&& \SetCell[c=3]{c}{$\dns$} \\%
    \cmidrule[lr]{2-4}\cmidrule[lr]{5-7}\cmidrule[lr]{8-10}\cmidrule[lr]{11-13}
    $k$                             & $10$ & $8$ & $6$ & $10$ & $8$ & $6$ & $10$ & $8$ & $6$ & $10$ & $8$ & $6$ \\ \midrule
    AV                              &   17 &  23 &  35 &   14 &  22 &  35 &   14 &  22 &  35 &   14 &  22 &  35 \\
    $\satRule{\text{AV}}$           &   42 &  55 &  69 &   36 &  52 &  68 &   36 &  52 &  68 &   36 &  52 &  68 \\
    $\scrRuleX{AV}{90}$             &   59 &  64 &  71 &   54 &  60 &  70 &   54 &  60 &  70 &   54 &  60 &  70 \\
    $\scrRuleX{AV}{80}$             &   78 &  81 &  84 &   73 &  77 &  83 &   73 &  77 &  83 &   73 &  77 &  83 \\
    $\scrRuleX{AV}{70}$             &   88 &  90 &  92 &   84 &  87 &  91 &   84 &  87 &  91 &   84 &  87 &  91 \\
    $\scrRuleX{AV}{60}$             &   92 &  94 &  96 &   91 &  92 &  95 &   91 &  92 &  95 &   91 &  92 &  95 \\
    $\scrRuleX{AV}{50}$             &   97 &  98 &  99 &   97 &  98 &  99 &   97 &  98 &  99 &   97 &  98 &  99 \\
    $\scrRuleX{AV}{40}$             &   99 & 100 & 100 &   99 & 100 & 100 &   99 & 100 & 100 &   99 & 100 & 100 \\
    $\scrRuleX{AV}{30}$             &  100 & 100 & 100 &  100 & 100 & 100 &  100 & 100 & 100 &  100 & 100 & 100 \\
    SAV                             &   17 &  23 &  36 &   13 &  21 &  36 &   13 &  21 &  36 &   13 &  21 &  36 \\
    $\satRule{\text{SAV}}$          &   43 &  58 &  75 &   38 &  54 &  73 &   38 &  54 &  73 &   38 &  54 &  73 \\
    $\scrRuleX{SAV}{90}$            &   57 &  63 &  70 &   51 &  59 &  69 &   51 &  59 &  69 &   51 &  59 &  69 \\
    $\scrRuleX{SAV}{80}$            &   79 &  83 &  84 &   74 &  79 &  83 &   74 &  79 &  83 &   74 &  79 &  83 \\
    $\scrRuleX{SAV}{70}$            &   88 &  89 &  91 &   85 &  86 &  90 &   85 &  86 &  90 &   85 &  86 &  90 \\
    $\scrRuleX{SAV}{60}$            &   92 &  94 &  95 &   90 &  92 &  95 &   90 &  92 &  95 &   90 &  92 &  95 \\
    $\scrRuleX{SAV}{50}$            &   97 &  97 &  99 &   97 &  97 &  99 &   97 &  97 &  99 &   97 &  97 &  99 \\
    $\scrRuleX{SAV}{40}$            &   99 & 100 & 100 &   99 & 100 & 100 &   99 & 100 & 100 &   99 & 100 & 100 \\
    $\scrRuleX{SAV}{30}$            &  100 & 100 & 100 &  100 & 100 & 100 &  100 & 100 & 100 &  100 & 100 & 100 \\
    PAV                             &   17 &  24 &  38 &   14 &  22 &  37 &   14 &  22 &  37 &   14 &  22 &  37 \\
    $\satRule{\text{PAV}}$          &   44 &  59 &  73 &   38 &  55 &  72 &   38 &  55 &  72 &   38 &  55 &  72 \\
    seq-Phragmén                    &   18 &  24 &  39 &   14 &  22 &  38 &   14 &  22 &  38 &   14 &  22 &  38 \\
    $\satRule{\text{seq-Phragmén}}$ &   43 &  58 &  74 &   38 &  54 &  72 &   38 &  54 &  72 &   38 &  54 &  72 \\
    Rule X                          &   17 &  24 &  38 &   13 &  22 &  38 &   13 &  22 &  38 &   13 &  22 &  38 \\
    $\satRule{\text{Rule X}}$       &   44 &  59 &  74 &   38 &  54 &  72 &   38 &  54 &  72 &   38 &  54 &  72 \\
    CC                              &   23 &  32 &  42 &   18 &  28 &  41 &   18 &  28 &  41 &   18 &  28 &  41 \\
    $\satRule{\text{Rule CC}}$      &   56 &  69 &  81 &   52 &  66 &  79 &   52 &  66 &  79 &   52 &  66 &  79 \\
  \end{tblr}
  \caption{For each diversity index in $\brcs{\rich, \shann, \simps}$, each rule considered, and each $k\in\brcs{6,8,10}$, the percentage of instances for which the rule achieved the optimal diversity is stated.}\label{tab:exp:optim:ndo}
\end{table*}
}

\toappendix{
  \section{Outlook: Weighted Labels}
  \label{sec:weights}
  There are scenarios in which different labels have a different importance,
e.g.~in a participatory budgeting scenario
in which community projects need to be picked in a city
that conducted such elections in previous years.
The goal may not be to pick a set of projects that itself is diverse
(with each project being equally important), but
one that is diverse when taking into account the 
projects chosen in the past. %---e.g., there could have been already many projects about education,
%but fewer about environmental protection.
For this, different labels could get different weights. %, representing their importance.
While this is not the focus of this paper---in which we wanted to focus on the base case with equally important labels, especially as the diversity indices from ecology also make this assumption---, we want to present some preliminary results, showing that a straight-forward adaption of the diversity indices leads to all indices apart from the adapted $\simps$ index not satisfying a desirable property for the weighted scenario.
For the adapted $\simps$ index, we can show an analogous result to \cref{corl:dssP} using the same technique with only small adjustments, showing that our results for the non-weighted scenario can be useful for the weighted-scenario:

For the latter, each label could be assigned a weight
$\omega \colon \bcks[m] \to \N$ with the following idea: For a label $l_i$,
adding a candidate with label $l_i$ increases the number of $l_i$'s occurrences not by
$1$, but by $\frac{1}{\fun{\omega}(l)}$.
Thus, if e.g.~$\fun{\omega}(i) = 1$ and $\fun{\omega}(j) = 3$, i.e.~$l_j$ is
three times as important as $l_i$,
$l_j$ should be represented three times as many times
as $l_i$,
i.e.~$\frac{n_i}{\fun{\omega}(i)}=\frac{n_j}{\fun{\omega}(j)}$.
We express this through the following property,
demanding that increasing the frequency of a label by one by
decreasing that of a different label should only improve the diversity
if this will bring the committee closer to the goal of
$\frac{n_i}{\fun{\omega}(i)}=\frac{n_j}{\fun{\omega}(j)} \iff n_i\cdot \fun{\omega}(j) = n_j \cdot \fun{\omega}(i)$.
\begin{property}[Prioritization of More Important Labels]\label{propty:PrioWeights}
  A diversity index $D$ satisfies \emph{Prioritization of More Important Labels} if, for all elections $\Elec$, $\Com_1, \Com_2 \in \fun{\Rulebase}(\Elec)$ for which there are $i,j\in\bcks[m]$ so that $n_i(\Elec,\Com_2) = n_i(\Elec,\Com_1) + 1$, $n_j(\Elec,\Com_2) = n_j(\Elec,\Com_1) - 1\geq 0$, and $\forall o\in\bcks[m]\setminus\brcs{i,j}: n_o(\Elec,\Com_2) = n_o(\Elec,\Com_1)$, it holds that $D(\Elec, \Com_2) > D(\Elec, \Com_1)$ if and only if
    $|n_j(\Elec,\Com_1) \omega(i) - n_i(\Elec,\Com_1) \omega(j)|
    >  |n_j(\Elec,\Com_2) \omega(i) - n_i(\Elec,\Com_2) \omega(j)|$.
\end{property}

We incorporate the weights into a given diversity index $D$ considered in the following way, later referred to as $D'$:
\begin{align*}
  \rich'(\Elec,\Com) ={}& \sum_{i\in \bcks[m]: \fun{n_i}(\Elec,S) > 0} \omega(i),\\
  \simps'(\Elec,\Com) ={}& -\sum_{i\in \bcks[m]} \frac{1}{\omega(i)} \cdot p_i(\Elec,\Com)^2\\
  \shann'(\Elec,\Com) ={}& -\sum_{i \in \bcks[m]: \fun{p_i}(\Elec,S) > 0} \frac{1}{\omega(i)} \cdot p_i(\Elec,\Com)\cdot \tlog{p_i(\Elec,\Com)}\\
  \dns'(\Elec,\Com) = {}& \sum_{i=1}^{Z} \paren(\funbc{\min}{m, k} + 1)^{Z+1-i} \cdot \abs{\sigma'_i(\Elec,\Com)},\\
  \text{ with } c ={}& \fun{\operatorname{lcm}}(\omega(1),\dots,\omega(m)), Z = c\cdot k,\\ \sigma'_i(\Elec,\Com) = {}& \brcs{\ell\in\bcks[m]: c\cdot \frac{n_\ell(\Elec,\Com)}{\omega(\ell)} \geq i}.
\end{align*}
Thus, $\rich'(\Elec,\Com)$ still has $\rich$'s aim of minimizing the number of labels occurring zero times as the most important goal, but if only one of two labels can be chosen at least once, the one with the higher weight will be chosen.
Both $\simps$ and $\shann$ have a negation before the respective sum so that the index value becomes larger if the summands become smaller. Therefore, we weight each summand with $\frac{1}{\omega(i)}$, not with $\omega(i)$.
Due to $\dns$'s lexicographic nature, its primary goal is to maximize the number of labels occurring at least once, the secondary goal is to maximize the number of labels occurring at least twice, and so on.
Thus, for a label to become more important, its number of occurrences must be scaled to become smaller compared to the occurrences of a less important label.
We take this into account by dividing $n_\ell(\Elec,\Com)$ by $\omega(\ell)$ in $\sigma'_i$ and adjusting the rest correspondingly.

For these adapted indices, we have the following:
\begin{observation}
  \label{obs:prop:PrioWeights}
  $\rich'$, $\shann'$, and $\dns'$ do not satisfy \pref{propty:PrioWeights}, $\simps'$ does.
\end{observation}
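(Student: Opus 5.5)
The statement splits into a positive part for $\simps'$ and three negative parts. For $\simps'$ I plan a direct algebraic equivalence. For $\rich'$, $\shann'$ and $\dns'$ I plan explicit small counterexamples.

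\textbf{$\simps'$.} Fix $\Com_1,\Com_2$ as in \pref{propty:PrioWeights}, and write $a\ceq n_i(\Elec,\Com_1)$, $b\ceq n_j(\Elec,\Com_1)\geq 1$. All other summands of $\simps'$ coincide, and both committees have size $k$. Hence
\begin{align*}
k^2\paren(\simps'(\Elec,\Com_2)-\simps'(\Elec,\Com_1)) = -\frac{(a+1)^2-a^2}{\omega(i)} - \frac{(b-1)^2-b^2}{\omega(j)} = \frac{2b-1}{\omega(j)}-\frac{2a+1}{\omega(i)}.
\end{align*}
So $\simps'(\Com_2)>\simps'(\Com_1)$ if and only if $(2b-1)\omega(i) > (2a+1)\omega(j)$.

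For the right-hand side of the property, set $x\ceq b\,\omega(i)-a\,\omega(j)$ and $y\ceq \omega(i)+\omega(j)>0$. The two quantities compared are then $|x|$ and $|x-y|$. Squaring, $|x|>|x-y|$ holds if and only if $2xy>y^2$, i.e.\ $2x>y$. Expanding $2x>y$ gives exactly $(2b-1)\omega(i)>(2a+1)\omega(j)$, so both sides of the biconditional reduce to the same inequality. This part is routine.

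\textbf{$\rich'$.} Take all weights equal to $1$, $m=2$, $k=4$, $n_i(\Com_1)=1$ and $n_j(\Com_1)=3$, so that $\Com_2$ has counts $(2,2)$. The weighted imbalance drops from $|3-1|=2$ to $0$, so the property demands $\rich'(\Com_2)>\rich'(\Com_1)$. However, both committees contain the same labels, so $\rich'$ is unchanged. This mirrors the reason $\rich$ fails \pref{propty:balance}.

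\textbf{$\shann'$ and $\dns'$.} Unit weights cannot yield counterexamples here, because of the following check. With unit weights the condition reads $|b-a|>|b-a-2|$, i.e.\ $b\geq a+2$. By \cref{obs:prop:propBalance}, both indices then strictly increase. When $b=a+1$, the $\distri$ vectors coincide, giving equality, which the property also predicts.

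So I will use non-uniform weights, say $\omega(i)=1$ and $\omega(j)=2$ (and, for $\dns'$, $c=2$). I will choose small $k$ and counts $a,b$ where the weighted condition and the index disagree.

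For $\shann'$ the natural candidate is a move whose weighted imbalance changes only slightly, for instance near balance $b\approx 2a$. There the factors $\frac{1}{\omega}$ multiplying the nonlinear terms $p\log p$ should push the sign the wrong way. I will verify this by evaluating the finite sum numerically.

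For $\dns'$ I will exploit lexicographic dominance. Raising the label $l_i$ from $0$ to $1$ occurrence adds a label to $\sigma'_1$, the most significant level. This strictly increases $\dns'$, yet with suitable counts it can increase the weighted imbalance. For example, take $n_i=0$ and $n_j=1$ with $\omega(j)$ large relative to $\omega(i)$, so that $|n_j\omega(i)-n_i\omega(j)|$ grows after the move. The property then forbids an increase, but $\dns'$ increases.

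\textbf{Main obstacle.} I expect the $\shann'$ counterexample to be the hardest step, since its sign is determined by transcendental inequalities. If hand-picked candidates fail, I will run a small exhaustive search over $k\leq 8$ and weights in $\{1,2,3\}$, in the same spirit as \cref{obs:small_cands}. I will then report one instance with its exact values.
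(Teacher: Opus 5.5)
Your overall plan matches the paper's: algebra for $\simps'$, explicit instances for the other three indices. The $\rich'$ instance is the paper's argument (unit weights, failure of strict \emph{Occurrence Balancing}).

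For $\simps'$ your route is cleaner than the paper's. The paper splits into four cases by the signs of $b\,\omega(i)-a\,\omega(j)$ and $(b-1)\omega(i)-(a+1)\omega(j)$. Your substitution $x$, $y=\omega(i)+\omega(j)>0$ plus squaring turns $|x|>|x-y|$ into $2x>y$ in one line. In fact $k^2\omega(i)\omega(j)\,(\simps'(S_2)-\simps'(S_1))=2x-y$, so the signs agree exactly. This also covers the tie $|x|=|x-y|$, where $\simps'$ is unchanged; the paper's case split does not treat that case explicitly.

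\textbf{Gap: $\shann'$.} As written this part is not a proof. You describe a search but exhibit no instance, and the negative claim needs one. Write $|n_j\omega(i)-n_i\omega(j)|$ for the imbalance.

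Your own $\dns'$ configuration closes it. Take $m=3$, $k=2$, $S_1=\{c_j,c_x\}$, $S_2=\{c_i,c_x\}$ with $\lambda(c_x)\notin\{l_i,l_j\}$, and $\omega(i)=1$, $\omega(j)=2$. The imbalance grows from $1$ to $2$, so the property forbids a strict increase. Yet $\shann'(S_2)-\shann'(S_1)=\frac{\log 2}{2}\bigl(1-\frac12\bigr)=\frac{\log 2}{4}>0$. In general, for $k\ge 2$, this swap changes $\shann'$ by $\frac{\log k}{k}\bigl(\frac{1}{\omega(i)}-\frac{1}{\omega(j)}\bigr)$. The reason is that $-p\log p\ge 0$, so dividing by $\omega$ makes more important labels contribute less.

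Your heuristic $b\approx 2a$ also yields an exact instance: $k=4$, $(a,b)=(1,3)$, $\omega(i)=1$, $\omega(j)=2$. Then $\shann'(S_2)-\shann'(S_1)=\frac18\log\frac{27}{16}>0$ while the imbalance rises from $1$ to $2$.

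The paper uses the opposite kind of violation. It takes $k=7$, $\omega(i)=6$, $\omega(j)=1$, $(a,b)=(4,2)$. There the imbalance drops from $8$ to $1$, yet $\shann'$ drops by about $0.09$.

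\textbf{Flaw in the $\dns'$ justification.} Your instance ($n_i=0$, $n_j=1$, $\omega(j)>\omega(i)$) is a valid counterexample, but not for the reason you give. Since $n_j$ drops to $0$, $l_j$ leaves $\sigma'_1$ exactly as $l_i$ enters it, so $|\sigma'_1|$ is unchanged.

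The committees first differ at level $t=c/\omega(j)+1$. For $c/\omega(j)<t\le c/\omega(i)$, $S_2$ has $l_i\in\sigma'_t$, while $S_1$ has neither label there. Lexicographic dominance then gives $\dns'(S_2)>\dns'(S_1)$, since every $|\sigma'_t|\le\min\{m,k\}$ and the base is $\min\{m,k\}+1$.

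If you want the $\sigma'_1$-argument to hold as stated, use $n_i=0$, $n_j=2$, $\omega(i)=1$, $\omega(j)=3$. Now $l_j$ stays in $\sigma'_1$ and $l_i$ joins it, so $\dns'$ strictly increases. The imbalance stays at $|2-0|=|1-3|=2$, so the property forbids that increase.
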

\begin{proof}
  \textbf{Diversity index $\rich'$:} See the proof as to why $\rich$ does not satisfy \pref{propty:balance}, by setting $\omega(l) = 1$ for all $l\in \bcks[m]$.

  \textbf{Diversity index $\shann'$:} Consider an election with $k=7, \omega(i)=6, \omega(j)=1$ and committees $\Com_1, \Com_2$ so that $n_i(\Com_1) = 4, n_j(\Com_1) = 2$.
  It holds that
  \begin{align*}
    & \shann'(\Com_2)- \shann'(\Com_1)\\
    ={}& -\frac{n_i(\Com_1)+1}{\omega(i)\cdot k}\cdot \tlog{\frac{n_i(\Com_1)+1}{k}}\\
    {}&+ \frac{n_j(\Com_1)-1}{\omega(j)\cdot k}\cdot \tlog{\frac{n_j(\Com_1)-1}{k}}\\
    {}&+ \frac{n_i(\Com_1)}{\omega(i)\cdot k}\cdot \tlog{\frac{n_i(\Com_1)}{k}} - \frac{n_j(\Com_1)}{\omega(j)\cdot k}\cdot \tlog{\frac{n_j(\Com_1)}{k}}\\
    \approx {}& -0.09 < 0,
  \end{align*}
  although
  \begin{gather*}
    \abs{n_j(\Com_1) \omega(i) - n_i(\Com_1) \omega(j)} = 8-4=4\\
    >\abs{\paren(n_j(\Com_1)-1) \omega(i) - \paren(n_i(\Com_1)+1)\omega(j)}=5-4=1.
  \end{gather*}

  \textbf{Diversity index $\dns'$:} Consider an election with $k=9,m=3,i=1,j=2, \omega(1)=7, \omega(2)=2, \omega(3)=1$ and committees $\Com_1, \Com_2$ so that $n_1(\Com_1) = 4, n_2(\Com_1) = 2, n_3(\Com_1)=3$.
  It holds that $c=14, Z=14\cdot 9$ and
  \begin{gather*}
    \dns'(\Com_2)- \dns'(\Com_1)\\
    = 4^{Z+1-9} + 4^{Z+1-10} - 4^{Z+1-8}- 4^{Z+1-9}- 4^{Z+1-10}\\- 4^{Z+1-11}- 4^{Z+1-12}- 4^{Z+1-13}- 4^{Z+1-14} < 0,
  \end{gather*}
  although
  \begin{gather*}
    \abs{n_j(\Com_1) \omega(i) - n_i(\Com_1) \omega(j)} = 14-8=6\\
    >\abs{\paren(n_j(\Com_1)-1) \omega(i) - \paren(n_i(\Com_1)+1)\omega(j)}=10-7=3.
  \end{gather*}

  \textbf{Diversity index $\simps'$:} In the following, if the committee is missing as an argument, $\Com_1$ is being referred to.
  It holds that
  \begin{align*}
    &\simps'(\Com_2)-\simps'(\Com_1)'\\
    ={}& 1/k^2 \paren(-\frac{n_i^2 + 2n_i+1}{\omega(i)}-\frac{n_j^2-2n_j+1}{\omega(j)} + \frac{n_i^2}{\omega(i)}+\frac{n_j^2}{\omega(j)})\\
    ={}& 1/k^2 \paren(-\frac{1}{\omega(i)}\paren(2n_i+1) + \frac{1}{\omega(j)}\paren(2n_j -1)).
  \end{align*}
  We continue with a case distinction:
  \begin{itemize}[nosep]
    \item Assume that the following holds:
    \begin{gather*}
    \abs{n_j \omega(i) - n_i \omega(j)}> \abs{\paren(n_j-1) \omega(i) - \paren(n_i+1)\omega(j)},\\
    n_j\cdot \omega(i) > n_i\cdot \omega(j),\\
    \paren(n_j-1)\omega(i) > \paren(n_i+1)\omega(j).
  \end{gather*}
  Then it holds that $\simps'(\Com_2) > \simps'(\Com_1)$ as
  \begin{gather*}
    \simps'(\Com_2) > \simps'(\Com_1)
    \iff \simps'(\Com_2) - \simps'(\Com_1) > 0 \\
    \iff -\frac{1}{\omega(i)}\paren(2n_i+1) + \frac{1}{\omega(j)}\paren(2n_j -1) > 0\\
    \iff -\omega(j)\paren(2n_i+1) + \omega(i)\paren(2n_j -1) > 0\\
    \iff \omega(i)\paren(2n_j -1) > \omega(j)\paren(2n_i+1)\\
    \iff \omega(i)\paren(n_j -1)  + \omega(i)n_j > \omega(j)\paren(n_i+1) + \omega(j)n_i.
  \end{gather*}
  The last inequality holds because of the assumption.
  \item Assume that the following holds:
    \begin{gather*}
    \abs{n_j \omega(i) - n_i \omega(j)}
    > \abs{\paren(n_j-1) \omega(i) - \paren(n_i+1)\omega(j)},\\
    n_j\cdot \omega(i) > n_i\cdot \omega(j),\\
    \paren(n_j-1)\omega(i) \leq \paren(n_i+1)\omega(j).
  \end{gather*}
  Clearly, this implies that
  \begin{gather*}
    n_j\omega(i) - n_i \omega(j)
    > \paren(n_i+1)\omega(j)- \paren(n_j-1) \omega(i).
  \end{gather*}
  Then it holds that $\simps'(\Com_2) > \simps'(\Com_1)$ as
  \begin{gather*}
    \simps'(\Com_2) > \simps'(\Com_1)
    \iff \simps'(\Com_2) - \simps'(\Com_1) > 0 \\
    \iff -\frac{1}{\omega(i)}\paren(2n_i+1) + \frac{1}{\omega(j)}\paren(2n_j -1) > 0\\
    \iff -\omega(j)\paren(2n_i+1) + \omega(i)\paren(2n_j -1) > 0\\
    \iff \omega(i)\paren(2n_j -1) > \omega(j)\paren(2n_i+1)\\
    \iff \omega(i)\paren(n_j -1)  + \omega(i)n_j > \omega(j)\paren(n_i+1) + \omega(j)n_i\\
    \iff n_j\omega(i) - n_i \omega(j) > \paren(n_i+1)\omega(j) - \paren(n_j-1) \omega(i).
  \end{gather*}
  The last inequality holds because of the assumption.
  \item Assume that the following holds:
    \begin{gather*}
    \abs{n_j \omega(i) - n_i \omega(j)}
    < \abs{\paren(n_j-1) \omega(i) - \paren(n_i+1)\omega(j)},\\
    n_j\cdot \omega(i) > n_i\cdot \omega(j)
  \end{gather*}
  This implies that
  \begin{gather*}
    \paren(n_j-1)\omega(i) \leq \paren(n_i+1)\omega(j),
  \end{gather*}
  as otherwise it would hold that
  \begin{gather*}
    0\leq \abs{\paren(n_j-1) \omega(i) - \paren(n_i+1)\omega(j)}\\
    = \paren(n_j-1)\omega(i) - \paren(n_i+1)\omega(j) < n_j \omega(i) - n_i \omega(j)\\
    = \abs{\omega(i) - n_i \omega(j)},
  \end{gather*}
  a contradiction.
  Based on this, it holds that $\simps'(\Com_2) < \simps'(\Com_1)$ as
  \begin{gather*}
    \simps'(\Com_2) < \simps'(\Com_1)
    \iff \simps'(\Com_2) - \simps'(\Com_1) < 0 \\
    \iff -\frac{1}{\omega(i)}\paren(2n_i+1) + \frac{1}{\omega(j)}\paren(2n_j -1) < 0\\
    \iff -\omega(j)\paren(2n_i+1) + \omega(i)\paren(2n_j -1) < 0\\
    \iff \omega(i)\paren(2n_j -1) < \omega(j)\paren(2n_i+1)\\
    \iff \omega(i)\paren(n_j -1)  + \omega(i)n_j < \omega(j)\paren(n_i+1) + \omega(j)n_i\\
    \iff n_j\omega(i) - n_i \omega(j) < \paren(n_i+1)\omega(j) - \paren(n_j-1) \omega(i).
  \end{gather*}
  The last inequality holds because of the assumption.
  \item Assume that $n_j \omega(i) \leq n_i \omega(j)$ holds. Then it follows that
    \begin{gather*}
    \paren(n_j - 1) \omega(i) < n_j \omega(i) \leq n_i \omega(j) < \paren(n_i+1)\omega(j),\\
    \implies \paren(n_i+1)\omega(j) > \paren(n_j - 1)\omega(i)\\
    \implies \abs{n_j \omega(i) - n_i \omega(j)}\\
    < \abs{\paren(n_j-1) \omega(i) - \paren(n_i+1)\omega(j)}.
  \end{gather*}
  Then it holds that $\simps'(\Com_2) < \simps'(\Com_1)$ as
  \begin{gather*}
    \simps'(\Com_2) < \simps'(\Com_1)
    \iff \simps'(\Com_2) - \simps'(\Com_1) < 0 \\
    \iff -\frac{1}{\omega(i)}\paren(2n_i+1) + \frac{1}{\omega(j)}\paren(2n_j -1) < 0\\
    \iff -\omega(j)\paren(2n_i+1) + \omega(i)\paren(2n_j -1) < 0\\
    \iff \omega(i)\paren(2n_j -1) < \omega(j)\paren(2n_i+1)\\
    \iff \omega(i)\paren(n_j -1)  + \omega(i)n_j < \omega(j)\paren(n_i+1) + \omega(j)n_i\\
    \iff n_j\omega(i) - n_i \omega(j) < \paren(n_i+1)\omega(j) - \paren(n_j-1) \omega(i).
  \end{gather*}
  The last inequality holds because it follows from the assumption that $n_j\omega(i) - n_i \omega(j) \leq 0$ and $\paren(n_i+1)\omega(j) - \paren(n_j-1) \omega(i)$ > 0.
  \end{itemize}
\end{proof}

Next, we show a theorem
analogous to \cref{theorem:DSSGP} (whose proof only needs small adjustments for the weighted scenario):
\begin{theorem}
  \label{theorem:DSSGP:weights}
  \prob{MAX-$\paren(D, s)$-DSCR} is in \classP{} if 
  \begin{itemize}[nosep]
    \item $s$ is a separable function and there is an $\alpha \in\mathbb{N}_0$ polynomial in the input size so that, for each $c\in C$, $\wC(\Elec,c)\leq\alpha$,
    \item the index $D$ can be expressed as $\sum_{l\in \bcks[m]}\sum_{i=1}^{n_l}\fun{t_l}(i)$ and, for all $i \in\bcks[k]$, it holds that $0<\fun{t_l}(i)$ and $\fun{t_l}(i)$ can be computed in time polynomial in the input size, and $t_l$ is strictly monotonically decreasing, for each $l\in\bcks[m]$.
  \end{itemize}
\end{theorem}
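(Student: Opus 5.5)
The plan is to reuse the 0-1 Knapsack reduction from the proof of \cref{theorem:DSSGP}, changing only the value of each item so that it depends on the item's own label. Given an instance $I_D$ with $n$ candidates and score bound $\beta$, we may assume $\beta\leq k\alpha$, since otherwise $I_D$ is infeasible. For each candidate $c_i$ with label $l=\labof(c_i)$ we create an item $x_i$. Its weight is $w_K(x_i)\ceq n\alpha+1-w(c_i)$, and its value is
\[
v(x_i)\ceq t_l(\pi(c_i))+\eta .
\]
Here $\pi(c_i)$ is the position of $c_i$ when the candidates with label $l$ are sorted in descending order of $w$. The offset is $\eta\ceq k\cdot\max_{l\in\bcks[m]}t_l(1)+1$, which is polynomially computable because each $t_l(1)$ is. The knapsack bound is $B\ceq k(n\alpha+1)-\beta$, exactly as before.

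The feasibility part carries over verbatim, since weights and bound are unchanged and the values are still strictly positive.
\begin{itemize}[nosep]
  \item Any solution $X$ with $v(X)\geq k\eta$ has $|X|\geq k$, because every $t_l(\cdot)$ is at most $\max_l t_l(1)$ and hence each value is below $\eta+\eta/k$.
  \item The same weight argument as in \cref{theorem:DSSGP} excludes $|X|>k$.
  \item For $|X|=k$, the condition $w_K(X)\leq B$ is equivalent to $w(S(X))\geq\beta$.
  \item If the knapsack optimum has value below $k\eta$, then no size-$k$ set meets the capacity, so $I_D$ is infeasible.
\end{itemize}

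Next comes the optimality transfer. We need that in an optimal knapsack solution $X^*$ the chosen items of each label $l$ are the first $n_l$ in the order $\pi$. Swapping a later item for an earlier unchosen item of the same label does not increase the weight and, since $t_l$ is strictly decreasing, strictly increases the value. Hence $v(X^*)=D(\Elec,S(X^*))+k\eta$, using $D=\sum_l\sum_{i=1}^{n_l}t_l(i)$. Now suppose some feasible $S_D^*$ has larger diversity. Replace it by the committee $S^*$ with the same label counts but consisting of the top-$\pi$ candidates of each label. This keeps the diversity and does not decrease the score, so $X(S^*)$ is knapsack-feasible with larger value, which is a contradiction.

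Ties in $w$ within a label must be broken by fixing $\pi$ once. Beyond that, the main point to check is that the per-label functions $t_l$ do not break the exchange argument. This holds because every exchange in the argument is within a single label, so only the monotonicity of the individual $t_l$ is used, never a comparison across labels. Finally, with $B$ polynomial in $n$, $k$ and $\alpha$, the pseudo-polynomial dynamic program for Knapsack runs in time $\mathcal{O}(nB)$. This dynamic program is indexed by weight, so it is not affected by the (possibly non-integer) values $t_l$, and the algorithm is polynomial.
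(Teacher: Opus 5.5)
Your proposal is correct and is essentially the paper's proof: it uses the same Knapsack instance with unchanged weights $n\alpha+1-w(c_i)$ and bound $k(n\alpha+1)-\beta$, label-dependent values $t_l(\pi(c_i))+\eta$ with $\eta = k\max_l t_l(1)+1$, and the same within-label exchange argument, which indeed needs only the monotonicity of each individual $t_l$. One small point you share with the paper is that positivity of $t_l$ is assumed only on $\bcks[k]$, so items with $\pi(c_i)>k$ need not have positive value; this is harmless, because such items never occur in a prefix solution and can be discarded.
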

\begin{proof}
Given an instance $I_D$ of MAX-$\paren(D, s)$-DSCR with~$n$ candidates,
we construct an instance~$I_K$ of the 0-1 Knapsack problem in polynomial time.
We assume that~$\beta \leq k\cdot \alpha$ 
($\beta$ being the lower bound for the score, see \cref{def:maxdss}), 
since there is no solution for~$I_D$ otherwise.
For each candidate~$c_i$, we add an item~$x_i$
with the weight~$w_K(x_i) \ceq n\alpha + 1 - w(c_i)$
and the value~$v(x_i) = t_j(\pi(c_i)) + \eta$,
where~$\fun{\lambda}(c_i) = l_j, \eta\ceq 1+k\cdot \max_{l\in\bcks[m]} t_l(1)$ and~$\pi$ outputs~$c_i$'s position in a descending ordering of the candidates with the same label as $c_i$ based on $w$.
The knapsack's bound is~$B\ceq k(n\alpha + 1) - \beta$.
Let, for a solution $X$ of $I_K$, $S(X) = \{c_i \mid x_i\in X\}$, $v(X)$ and $w_K(X)$ the value and weight of $X$, and, for a solution $S$ of $I_D$, $X(S) = \{x_i: c_i \in S\}$. 
We claim that $I_K$ has a solution~$X$ with value at least~$k\cdot \eta$
if and only if 
$S(X)$ is a solution to~$I_D$
and that $I_D$ is infeasible otherwise:

Let~$X$ be a solution to~$I_K$ with value at least~$k\cdot \eta$.
It follows that~$|X|\geq k$.
Assume that~$|X|>k$,
then~$w_K(X)\geq (k+1)(n\alpha + 1) - w(S(X)) \geq (k+1)(n\alpha + 1) - n\alpha = k(n\alpha+1)+1 > B$,
a contradiction.
Thus, $|X|=k$ and $w_K(X)\leq B\iff k(n\alpha + 1) - w(S(X)) \leq k(n\alpha + 1) - \beta \iff \beta \leq w(S(X))$.
Therefore,
$S(X)$ is of size exactly~$k$ and respects the score constraint.
If $X$ is a solution to~$I_K$ with value less than~$k\cdot \eta$,
then~$|X|<k$. 
Since~$X$ is maximal,
there is no size-$k$ solution respecting the capacity constraints
and hence no solution to~$I_D$ that respects the score constraint.
Analogously,
$I_K$ has no solution with value at least~$k\cdot \eta$ if~$I_D$ has none.

Finally, we show that, for an optimal solution $\OSK$ of $I_K$ with value at least~$k\cdot \eta$, $S(\OSK)$ is an optimal solution of $I_D$.
Note that, if, for a label $l_j$, $n_j$ many items $x_i$ with $\fun{\labof}(c_i) = l_j$ are chosen in $\OSK$, the $n_j$ items which come first in the order $\pi$ are always chosen and thus $v(\OSK)=\fun{D}(\Elec, S(\OSK)) + k\cdot \eta$.
Next, assume that $I_D$ has an optimal solution $\OSD\neq S(\OSK)$ with $\fun{D}(\OSD) > \fun{D}(S(\OSK))$.
Consider the committee $\OSB$ with $\fun{n_l}(\Elec, \OSB) = \fun{n_l}(\Elec, \OSD)$ for $l\in \bcks[m]$ and $c\in \OSB \iff \fun{\pi}(c)\leq \fun{n_j}(\Elec, \OSD)$ with $l_j=\fun{\labof}(c)$.
Thus, $\fun{D}(\Elec,\OSB) = \fun{D}(\Elec,\OSD) = \fun{v}(X(\OSB)) - k \cdot \eta > \fun{D}(\Elec, S(\OSK)) = \fun{v}(\OSK) - k \cdot \eta$, a contradiction.

Therefore, the problem can be solved e.g.~by the solving the 0-1 Knapsack instance with dynamic programming in $\fun{\mathcal{O}}(nB)=\fun{\mathcal{O}}(n \paren(k(n\alpha + 1) - \beta))$ time.
\end{proof}

$\simps'$ fulfills the condition in \cref{theorem:DSSGP} by using $\fun{t_l}(i)=2k+\frac{1-2i}{\omega(l)}$. Thus:
\begin{corollary}
  \label{corl:dssP:weights}
  \prob{MAX-$\paren(\simps', \mathrm{score}_{\mathrm{AV}})$-DSCR} is in \classP{}.
\end{corollary}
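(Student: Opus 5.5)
The plan is to apply \cref{theorem:DSSGP:weights} directly with $s=\mathrm{score}_{\mathrm{AV}}$, $D=\simps'$ and the per-label functions $\fun{t_l}(i)=2k+\frac{1-2i}{\omega(l)}$. This requires checking both bullet points of that theorem and confirming that maximizing the resulting sum yields the same committees as maximizing $\simps'$.

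\textbf{Score condition.} As in the proof of \cref{corl:dssP}, $\mathrm{score}_{\mathrm{AV}}$ is separable with $\wC(\Elec,c)=|\brcs{a\in A: c\in\fun{U}(a)}|\leq\abs{A}$. So we may take $\alpha=\abs{A}$, which is polynomial in the input size.

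\textbf{Index condition.} Fix a label $l$ with $n_l$ occurrences. The telescoping identity $\sum_{i=1}^{n}(2i-1)=n^2$ gives
\begin{equation*}
  \sum_{i=1}^{n_l}\fun{t_l}(i)=2k\,n_l-\frac{n_l^2}{\omega(l)}.
\end{equation*}
Summing over all labels and using $\sum_l n_l=k$, we obtain
\begin{equation*}
  \sum_{l\in\bcks[m]}\sum_{i=1}^{n_l}\fun{t_l}(i)=2k^2+k^2\,\simps'(\Elec,\Com).
\end{equation*}
This is a strictly increasing affine transformation of $\simps'$, so both have exactly the same maximizers among committees satisfying the score bound. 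Solving the knapsack instance built from the $t_l$ therefore solves \prob{MAX-$\paren(\simps', \mathrm{score}_{\mathrm{AV}})$-DSCR}, exactly as the unweighted $\simps$ case was handled in \cref{corl:dssP}.

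\textbf{Remaining properties of $t_l$.}
\begin{itemize}[nosep]
  \item Strict decrease in $i$ holds because $\omega(l)>0$, so the term $-2i/\omega(l)$ strictly decreases.
  \item Positivity on $\bcks[k]$ follows from $\omega(l)\geq 1$ and $1-2i<0$: this gives $\frac{1-2i}{\omega(l)}\geq 1-2i\geq 1-2k$, hence $\fun{t_l}(i)\geq 1>0$.
  \item Each $\fun{t_l}(i)$ is a rational number computable in polynomial time.
\end{itemize}

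\textbf{Main obstacle.} The only delicate point is that the $t_l$ are rational rather than integral, while the knapsack construction uses them in the item values. This is not a problem for the dynamic program from the proof of \cref{theorem:DSSGP:weights}, which runs in $\fun{\mathcal{O}}(nB)$ and indexes over the integer weight bound $B$, not over values. Should integrality of values be required, I would multiply every $t_l$ by $\operatorname{lcm}(\omega(1),\dots,\omega(m))$. This preserves positivity, monotonicity and the set of maximizers, so the conditions of \cref{theorem:DSSGP:weights} remain satisfied, and the statement follows.
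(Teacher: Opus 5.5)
Your proposal is correct and takes the same route as the paper. Both apply \cref{theorem:DSSGP:weights} with $t_l(i)=2k+\frac{1-2i}{\omega(l)}$ and use that $\sum_{l}\sum_{i=1}^{n_l}t_l(i)=2k^2+k^2\,\simps'$ has the same maximizers as $\simps'$. Your explicit positivity bound $t_l(i)\geq 1$, the choice $\alpha=\abs{A}$, and your remark that rational item values are harmless for the weight-indexed knapsack DP fill in details the paper leaves implicit.
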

\begin{proof}
  For $\simps'$, it holds that maximizing $-\sum_{j\in \bcks[m]} \frac{1}{\omega(j)}\cdot p_j^2$ yields the
  same solutions (with a different objective value) as maximizing $\paren(\sum_{j\in \bcks[m]} -\frac{n_j^2}{\omega(j)})+2k^2=\sum_{j\in \bcks[m]} -\frac{n_j^2}{\omega(j)}+2\cdot n_j \cdot k$.
  Thus, $\fun{t_j}(i)=\paren(-2i+1)/\omega(j)+2k$ can be chosen (which is strictly monotonically decreasing),
  as $\sum_{i=1}^n 2k -\paren(2i -1)/\omega(j) = 2kn-n^2/\omega(j)$.
\end{proof}

}

\section{Epilogue}
\label{sec:epi}
We adapted several diversity indices used in ecology to the context of committee elections
and introduced a new diversity index.
We also introduced properties of diversity indices which allow us to differentiate between any pair of indices and to characterize the new index via two of our properties.
The underlying model assumes that each candidate has one label:
While this allows to define a label as a set of ``sub-labels'' (e.g.~$\brcs{\text{urban greenery}, \text{adults}}$ as one label),
all indices we consider do not take the (dis)similarity of labels into account.
Further research could investigate diversity indices that incorporate such distances, which also requires thinking about how such distances are determined.

Furthermore, we treat all labels as equally important, which is in line with how the diversity indices considered treat different species in the literature.
However, there are scenarios in which labels have different importance/weights.
Our preliminary results for weighted labels, presented in \cref{sec:weights} in the appendix, show that, after adapting the indices straight-forwardly to weighted labels, only the adapted $\simps$ index satisfies a desirable property that we define for the weighted scenario.
In addition, we show an analogous result to \cref{corl:dssP}, using the same technique as in \cref{theorem:DSSGP} with only small adjustments. This shows that our results for the non-weighted scenario could be useful for a weighted scenario.

From an algorithmic point of view,
we proved that \prob{$\paren(D, s)$-DSCR} is in \classP{} in some cases if $s$ is a separable scoring function.
This includes the score of AV and SAV for each diversity index considered
apart from $\shann$ in case of SAV---%
we left open whether \prob{$\paren(\shann, \mathrm{score}_{\mathrm{SAV}})$-DSCR} is polynomial-time solvable.
However, there are other $s$ for which we prove that \prob{$\paren(D, s)$-DSCR} is \NP-hard, which is also the case for \prob{$D$-DSAT}.
Further work may study parameterized complexity or approximation algorithms
for these problems.

Our experiments revealed interesting trade-offs between satisfaction/scoring guarantees and diversity, showing, among other results, that the diversity of committees can indeed be improved.
It would also be interesting to investigate how much the diversity indices differ on real world data or to evaluate past elections regarding their scoring-diversity performance.
This could be particularly interesting in the context of participatory budgeting, 
which calls for an extension of our model in which the costs of the projects and the respective
budget becomes the third objective (next to voter satisfaction and diversity).

Finally note that, while ecological indices provide rigorous measures of diversity,
their practical interpretability varies, particularly in settings with high label
cardinality.
Our axiomatic characterization aims to enhance transparency,
though systematic guidance for index selection in specific application domains
remains an important direction for future work.
\clearpage

%%%%%%%%%%%%%%%%%%%%%%%%%%%%%%%%%%%%%%%%%%%%%%%%%%%%%%%%%%%%%%%%%%%%%%%%

\section*{Acknowledgements}
% \begin{acks}
TF is supported by \thxPACSs{}
% \end{acks}

\bibliography{jd}

@book{leinster2021entropy,
  title     = {Entropy and Diversity: The Axiomatic Approach},
  isbnx      = {9781108965576},
  urlx       = {http://dx.doi.org/10.1017/9781108963558},
  doix       = {10.1017/9781108963558},
  publisher = {Cambridge University Press},
  author    = {Leinster, Tom},
  year      = {2021}
}

@article{pielou1975ecological,
  title   = {Ecological Diversity},
  author  = {Pielou, E.C.},
  journal = {John Wiley \& Sons},
  year    = {1975}
}

@article{whittaker1972evolution,
  title     = {Evolution and Measurement of Species Diversity},
  author    = {Whittaker, Robert H.},
  journal   = {Taxon},
  volume    = {21},
  number    = {2-3},
  pages     = {213--251},
  year      = {1972},
  publisher = {Wiley},
  urlx       = {http://dx.doi.org/10.2307/1218190},
  doix       = {10.2307/1218190}
}

@article{Shannon48,
  title     = {A mathematical theory of communication},
  author    = {Shannon, Claude E.},
  journal   = {The Bell System Technical Journal},
  volume    = {27},
  number    = {3},
  pages     = {379--423},
  year      = {1948},
  publisher = {Institute of Electrical and Electronics Engineers (IEEE)},
  urlx       = {http://dx.doi.org/10.1002/j.1538-7305.1948.tb01338.x},
  doix       = {10.1002/j.1538-7305.1948.tb01338.x}
}

@article{Simpson49,
  title     = {Measurement of Diversity},
  volume    = {163},
  issnx      = {1476-4687},
  urlx       = {http://dx.doi.org/10.1038/163688a0},
  doix       = {10.1038/163688a0},
  number    = {4148},
  journal   = {Nature},
  publisher = {Springer},
  author    = {Simpson, Edward H.},
  year      = {1949},
  pages     = {688--688}
}

@book{lackner2023multi,
  title     = {Multi-Winner Voting with Approval Preferences},
  isbnx      = {9783031090165},
  urlx       = {http://dx.doi.org/10.1007/978-3-031-09016-5},
  doix       = {10.1007/978-3-031-09016-5},
  series       = {Springer Briefs in Intelligent Systems},
  publisher    = {Springer},
  author    = {Lackner, Martin and Skowron, Piotr},
  year      = {2023},
}

@inproceedings{ijcai2023p297,
  title     = {Participatory Budgeting: Data, Tools and Analysis},
  author    = {Faliszewski, Piotr and Flis, Jarosław and Peters, Dominik and Pierczyński, Grzegorz and Skowron, Piotr and Stolicki, Dariusz and Szufa, Stanisław and Talmon, Nimrod},
  booktitle = {Proceedings of the 32nd International Joint Conference on
               Artificial Intelligence},
  publisher = {International Joint Conferences on Artificial Intelligence Organization},
  editorx    = {Edith Elkind},
  pages     = {2667--2674},
  year      = {2023},
  month     = {8},
  doix       = {10.24963/ijcai.2023/297},
  urlx       = {https://doi.org/10.24963/ijcai.2023/297},
  series    = {IJCAI-2023}
}

@inproceedings{celisijcaiFairness,
  title     = {Multiwinner Voting with Fairness Constraints},
  author    = {L. Elisa Celis and Lingxiao Huang and Nisheeth K. Vishnoi},
  booktitle = {Proceedings of the 27th International Joint Conference on Artificial Intelligence (IJCAI'18)},
  publisher = {International Joint Conferences on Artificial Intelligence Organization},
  pages     = {144--151},
  year      = {2018},
  xmonth    = {7},
  xdoi      = {10.24963/ijcai.2018/20},
  xurl      = {https://doi.org/10.24963/ijcai.2018/20}
}

@inproceedings{bredereck2018divconstr,
  title     = {Multiwinner elections with diversity constraints},
  author    = {Bredereck, Robert and Faliszewski, Piotr and Igarashi, Ayumi and Lackner, Martin and Skowron, Piotr},
  booktitle = {Proceedings of the 32nd AAAI Conference on Artificial Intelligence (AAAI'18)},
  publisher = {AAAI Press},
  year      = {2018},
  pages        = {933--940},
}

@article{ianovski2022electingDomConstraints,
  title     = {Electing a committee with dominance constraints},
  author    = {Ianovski, Egor},
  journal   = {Annals of Operations Research},
  volume    = {318},
  number    = {2},
  pages     = {985--1000},
  year      = {2022},
  publisher = {Springer}
}

@article{aziz2019softconstr,
  title     = {A rule for committee selection with soft diversity constraints},
  author    = {Aziz, Haris},
  journal   = {Group Decision and Negotiation},
  volume    = {28},
  number    = {6},
  pages     = {1193--1200},
  year      = {2019},
  publisher = {Springer}
}

@inproceedings{2022Switzerland,
  title     = {Diverse representation via computational participatory elections-lessons from a case study},
  author    = {Evequoz, Florian and Rochel, Johan and Keswani, Vijay and Celis, L Elisa},
  booktitle = {Proceedings of the 2nd ACM Conference on Equity and Access in Algorithms, Mechanisms, and Optimization (EAAMO'22)},
  publisher = {ACM},
  pages     = {1--11},
  year      = {2022}
}

@article{straszak1993computer,
  title     = {Computer-assisted constrained approval voting},
  author    = {Straszak, Andrzej and Libura, Marek and Sikorski, Jarostaw and Wagner, Dariusz},
  journal   = {Group Decision and Negotiation},
  volume    = {2},
  pages     = {375--385},
  year      = {1993},
  publisher = {Springer}
}

@inproceedings{gawron2022movies,
  title        = {Using multiwinner voting to search for movies},
  author       = {Gawron, Grzegorz and Faliszewski, Piotr},
  booktitle    = {Proceedings of the 19th European Conference on Multi-Agent Systems ({EUMAS}'22)},
  series       = {LNCS},
  volume       = {13442},
  pages        = {134--151},
  year         = {2022},
  organization = {Springer}
}

@inproceedings{ijcai2022dire,
  title     = {DiRe Committee : Diversity and Representation Constraints in Multiwinner Elections},
  author    = {Relia, Kunal},
  booktitle = {Proceedings of the 31st International Joint Conference on Artificial Intelligence ({IJCAI}'22)},
  publisher = {International Joint Conferences on Artificial Intelligence Organization},
  pages     = {5143--5149},
  year      = {2022},
  month     = {7}
}

@inproceedings{izsak2018synergies,
  title     = {Committee selection with intraclass and interclass synergies},
  author    = {Izsak, Rani and Talmon, Nimrod and Woeginger, Gerhard},
  booktitle = {Proceedings of the 32nd AAAI Conference on Artificial Intelligence (AAAI'18)},
  publisher = {AAAI Press},
  year      = {2018},
  pages        = {1071--1078}
}

@article{joss-abcvoting,
  title     = {abcvoting: {A} {P}ython package for approval-based multi-winner voting rules},
  author    = {Martin Lackner and Peter Regner and Benjamin Krenn},
  doix       = {10.21105/joss.04880},
  year      = {2023},
  publisher = {The Open Journal},
  volume    = {8},
  number    = {81},
  pages     = {4880},
  journal   = {Journal of Open Source Software}
}

@article{mattei2017apreflib,
  title   = {A Preflib.org Retrospective: Lessons Learned and New Directions},
  author  = {Mattei, Nicholas and Walsh, Toby},
  journal = {Trends in Computational Social Choice. AI Access Foundation},
  pages   = {289--309},
  year    = {2017}
}

@article{LaSt08,
  title   = {A live experiment on approval voting},
  author  = {Laslier, Jean-Fran{\c{c}}ois and Van der Straeten, Karine},
  journal = {Experimental Economics},
  volume  = {11},
  number  = {1},
  pages   = {97--105},
  year    = {2008},
  publisher={Cambridge University Press}
}

@article{Laslier_2004,
  title     = {Une expérience de vote par assentiment lors de l’élection présidentielle française de 2002},
  volume    = {54},
  issnx      = {1950-6686},
  urlx       = {http://dx.doi.org/10.3917/rfsp.541.0099},
  doix       = {10.3917/rfsp.541.0099},
  number    = {1},
  journal   = {Revue française de science politique},
  publisher = {CAIRN},
  author    = {Laslier, Jean-François and Van der Straeten, Karine},
  year      = {2004},
  pages     = {99}
}

@article{AzizBCEFW17,
  author       = {Haris Aziz and Markus Brill and Vincent Conitzer and Edith Elkind and Rupert Freeman and Toby Walsh},
  title        = {Justified representation in approval-based committee voting},
  journal      = {Social Choice and Welfare},
  volume       = {48},
  number       = {2},
  pages        = {461--485},
  year         = {2017},
  publisher    ={Springer},
}

@article{leinster2012measuring,
  title     = {Measuring diversity: the importance of species similarity},
  author    = {Leinster, Tom and Cobbold, Christina A.},
  journal   = {Ecology},
  volume    = {93},
  issnx      = {1939-9170},
  urlx       = {http://dx.doi.org/10.1890/10-2402.1},
  doix       = {10.1890/10-2402.1},
  number    = {3},
  pages     = {477--489},
  year      = {2012},
  publisher = {Wiley}
}

@article{IZSAK2000151,
  title    = {A link between ecological diversity indices and measures of biodiversity},
  journal  = {Ecological Modelling},
  volume   = {130},
  number   = {1},
  pages    = {151-156},
  year     = {2000},
  issnx     = {0304-3800},
  doix      = {https://doi.org/10.1016/S0304-3800(00)00203-9},
  urlx      = {https://www.sciencedirect.com/science/article/pii/S0304380000002039},
  author   = {János Izsák and László Papp},
  publisher={Elsevier}
}

@article{baczkowski1997properties,
  title     = {Properties of a generalized diversity index},
  author    = {Baczkowski, AJ and Joanes, DN and Shamia, GM},
  journal   = {Journal of Theoretical Biology},
  volume    = {188},
  number    = {2},
  pages     = {207--213},
  year      = {1997},
  publisher = {Elsevier}
}

@article{routledgeDivIdxAdmissable,
  title     = {Diversity indices: which ones are admissible?},
  author    = {Routledge, RD},
  journal   = {Journal of Theoretical Biology},
  volume    = {76},
  number    = {4},
  pages     = {503--515},
  year      = {1979},
  publisher = {Elsevier}
}

@article{izsak1996sensitivity,
  title     = {Sensitivity profiles of diversity indices},
  author    = {Izs{\'a}k, J{\'a}nos},
  journal   = {Biometrical journal},
  volume    = {38},
  number    = {8},
  pages     = {921--930},
  year      = {1996},
  publisher = {Wiley}
}

@article{jost2006entropy,
  title     = {Entropy and diversity},
  author    = {Jost, Lou},
  journal   = {Oikos},
  volume    = {113},
  number    = {2},
  pages     = {363--375},
  year      = {2006},
  publisher = {Wiley}
}

@article{procaccia2008CChard,
  author       = {Ariel D. Procaccia and
                  Jeffrey S. Rosenschein and
                  Aviv Zohar},
  title        = {On the complexity of achieving proportional representation},
  journal      = {Social Choice and Welfare},
  volume       = {30},
  number       = {3},
  pages        = {353--362},
  year         = {2008},
  xurl          = {https://doi.org/10.1007/s00355-007-0235-2},
  xdoi          = {10.1007/S00355-007-0235-2}
}

@inproceedings{aziz2015PAChard,
  author       = {Haris Aziz and Serge Gaspers and Joachim Gudmundsson and Simon Mackenzie and Nicholas Mattei and Toby Walsh},
  title        = {Computational Aspects of Multi-Winner Approval Voting},
  booktitle    = {Proceedings of the 14th International Conference on Autonomous Agents and Multiagent Systems ({AAMAS}~'15)},
  pages        = {107--115},
  publisher    = {{ACM}},
  year         = {2015},
  urlx          = {http://dl.acm.org/citation.cfm?id=2772896}
}

@inproceedings{BenabbouCHSZ18diversityhousing,
  author       = {Nawal Benabbou and Mithun Chakraborty and Xuan{-}Vinh Ho and Jakub Sliwinski and Yair Zick},
  title        = {Diversity Constraints in Public Housing Allocation},
  booktitle    = {Proceedings of the 17th International Conference on Autonomous Agents and Multiagent Systems ({AAMAS}~'18)},
  pages        = {973--981},
  publisher    = {IFAAMAS / {ACM}},
  year         = {2018},
  xurl          = {http://dl.acm.org/citation.cfm?id=3237843}
}

@article{AB21,
Author = {Ayg{\"u}n, Orhan and Bó, Inácio},
Title = {College Admission with Multidimensional Privileges: The {B}razilian Affirmative Action Case},
Journal = {American Economic Journal: Microeconomics},
Volume = {13},
Number = {3},
Year = {2021},
Pages = {1–28},
publisher = {American Economic Association}
}

@article{AZ25,
title = {Multi-rank smart reserves: A general framework for selection and matching diversity goals},
journal = {Artificial Intelligence},
volume = {339},
pages = {104274},
year = {2025},
author = {Haris Aziz and Zhaohong Sun},
publisher = {Elsevier}
}

@article{Biro2010collegequotas,
title = {The College Admissions problem with lower and common quotas},
journal = {Theoretical Computer Science},
volume = {411},
number = {34},
pages = {3136-3153},
year = {2010},
author = {Péter Biró and Tamás Fleiner and Robert W. Irving and David F. Manlove},
publisher={Elsevier}
}

@article{GassiDissMoyouwou2023,
  author    = {Diss, Mostapha and Gassi, Clinton Gubong and Moyouwou, Issofa},
  title     = {Combining Diversity and Excellence in Multiwinner Elections},
  volume    = {34},
  issn      = {1572-9907},
  number    = {4},
  journal   = {Group Decision and Negotiation},
  publisher = {Springer},
  year      = {2025},
  month     = may,
  pages     = {683--713} 
}

@misc{Our_Repo,
  title     = {diversity-indices-in-elections},
  publisher = {GitHub},
  author    = {Böhm, Paula and Bredereck, Robert and Fluschnik, Till},
  year      = {2026},
  month     = {02},
  note      = {Published on GitHub: \url{https://github.com/paubo14/diversity-indices-in-elections}}
}

%%%%%%%%%%%%%%%%%%%%%%%%%%%%%%%%%%%%%%%%%%%%%%%%%%%%%%%%%%%%%%%%%%%%%%%%
\clearpage
\appendix
\section*{Appendix}
\appendixProofText

\end{document}